    \tikzset{
every loop/.style={},
path/.style={line width=2.2pt, red},
tadpole/.style={loop, min distance=15mm, in=-40, out=40},
bdry/.style={shape=circle, draw, inner sep=2pt,fill=black!40},
bulk/.style={shape=circle, draw, inner sep=2pt,fill=black},
bubu/.style={black,ultra thick},
bubu1/.style={black},
bubo/.style={decorate,decoration=snake,red},
bobo/.style={decorate,decoration={coil,amplitude=0.1cm, aspect=1.2,segment length=0.2cm},purple},
bobo2/.style={densely dashed, decorate, decoration={snake,amplitude=0.1cm},blue},
bobo3/.style={decorate, decoration={zigzag,amplitude=0.1cm},green}
}
\newcommand{\RR}{\mathbb{R}}
\newcommand{\mr}{\mathrm}
\newcommand{\ra}{\rightarrow}
\newcommand{\CC}{\mathbb{C}}
\newcommand{\dd}{\partial}
\newcommand{\DN}{\mathrm{DN}}
\newcommand{\wh}{\widehat}
\newcommand{\til}{\widetilde}
\newcommand{\val}{\mathrm{val}}
\newcommand{\pert}{\mathrm{pert}}
\newcommand{\ZZ}{\mathbb{Z}}
\newcommand{\red}{\color{red}}
\newcommand{\green}{\color{green}}
\newcommand{\Kp}{K'}
\newcommand{\gp}{\widetilde{\gamma}}
\newcommand{\Mp}{\widetilde{M}}
\newcommand{\cyl}{\mathrm{cyl}}
\newcommand{\cylp}{\widetilde{\mathrm{cyl}}}
\newcommand{\gpp}{\widetilde{\widetilde{\gamma}}}
\newcommand{\Mpp}{\widetilde{\widetilde{M}}}
\newcommand{\cylbar}{\overline{\mathrm{cyl}}}
\newcommand{\ep}{\tilde{\epsilon}}
\newcommand{\DD}{\mathbb{D}}
\newcommand{\re}{\operatorname{Re}}
\newcommand{\ogam}{\overline{\gamma}}
\newcommand{\ugam}{\underline{\gamma}}
\newcommand{\ougam}{\overline{\underline{\gamma}}}
\theoremstyle{remark}
\newtheorem{remark}{Remark}[section]
\theoremstyle{plain}
\newtheorem{lemma}[remark]{Lemma}
\newtheorem{proposition}[remark]{Proposition}
\newtheorem{thm}[remark]{Theorem}
\newtheorem{corollary}[remark]{Corollary}
\theoremstyle{definition}
\newtheorem{example}[remark]{Example}
\newtheorem{assumption}{Assumption}
\newtheorem{conjecture}[remark]{Conjecture}
\newcommand{\EE}{\mathcal{E}}
\newcommand{\LL}{\mathbf{L}}
\newcommand{\ii}{\mathrm{i}}
\title{Gluing formulae for heat kernels}
\begin{document}
\author[P. Mnev]{Pavel Mnev}
\address{University of Notre Dame, Notre Dame, IN 46556, USA}
\email{pmnev@nd.edu}

\author[K. Wernli]{Konstantin Wernli}
\address{Centre for Quantum Mathematics, IMADA, University of Southern Denmark, Campusvej 55, 5230 Odense M, Denmark}
\email{kwernli@imada.sdu.dk}

\thanks{
This work is partly a result of the ERCSyG project, Recursive and Exact New Quantum Theory (ReNewQuantum) which received
funding from the European Research Council (ERC) under the European Union’s Horizon 2020 research and innovation programme
under grant agreement No. 81057, and supported K. W. during the completion of this work.}
\begin{abstract}
We state and prove two gluing formulae for the heat kernel of the Laplacian on a Riemannian manifold of the form $M_1 \cup_\gamma M_2$. We present several examples.
\end{abstract}
\maketitle

\setcounter{tocdepth}{3}
\tableofcontents

\section{Introduction} 
The goal of this paper is to investigate the behavior of the heat kernel of the Laplace-Beltrami operator on a Riemannian manifold $M$ decomposed along a 
codimension one submanifold $\gamma \subset M$ into two Riemannian manifolds with boundary $M_1$ and $M_2$, i.e., $M = M_1 \cup_\gamma M_2$. 
Namely, we obtain a formula (formula (\ref{gluing formula 2, intro}) below) for the heat kernel on $M$, expressed purely in terms of the heat kernels on $M_1, M_2$, and $\gamma$. Our proof requires a certain assumption (Assumption \ref{conj: bound} on p. \pageref{conj: bound}) on the Dirichlet-to-Neumann operators of the manifolds $M_1,M_2$ and relies on an intermediate gluing formula (\ref{gluing 1 intro}) that is a consequence of a gluing formula for Green's functions. 

Our original motivation comes from perturbative quantum field theory, see Appendix \ref{sec: motivation}. 
Given the ubiquitousness of the heat kernel, this formula is expected to 
have other applications. We envision that it could be useful, e.g., in the connection of the heat kernel to random walks, or lead to another proof of the Burghelea-Friedlander-Kappeler gluing formula for determinants \cite{BFK}. 

\subsection{Main Result}\label{sec: Results}
Here is the main theorem of the paper (summarizing Proposition \ref{prop: gluing heat kernel 1} and Corollary \ref{cor: gluing formula 2} in the main text). 
\begin{thm}\label{thm: intro}
    Let $M$ be a Riemannian manifold, possibly with boundary $\dd M$, split by a 
    {compact} codimension one submanifold $\gamma$ disjoint from $\dd M$ into $M_1$ and $M_2$: $M=M_1\cup_\gamma M_2$.
Then:

\begin{enumerate}[(i)]
    \item The heat kernel on $M$ can be expressed in terms of the data associated to $M_1$ and $M_2$ as follows. For $x\in M_a$, $y\in M_b$ with $a,b\in\{1,2\}$, one has
\begin{multline}\label{gluing 1 intro}
    K(x,y|t) = \delta_{ab}K^{(a)}(x,y|t) +  \\
    \int_{t_0+t_1+t_2 = t, t_i > 0}dt_0dt_1\, \int_{\gamma \times \gamma} dudv\, \partial^n_uK^{(a)}(x,u|t_0)\LL^{-1}[\DD^{-1}_{m^2}(u,v)](t_1)\partial^n_vK^{(b)}(v,y|t_2).
\end{multline}
Here the notations are:
\begin{itemize}
    \item $K$ and $K^{(1,2)}$ are the heat kernels on $M$ and $M_{1,2}$ with Dirichlet boundary conditions.
    \item $\dd^n$ is the normal derivative on the interface $\gamma$.
    \item $\DD_{m^2}(u,v)=\DN^{\gamma,M_1}_{m^2}+\DN^{\gamma,M_2}_{m^2}$ is the sum of two Dirichlet-to-Neumann operators, corresponding to extending a function $\eta\in C^\infty(\gamma)$ to a solution of the Helmholtz equation $(\Delta+m^2)\phi=0$, $\phi|_\gamma=\eta$ on $M_{1,2}$ and taking the normal derivative of $\phi$ on $\gamma$.
    \item $\LL^{-1}$ is the inverse Laplace transform of a function of $m^2$ to a function of $t$.
\end{itemize}
\item Moreover, under an extra assumption on $M_{1,2}$ (Assumption \ref{conj: bound}; in particular, it holds for 
$M$ having product metric near $\gamma$), one can directly express the heat kernel on $M$ in terms of heat kernels on $M_{1,2}$ and on the interface $\gamma$:
\begin{multline}\label{gluing formula 2, intro}
    K(x,y|t) = \delta_{ab} K^{(a)}(x,y|t) + \\
    +\sum_{k\geq 0}(-1)^k\int_{\sum_{i=0}^{2k+2} t_i = t, t_i> 0}\prod_{i=0}^{2k+1}dt_i\int_{\gamma^{2k+2}}\prod_{i=0}^{2k+1} du_i\,
    \partial^n_{u_0}K^{(a)}(x,u_0|t_0) \cdot \\ 
    \cdot \prod_{i=0}^{k-1}\frac{1}{\sqrt{4\pi t_{2i+1}}}K^{\gamma}(u_{2i},u_{2i+1}|t_{2i+1})\Kp(u_{2i+1},u_{2i+2}|t_{2i+2})\cdot \\
    \frac{1}{\sqrt{4\pi t_{2k+1}}}\cdot K^{\gamma}(u_{2k},u_{2k+1}|t_{2k+1})\partial^n_{u_{2k+1}}K^{(b)}(u_{2k+1},y|t_{2k+2}).
\end{multline} 
Here we use the notation
\begin{equation*}
  \Kp(u,v|t) =  \partial^n_u\partial^n_v \left(K^{(1)}(u,v|t) + K^{(2)}(u,v|t) \right)- \frac{1}{\sqrt{\pi} t^{3/2}}K^{\gamma }(u,v|t).
\end{equation*}
\end{enumerate}
\end{thm}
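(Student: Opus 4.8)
My plan is to deduce (i) from the classical gluing formula for resolvents and (ii) by inverting $\DD_{m^2}$ via a Neumann series. For (i), I would first establish the gluing formula for the resolvent $G_{m^2}=(\Delta+m^2)^{-1}$ on $M$ in terms of the Dirichlet resolvents $G^{(a)}_{m^2}$ on $M_a$. Solving $(\Delta+m^2)\phi=f$ and writing $\phi|_{M_a}=G^{(a)}_{m^2}(f|_{M_a})+P^{(a)}_{m^2}(\phi|_\gamma)$, where $P^{(a)}_{m^2}$ is the Poisson (Helmholtz-extension) operator of $M_a$, the absence of a source on $\gamma$ forces the two one-sided normal derivatives at $\gamma$ to cancel; substituting the displayed expression turns this matching condition into $\DD_{m^2}(\phi|_\gamma)=-\partial^n(G^{(1)}_{m^2}f)-\partial^n(G^{(2)}_{m^2}f)$. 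Solving for $\phi|_\gamma$, using $P^{(a)}_{m^2}(x,v)=-\partial^n_vG^{(a)}_{m^2}(x,v)$, and reading off Schwartz kernels yields
\begin{equation*}
G_{m^2}(x,y)=\delta_{ab}G^{(a)}_{m^2}(x,y)+\int_{\gamma\times\gamma}\partial^n_uG^{(a)}_{m^2}(x,u)\,[\DD_{m^2}^{-1}](u,v)\,\partial^n_vG^{(b)}_{m^2}(v,y)\,du\,dv.
\end{equation*}
Applying $\LL^{-1}$ (in $m^2$) to both sides, using $G_{m^2}(x,y)=\LL[K(x,y|\,\cdot\,)](m^2)$, the facts that $\LL^{-1}$ commutes with $\partial^n$ and with $\int_{\gamma\times\gamma}$, and the convolution theorem $\LL^{-1}[F_0F_1F_2](t)=\int_{t_0+t_1+t_2=t,\,t_i>0}\LL^{-1}[F_0](t_0)\LL^{-1}[F_1](t_1)\LL^{-1}[F_2](t_2)\,dt_0\,dt_1$, gives \eqref{gluing 1 intro}. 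What needs care here is the legitimacy of these manipulations — convergence of the Bromwich integrals and Fubini — which uses the analyticity and decay in $m^2$ of $G^{(a)}_{m^2}$ and of $\DD_{m^2}^{-1}$.

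\textbf{Part (ii).} It remains to compute $\LL^{-1}[\DD_{m^2}^{-1}]$. First I would split off the principal part: $\DD_{m^2}=2\sqrt{\Lambda}+C_{m^2}$ with $\Lambda=\Delta_\gamma+m^2$, where $C_{m^2}$ is a pseudodifferential operator on $\gamma$ of order $\le-1$. This is explicit when $M$ is a product near $\gamma$ (then $\DN^{\gamma,M_a}_{m^2}=\sqrt{\Lambda}$ on the collar and $C_{m^2}=0$); in general it follows from the symbol calculus of the Dirichlet-to-Neumann operator, the gain of one order coming from the cancellation of the order-$0$ symbols of $\DN^{\gamma,M_1}_{m^2}$ and $\DN^{\gamma,M_2}_{m^2}$ (governed by the second fundamental form of $\gamma$, hence opposite for the two sides) in their sum. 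Using the identity $\Lambda^{-s}=\Gamma(s)^{-1}\int_0^\infty t^{s-1}e^{-t\Lambda}\,dt$, taken for $s=\tfrac12$ and, via analytic continuation in $s$ (i.e.\ the standard regularization of the integral at $t=0$), for $s=-\tfrac12$, one recognizes $\tfrac12\Lambda^{-1/2}$ and $2\sqrt{\Lambda}$ as $\LL$ of $\tfrac1{\sqrt{4\pi t}}K^\gamma(\,\cdot\,|t)$ and of $\tfrac1{\sqrt\pi\,t^{3/2}}K^\gamma(\,\cdot\,|t)$ respectively. Since $\LL^{-1}[\DD_{m^2}](t)=\partial^n_u\partial^n_v\big(K^{(1)}+K^{(2)}\big)(\,\cdot\,|t)$, this gives $\LL^{-1}[C_{m^2}](t)=\Kp(\,\cdot\,|t)$; the subtraction defining $\Kp$ is precisely the removal of the non-integrable $t^{-3/2}$ contribution of the principal part, consistent with the short-time/diagonal asymptotics $\partial^n_u\partial^n_v(K^{(1)}+K^{(2)})(u,v|t)\sim\tfrac1{\sqrt\pi\,t^{3/2}}K^\gamma(u,v|t)$.

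Next I would expand $\DD_{m^2}^{-1}=\sum_{k\ge0}(-1)^k\big(\tfrac12\Lambda^{-1/2}C_{m^2}\big)^k\tfrac12\Lambda^{-1/2}$ and apply $\LL^{-1}$ term by term. The $k$-th summand is a product of $2k+1$ operators; its inverse transform is the convolution, over $\{\sum_{i=1}^{2k+1}\tau_i=\tau,\ \tau_i>0\}$, of $k+1$ copies of $\tfrac1{\sqrt{4\pi\tau}}K^\gamma$ interleaved with $k$ copies of $\Kp$, the operator compositions becoming integrals over $k$ internal copies of $\gamma$ and the overall sign $(-1)^k$. Substituting this series for $\LL^{-1}[\DD_{m^2}^{-1}(u,v)](t_1)$ into \eqref{gluing 1 intro} and merging the $t_1$-convolution there with the internal $(2k+1)$-fold convolution produces, for the $k$-th term, the $(2k+3)$-dimensional time simplex $\{\sum_{i=0}^{2k+2}t_i=t\}$ and the integral over $\gamma^{2k+2}$ of \eqref{gluing formula 2, intro}.

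The main obstacle is the term-by-term inverse Laplace transform of the Neumann series: one must show that the series for $\DD_{m^2}^{-1}$ converges in a topology, and uniformly enough in $m^2$ along the Bromwich contour, to justify interchanging $\LL^{-1}$ with $\sum_k$, and one must make sense of the resulting convolution integrals despite the $t\to0^+$ singularities of $\Kp$ and of $\partial^nK^{(a)}(x,u|t)$. Controlling both is exactly the role of Assumption~\ref{conj: bound}, which supplies operator-norm bounds for the Dirichlet-to-Neumann operators; when $\gamma$ is compact and $M$ is a product near $\gamma$ one has $C_{m^2}=0$, so the series terminates after the $k=0$ term and \eqref{gluing formula 2, intro} reduces to a single convolution. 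I expect the bookkeeping of signs and of the regularizations implicit in the square-root identities above — which is what accounts for the alternating signs $(-1)^k$ — to be the most delicate, though ultimately routine, part of the write-up.
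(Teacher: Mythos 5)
Your Part (i) follows the paper's own route exactly: establish the Green's function gluing formula (the paper cites \cite[Section 4.2]{KMW} for it, while you re-derive it), then apply $\LL^{-1}$ and the convolution theorem, together with Proposition \ref{prop: kernel DN inverse} identifying $\DD_{m^2}^{-1}$ with the Green's function restricted to $\gamma\times\gamma$. For Part (ii), the overall strategy — split $\DD_{m^2} = 2\sqrt{\Delta^\gamma+m^2}+\DD'$, expand $\DD_{m^2}^{-1}$ in a Neumann series controlled by Assumption \ref{conj: bound}, and apply $\LL^{-1}$ term by term — also coincides with Section \ref{sec: gluing formula} of the paper.

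However, two assertions in your treatment of Part (ii) are off. First, the claim that when $M$ is a product near $\gamma$ one has $C_{m^2}=0$, ``so the series terminates after the $k=0$ term,'' is incorrect. The identity $\DN^{\gamma,M_a}_{m^2}=\sqrt{\Delta^\gamma+m^2}$ holds only when $M_a$ is the \emph{semi-infinite} cylinder $\gamma\times[0,\infty)$. For a finite cylinder (Example \ref{example: cylinder}) the eigenvalues are $\mu_k\coth(L\mu_k)$, so $\DD'\neq 0$; and a manifold merely cylindrical near $\gamma$ still has $\DD'\neq 0$ because the DN operator sees the bulk geometry beyond the collar (albeit through a smoothing remainder). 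What product metric near $\gamma$ buys is Lemma \ref{lemma: Assump for product metric} — namely that Assumption \ref{conj: bound} holds, so the series converges — not that the series degenerates. Second, your identification of $2\sqrt{\Delta^\gamma+m^2}$ as $\LL$ of $\tfrac{1}{\sqrt{\pi}t^{3/2}}K^\gamma$ via analytic continuation of $\Gamma(s)$ at $s=-\tfrac12$ in fact yields $\Gamma(-\tfrac12)=-2\sqrt{\pi}$ and hence $-2\sqrt{\Delta^\gamma+m^2}$; the sign discrepancy signals that this regularization does not match the ``regularized kernel'' sense in which $\DN_{m^2}(u,v)=\partial^n_u\partial^n_vG_{m^2}(u,v)$ is understood (cf.\ \cite[Remark 3.4]{KMW}). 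The paper sidesteps this by never inverse-Laplace-transforming $A$ alone: it writes $A=2\,\DN^{\gamma,\gamma\times[0,\infty)}_{m^2}$, so $\LL^{-1}[\DD']=\partial^n_u\partial^n_v\big(K^{(1)}+K^{(2)}-2K^{\gamma\times\RR_{\geq 0}}\big)$ is a difference of genuine heat-kernel expressions whose $t^{-3/2}$ singularities cancel, giving $\Kp$ with the correct sign. Finally, your assertion that $C_{m^2}$ has order $\le -1$ by cancellation of the second-fundamental-form contributions from the two sides is plausible but not proven here, and is stronger than what the argument needs: Assumption \ref{conj: bound} only requires the boundedness and a norm estimate for each $(\DN^{\gamma,M_a}_{m^2})'$, not a gain in pseudodifferential order for the sum.
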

\subsection{Plan of the paper}
Overall we have made an attempt at keeping the main body of the paper concise. Technical proofs and digressions have been moved to appendices. 

In Section \ref{sec: preliminaries} we collect preliminaries on the Laplace transform and its application to the relation between the heat kernel and the Green's function. We also recall some facts about the Dirichlet-to-Neumann operator.  

In Section \ref{sec: conj} we formulate a technical assumption about the Dirichlet-to-Neumann operator that is required for the proof of our main theorem. We expect this assumption to be satisfied on any Riemannian manifold with boundary (Conjecture \ref{conjecture}). 

Section \ref{sec: gluing formula} is the main part of the paper and contains our main results, two gluing formulae for the heat kernel on a Riemannian manifold $M$ with a decomposition $M = M_1 \cup_\gamma M_2$. The first one is Proposition \ref{prop: gluing heat kernel 1}, the second one is Theorem \ref{thm: gluing formula} and Corollary \ref{cor: gluing formula 2}. 

As an aside, in Subsection \ref{ss: cutting formulae} we present a related pair of ``cutting-out'' formulae for heat kernels, allowing one to recover the heat kernel on a closed domain in $M$ with Dirichlet boundary condition from the heat kernel on $M$.

Section \ref{sec: examples} contains some examples of the gluing formulae. 

In Appendix \ref{sec: motivation} we explain our original motivation for writing this paper, the study of renormalization in 
quantum field theory and how it interacts with cutting and gluing of the spacetime manifold. 

In Appendices \ref{app: Proof of Prop} and \ref{appendix: proof of Lemma} contain the proofs of technical results. 

Appendix \ref{appendix: DN examples for Conj} contains some examples of Conjecture \ref{conjecture}. 

In Appendix \ref{app: path integral} we discuss the interpretation of our gluing formula in terms of the path integral representation of the heat kernel (Feynman-Kac formula) at a heuristic level. 

Finally, Appendix \ref{sec: discrete} contains a gluing formula for the heat kernel on a graph, and its interpretation in terms of sums over paths on a graph. This is a rigorous discrete analog of the heuristic discussion in the previous appendix.

\subsection{Conventions}
In general, we work on arbitrary Riemannian manifolds, possibly with boundary. However, for most of our results we have to assume that the boundary (or at least, the gluing interface) is compact (although we present examples where the gluing interface is not compact).

Our convention is that the Laplace-Beltrami operator $\Delta^M \colon C_c^\infty(M) \to C_c^\infty(M)$ is a non-negative (densely defined)  operator on $L^2(M)$.

For a general Riemannian manifold, we work with the self-adjoint Friedrichs extension of the Laplace-Beltrami operator $\Delta^M$ and its associated heat kernel (see Section \ref{sec: arb mfld}). 
For manifolds with boundary, Dirichlet boundary conditions are assumed by default.

For $M$ a Riemannian manifold,  $\int_M dx \cdots$ denotes the integral with respect to Riemannian volume form on $M$.

Operators are denoted with superscripts according to which space they act on, e.g. $\Delta^M$ denotes the Laplace-Beltrami operator on $M$, etc. When there is no danger of confusion, the superscript will sometimes be dropped. If the operator depends on extra parameters, those will be denoted with a subscript, and again this might be dropped when there is no danger of confusion. 

\subsection*{Acknowledgements}
We thank Olga Chekeres, Ivan Contreras, Santosh Kandel, Andrey Losev, Donald R. Youmans for helpful discussions related to the gluing formula for Green's functions. We also thank the anonymous referee for helpful remarks.

\section{Preliminaries}\label{sec: preliminaries}
\subsection{Laplace transform}
The Laplace transform of a locally integrable function $f(t)$ on $[0,\infty)$
is the function $F(s)$ of the complex parameter $s$ defined by\footnote{Slightly more generally, one defines the Laplace transform of a measure $\mu$ on $[0,\infty)$ to be $$\int_{\RR_{\geq 0}}e^{-st}d\mu.$$
Below we will sometimes use the point measure at 0 and denote it $\delta(t)dt$,  its Laplace transform is the constant function $F(s) \equiv 1$. }
\begin{equation*}
   F(s) = \LL[f](s) = \int_0^\infty f(t)e^{-st}dt.
\end{equation*}
The function $F(s)$ is defined on the set $R_f =\{s \in \CC \,|\, \int_0^\infty |f(t)e^{-st}|dt < \infty\}$, which can be seen to be of the form {
$\operatorname{Re}(s) > a_f$ or $\operatorname{Re}(s) \geq a_f$},
for some $a_f \in [-\infty,\infty]$. The set $R_f$ is called the region of (absolute) convergence and $a_f$ is known as the abscissa of absolute convergence. 

Conversely, if $F(s) $ is a holomorphic function of a complex parameter $s$, defined on a set of the form $\operatorname{Re} s > a$, the inverse Laplace transform of $F$ is the function $f(t)$ defined by
\begin{equation*}
    f(t) = \LL^{-1}[F](t) = \frac{1}{2\pi \ii}\int_{\gamma - \ii\infty}^{\gamma + \ii\infty} F(s)e^{st}ds
\end{equation*}
with $\gamma >  a$. It is a classical fact that if $F(s) = \LL[f](s)$ and $\gamma > a_f$, then $\LL^{-1}[F](t) = f(t)$.
An important property of the inverse Laplace transform is that it maps product to convolution, 
\begin{equation}
    \LL^{-1}\left[\LL[f](s)\cdot \LL[g](s)\right](t) = \int_0^tf(\tau)g(t-\tau) d\tau. \label{eq: prod to conv} 
\end{equation}

\subsection{The Helmholtz operator and its square root 
{on closed manifolds}}\label{ss: helmholtz}
Let $M$ be a closed\footnote{In this subsection manifolds are required to be compact. In the rest of the paper we are not making such an assumption, unless stated otherwise. We will not assume that manifolds are connected.} Riemannian manifold and $\Delta^M \colon C^\infty(M) \to C^\infty(M)$ the non-negative Laplace operator on $C^\infty(M)$, 
 \begin{equation*}
     \Delta^M \equiv \Delta = d^* d = -*d*d.
 \end{equation*}

 It is a non-negative 
 {densely defined} operator on $L^2(M)$ with spectrum 
 \begin{equation}\label{omegas}
     0 = \omega_1 \leq \omega_2 \leq \omega_3 \leq \ldots
 \end{equation}
with $\omega_k \to \infty$ as $k \to \infty$. 

For any $m >0$, the Helmholtz operator $\Delta + m^2 \colon C^\infty(M) \to C^\infty(M)$ is a (strictly) positive, densely defined operator on $L^2(M)$. It admits a unique self-adjoint extension, and hence has a unique positive square root $\sqrt{\Delta+m^2} \colon C^\infty(M) \to C^\infty(M)$. 

For our purposes it is important to consider the Helmholtz operator where the mass is allowed to have a nonzero imaginary part. For $m \in \CC$ with $\operatorname{Re} m > 0$, the Helmholtz operator is no longer positive, but there still is a unique square root $\sqrt{\Delta + m^2}$ whose eigenvalues $\mu_k = \sqrt{\omega_k + m^2}$ have positive real part and we have 
\begin{equation}\label{mus}
    0 < \operatorname{Re} \mu_1 = \operatorname{Re} m \leq \operatorname{Re} \mu_2 \leq \ldots 
\end{equation}
with $\operatorname{Re} \mu_k \to \infty$ as $k \to \infty$.\footnote{In other words, we choose the unique square root of $\Delta + m^2$ whose real part is a positive operator.} 

The Helmholtz operator is invertible and we denote by $G_{m^2}(x,y)$ the integral kernel of its inverse, called the Green's function. 

One can also allow $M$ to have 
boundary and consider the spectrum of $\Delta$ and $\sqrt{\Delta+m^2}$ with Dirichlet boundary condition. In this case, inequalities (\ref{omegas}) should be adjusted to $0<\omega_1\leq \omega_2\leq \cdots$ and (\ref{mus}) -- to $0<\mr{Re}\,m<\mr{Re}\,\mu_1\leq \mr{Re}\,\mu_2\leq \cdots$.
{
\subsection{Dirichlet heat kernels on arbitrary manifolds}\label{sec: arb mfld}
Let $M$ be an arbitrary manifold, for now without boundary. The Laplacian $\Delta\colon C_c^\infty(M) \to C_c^\infty(M)$ generally has various self-adjoint extension, corresponding to different boundary (or fall-off) conditions. Those correspond to various heat semigroups. However, there is always a canonical extensions, called the Dirichlet or Friedrichs extension, defined as follows: For $\xi,\eta \in C_c^\infty(M)$, define the bilinear form 
$$(\xi, \eta) \mapsto \langle \xi,\eta\rangle_F = \langle \xi, \Delta \eta\rangle_{L^2} + \langle \xi,\eta\rangle_{L^2} = \langle \nabla\xi, \nabla \eta\rangle_{L^2} + \langle \xi,\eta\rangle_{L^2}.$$ 
The closure of $C_c^\infty(M) \subset L^2(M)$ with respect to $\langle\xi,\eta\rangle_F$ is the Sobolev space $H^{1,2}_0(M)$. Next, consider the subspace $H \subset H^{1,2}_0(M)$ consisting of those $\xi$ such that $\eta \mapsto \langle \eta,\xi \rangle_F$ is bounded in the $L^2$ norm, 
by the Riesz representation theorem, there is a vector $A\xi$ such that $\langle \eta,\xi\rangle_F = \langle \eta,A\xi\rangle_{L^2}$. Then one can show that $A\colon H \to L^2(M)$ is a self-adjoint operator and $A-I$ is a self-adjoint extension of the Laplacian. We will denote this extension by the same symbol $\Delta$.\footnote{This extension is the unique self-adjoint one - i.e. the Laplacian is essentially self-adjoint - if the manifold is complete, see \cite{Strichartz}.} Associated to it is a unique heat kernel $K(x,y|t)$, 
i.e. the kernel of heat semigroup $e^{t\Delta}$, see e.g. \cite{Dodziuk} or \cite{GrigoryanNotes}. This heat kernel is constructed by letting $$K(x,y|t) = \sup_{\Omega}K_{\Omega,D}(x,y|t),$$ where $\Omega$ ranges over precompact sets with smooth boundary in $M$ and $K_{\Omega,D}$ denotes the heat kernel on $\Omega$ with Dirichlet boundary conditions.\footnote{One has the following monotonicity property with respect to domains: If $\Omega\subset\Omega'\subset M$ and $x,y \in \Omega$, then $K_{\Omega,D}(x,y|t) \leq K_{\Omega',D}(x,y|t)$, see e.g. \cite[Exercise 7.40]{GrigoryanBook} .}
In much the same way, if $M$ is an arbitrary manifold with arbitrary boundary $\gamma$, one can construct a heat kernel with Dirichlet boundary conditions on $\gamma$, we will also denote this heat kernel by $K$. }

\subsection{A duality between the heat kernel and the Green's function for the Helmholtz operator} 
The proof of our gluing formula rests on the observation that the Laplace transform interchanges the heat kernel of the Laplacian, considered as function of $t\in [0,\infty)$, and the Green's function $ G_{m^2}$ of the Helmholtz operator, considered as a holomorphic function in $m^2$.  Namely, the heat kernel for the Helmholtz operator $\Delta + m^2$ is 
\begin{equation*}
    K_{m^2}(x,y|t) = e^{-m^2t}K(x,y|t). 
\end{equation*}
For $x \neq y$ fixed, the Green's function of $\Delta + m^2$ is given by
\begin{equation}
    G_{m^2}(x,y) = \int_0^\infty K_{m^2}(x,y|t) dt = \int_0^\infty e^{-m^2t}K(x,y|t) dt = \LL[K(x,y)](m^2). \label{eq:Greens and Heat}
\end{equation} 
The convergence of this integral is uniform on compact subsets of $(M \times M) \setminus \Delta$, as the heat kernel $K_{m^2}$ vanishes exponentially for $t\to 0$ and $t \to \infty$ on such subsets.
Therefore, the heat kernel is the inverse Laplace transform of the Green's function 
\begin{equation*}
    K(x,y|t) = \LL^{-1}[G_{m^2}(x,y)](t). 
\end{equation*}
This relation is a reformulation of the well-known relation between the resolvent of a self-adjoint operator and its heat kernel via a contour integral, see for instance \cite[Eq. (10.21)]{Shubin} or \cite[p.48]{Gilkey}.
 The assertions in this subsection, in contrast to those on the spectrum of $\Delta^M$ in Section \ref{ss: helmholtz}, are valid also in the case where $M$ is not compact. The simplest example is $M = \RR$, where the Laplacian (on $L^2(\RR)$) has continuous spectrum $\sigma(\Delta_\RR) = [0,\infty)$.
\begin{example}[Heat kernel on $\RR$]
    The heat kernel on $\RR$ is 
    \begin{equation*}
        K_{m^2}(x,y|t) = \frac{1}{(4\pi t)^{1/2}}e^{-m^2 t- (x-y)^2/(4t)}.
    \end{equation*}
   Integrating over $t$ we obtain\footnote{In one dimension, the singularity of the heat kernel trace at $t= 0$ is integrable, so that the integral actually also converges for $x = y$. }
    \begin{equation*}
        \LL[K_0(x,y)](m^2) = \int_0^\infty \frac{1}{(4\pi t)^{1/2}}e^{-m^2 t- (x-y)^2/(4t)} dt = \frac{e^{-m|x-y|}}{2m} = G_{m^2}(x,y).
    \end{equation*}
    Conversely, denoting $s = m^2$ we get for any $\gamma > 0$
\begin{equation*}
    \LL^{-1}[G_{m^2}(x,y)] = \frac{1}{2\pi\ii}\int_{\gamma -  \ii\infty}^{\gamma +\ii\infty}\frac{e^{ -\sqrt{s}|x-y|}}{2\sqrt{s}}e^{st} ds = \frac{1}{(4\pi t)^{1/2}}e^{-(x-y)^2/(4t)} = K(x,y|t).
\end{equation*}
Evaluating the integral over $s$ is a nontrivial exercise in using a keyhole contour.
\end{example}
\begin{example}[Euclidean space]
For general $n$, note that for $m,a > 0$ we can evaluate the integral 
\begin{equation}
\begin{aligned}
\int_{0}^\infty t^{-n/2}e^{-m^2t - a^2/t} dt &= \int_{-0}^\infty t^{-n/2+1}e^{-ma(\frac{mt}{a}+\frac{a}{mt})}\frac{dt}{t} \\&=  \left(\frac{m}{a}\right)^{n/2-1}\int_{-\infty}^\infty e^{s(-n/2 +1)}e^{-2ma\cosh s} ds \\
&=  2\left(\frac{m}{a}\right)^{n/2-1}\int_0^\infty \cosh((-n/2+1)s)\, e^{-2ma\cosh s}ds\\ &=  2\left(\frac{m}{a}\right)^{n/2-1}K_{-n/2+1}(2ma),
\end{aligned}
\end{equation}
where we have substituted $ s = \log\frac{mt}{a}$, and $K_\alpha(x) = \int_0^\infty e^{-x\cosh t}\cosh \alpha t\  dt$ denotes the modified Bessel function of the second kind.\footnote{This is the solution of $x^2y''(x) + xy'(x) - (x^2 + a^2)y(x) = 0$ that diverges as $x \to 0$ and vanishes as $x \to \infty$. } 
Applying this to the heat kernel on $\RR^n$
$$K_{m^2}(x,y|t) = \left(\frac{1}{4\pi t}\right)^{n/2} e^{-m^2 t - |x-y|^2/t}$$ 
we obtain the $n$-dimensional Green's function of the Helmholtz equation 
\begin{equation}G_{m^2}(x,y) = \frac{1}{2\pi}\left(\frac{m}{4\pi}\right)^{n/2 -1}\frac{K_{1-n/2}(2m|x-y|)}{|x-y|^{n/2 -1}} \label{eq: Green's function Helmholtz R^n}.
\end{equation}
\end{example}
\begin{example}[Upper half-space]
Consider the upper half-space $\RR^{n-1} \times \RR_{\geq 0} $. The Dirichlet heat kernel is given by 
$$ K(x,y|t) = \frac{1}{4\pi t}\left(e^{-\frac{|x-y|^2}{4t}}- e^{-\frac{|x-\bar{y}|^2}{4t}} \right), $$
where $\bar{y}$ denotes the reflection of $y$ at the hyperplane $x_n = 0$. Consequently, integrating over $t$ we obtain the Green's function of the Helmholtz equation on the upper half-space, 
$$\int_0^\infty e^{-m^2t}K(x,y|t) = G^{\RR^{n-1} \times \RR_{\geq 0}}_{m^2}(x,y) = G^{\RR^n}_{m^2}(x,y) - G^{\RR^n}_{m^2}(x,\bar{y})$$
with $G^{\RR^n}$ given by \eqref{eq: Green's function Helmholtz R^n}. 
Notice that this example extends to any doublable Riemannian manifold $M$ with boundary $\gamma$ (doublable means that the metric on $M \cup_\gamma M$ is smooth) because on such a manifold, one can construct the heat kernel and the Helmholtz Green's function via image charges. In the language of \cite[Definition 1.1]{JR}, $M \cup_\gamma M$ is a manifold with a time reflection. This structure is relevant for reflection positivity.
\end{example}
\subsection{Dirichlet-to-Neumann operator} 
Let $M$ be a Riemannian manifold with boundary $\gamma$ and $\eta \in C^\infty_c(\gamma)$. For a function $f \in C^\infty(M)$, denote $\partial^n f \in C^\infty(\gamma)$ 
the outward normal derivative on the boundary of $f$. Denote by $\phi_\eta$ the solution to the Dirichlet problem for the Helmholtz operator, 
\begin{equation*}
\begin{cases}
    (\Delta + m^2)\phi_\eta = 0,  &  \\
    \phi_\eta|_{\partial M} = \eta &
\end{cases}
\end{equation*}
{
specified by 
\begin{equation}
    \phi_\eta = \int_{\gamma} \partial^n_yG_{m^2}(x,y)\eta(y) dy.
\end{equation}
}
The Dirichlet-to-Neumann operator is defined by 
\begin{equation*}
 \begin{array}{cccc}
    \DN^{\gamma,M}_{m^2} \colon & C_c^\infty(\gamma) & \to & C^\infty(\gamma)\\ &\eta &\mapsto &\partial^n\phi_\eta
 \end{array}
\end{equation*} 
This operator plays an important role in Riemannian geometry and its applications, and has been studied extensively in the literature in different directions, for instance in inverse problems after Calderon \cite{Calderon}, via its appearance in the Burghela-Friedlander-Kappeler gluing formula \cite{BFK}, or in scattering problems (as in \cite{HSW} and references therein). While there is plenty of literature treating the case of compact $\gamma$, surprisingly little works study this operator for non-compact boundary - see e.g. \cite{Lassas}, which contains a different (but most likely equivalent) definition from ours. In this paper, in order to prove our results we will assume that $\gamma$ is compact, although we present examples in which this assumption is not necessary. 
We now briefly summarize some of its known properties in the compact case. 

For $\gamma$ compact, one can show that it is an elliptic pseudo-differential operator of order 1. {
For  $\operatorname{Re} m^2 > 0$ (or more generally, for $-m^2$ not contained in the spectrum of $\Delta$), it is invertible.}\footnote{See \cite[Appendix C.1]{Taylor}. For $m =0$, $\DN^{\gamma,M}_0$ is instead an index 0 Fredholm operator with kernel given by constant functions and image given by functions of vanishing integral.} 
 For real $m$, $\DN^{\gamma,M}_{m^2}$ considered as an unbounded operator on $L^2(\gamma)$ is positive definite and therefore symmetric, and in fact self-adjoint \cite[Theorem 2]{Hoermander}.  For $m > 0$ and $\gamma$ compact, the spectrum of $\DN^{\gamma,M}_{m^2}$ is a sequence of positive real numbers 
$$0 < \lambda_1 \leq \lambda_2 \leq \ldots \leq \lambda_N \leq \ldots $$ 
with $\lambda_N \to \infty$ as $N \to \infty$. This spectrum is also known as the \emph{Steklov spectrum}, see for instance the historical survey \cite{Kuz14} or the review article \cite{GP}. 
Its inverse 
\begin{equation*}
    (\DN^{\gamma,M}_{m^2})^{-1} \colon L^2(\gamma) \to L^2 (\gamma)
\end{equation*}
is a bounded elliptic pseudodifferential operator of order -1, which is also positive and self-adjoint. Again, it will be important for us to consider the case of complex $m^2$ with $\operatorname{Re}m^2>0$. In this case, $\DN^{\gamma,M}_{m^2}$ remains an invertible pseudodifferential operator of order 1, which depends holomorphically on $m^2$ \cite{Carron}. 
\subsubsection{Integral kernel}
Let $M$ be a Riemannian manifold with boundary $\gamma$ and $G^{M,\gamma}_{m^2}$  the Green's function of $\Delta^M + m^2$ with Dirichlet boundary conditions. Then the harmonic extension $\phi_\eta$ of $\eta \in C^\infty(\gamma)$ is given by 
\begin{equation*}
    \phi_\eta(x) = \int_{y \in \gamma} \partial^n_y G_{m^2}(x,y)\, \eta(y) dy.
\end{equation*}
Therefore, the Dirichlet-to-Neumann operator can be written as 
\begin{equation}\label{eq: int kernel DN}
    (\DN^{\gamma,M}_{m^2}\eta)(x) = \int_{y \in \gamma} \partial^n_x\partial^n_yG^{M,\gamma}_{m^2}(x,y)\,\eta(y) dy.
\end{equation}
Thus, the integral kernel of the Dirichlet-to-Neumann operator is 
\begin{equation*} \DN^{\gamma,M}_{m^2}(x,y) = \partial^n_x\partial^n_yG^{M,\gamma}_{m^2}(x,y)
\end{equation*}
understood in a regularized sense as in \cite[Remark 3.4]{KMW}.
\subsubsection{Total Dirichlet-to-Neumann operator}
 Given a Riemannian manifold with a decomposition $M = M_1 \cup_\gamma M_2$, we define the \emph{total} or \emph{interface} Dirichlet-to-Neumann operator 
\begin{equation*}
    \DD^{\gamma,M}_{m^2}  = \DN^{\gamma,M_1}_{m^2} + \DN^{\gamma,M_2}_{m^2}.
\end{equation*}
Then we have the following elementary result on the integral kernel of its inverse: 
\begin{proposition}\label{prop: kernel DN inverse}
    If $\gamma$ is compact, the integral kernel of the inverse of the total Dirichlet-to-Neumann operator coincides with the restriction of the Green's function on $M$ to $\gamma$, that is, for $\eta \in C^\infty(\gamma)$ we have 
    \begin{equation*}
         ((\DD^{\gamma,M}_{m^2})^{-1}\eta)(x) = \int_{y\in \gamma}G^M_{m^2}(x,y)\eta(y)dy.
    \end{equation*}
\end{proposition}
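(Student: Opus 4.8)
The plan is to verify the asserted identity in the equivalent form $\DD^{\gamma,M}_{m^2}\big(\phi|_\gamma\big)=\eta$, where, for $\eta\in C^\infty(\gamma)$, I introduce the single-layer potential
\begin{equation*}
  \phi(x)\;=\;\int_{y\in\gamma}G^M_{m^2}(x,y)\,\eta(y)\,dy,\qquad x\in M.
\end{equation*}
First I would record the basic properties of $\phi$. Since $G^M_{m^2}$ is the Dirichlet Green's function of $\Delta^M+m^2$, one has $(\Delta+m^2)\phi=\eta\,\delta_\gamma$ as distributions on $M$ (with $\delta_\gamma$ the delta distribution supported on $\gamma$) and $\phi|_{\dd M}=0$; in particular $(\Delta+m^2)\phi=0$ on the interior of each $M_a$. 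Since the kernel $G^M_{m^2}(x,y)$ has only an integrable singularity along the diagonal, $\phi$ is continuous on $M$, so its two restrictions $\phi|_{M_a}$ extend continuously to $\gamma$ and agree there; write $\psi:=\phi|_\gamma$. By elliptic boundary regularity (using that $\gamma$ is smooth), $\psi\in C^\infty(\gamma)$ and each $\phi|_{M_a}$ is smooth up to $\gamma$.

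Next I would identify $\phi|_{M_a}$ with a harmonic extension: it solves $(\Delta+m^2)(\phi|_{M_a})=0$ with boundary value $\psi$ on $\gamma$ and $0$ on $\dd M\cap M_a$, so by uniqueness of the Dirichlet problem $\phi|_{M_a}=\phi_\psi$ for $M_a$, whence $\partial^n(\phi|_{M_a})=\DN^{\gamma,M_a}_{m^2}\psi$, where $\partial^n$ is the outward normal derivative of $M_a$ along $\gamma$. The heart of the proof is then the distributional jump relation across $\gamma$. Let $n$ be the unit normal of $\gamma$ pointing out of $M_1$ (hence into $M_2$). For a function continuous across $\gamma$ and smooth on each side one has
\begin{equation*}
  (\Delta+m^2)\phi \;=\; \big\{(\Delta+m^2)\phi\big\}_{\mathrm{cl}} \;-\; [\partial_n\phi]\,\delta_\gamma,
\end{equation*}
where $\{\cdot\}_{\mathrm{cl}}$ is the pointwise Laplacian on each side and $[\partial_n\phi]$ is the jump of the $n$-derivative (value from the $M_2$ side minus value from the $M_1$ side). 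Because $n$ is the outward normal for $M_1$ but the inward normal for $M_2$, one finds $[\partial_n\phi]=-\DN^{\gamma,M_2}_{m^2}\psi-\DN^{\gamma,M_1}_{m^2}\psi=-\DD^{\gamma,M}_{m^2}\psi$, while the classical term vanishes; hence $(\Delta+m^2)\phi=(\DD^{\gamma,M}_{m^2}\psi)\,\delta_\gamma$. Comparing with $(\Delta+m^2)\phi=\eta\,\delta_\gamma$ yields $\DD^{\gamma,M}_{m^2}\psi=\eta$, i.e. $\psi=(\DD^{\gamma,M}_{m^2})^{-1}\eta$ (invertibility of $\DD^{\gamma,M}_{m^2}$ for $\re m^2>0$ having been recalled above), which is the claim.

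The main obstacle is bookkeeping rather than conceptual: one must track the two opposite outward normals $n_1=-n_2$ along $\gamma$ carefully when rewriting $[\partial_n\phi]$ in terms of the two Dirichlet-to-Neumann operators, and one must justify the jump relation itself — i.e. verify that $\phi$ is smooth up to $\gamma$ from each side, so that $\partial^n(\phi|_{M_a})$ and $[\partial_n\phi]$ are classical quantities. Both amount to short local computations near $\gamma$ together with standard elliptic regularity, and could be relegated to a remark. (One could alternatively avoid the jump relation by pairing $(\Delta+m^2)\phi$ against the harmonic extensions $\phi_\eta$ on the two sides and applying Green's identity on each $M_a$, but the single-layer argument above is the most transparent.)
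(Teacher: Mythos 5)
Your proof is correct and takes essentially the same route as the paper's (Appendix \ref{app: Proof of Prop}): both identify $(\Delta+m^2)$ of the single-layer potential $\int_\gamma G^M_{m^2}(\cdot,y)\eta(y)\,dy$ with $\eta\,\delta_\gamma$ from the defining property of the Green's function, then recompute that distribution by splitting $M$ along $\gamma$ and applying Green's identity on each piece (using harmonicity of the potential on $M_1^\circ$ and $M_2^\circ$ and the cancellation $\partial^n_{M_1}+\partial^n_{M_2}=0$ for the smooth test function). The only difference is presentational: the paper carries out the Green's-identity bookkeeping explicitly against a test function, while you package the same computation as the distributional jump relation and make the identification $\phi|_{M_a}=\phi_\psi$ explicit — a cleaner organization of the identical argument.
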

This is a well-known fact (see for instance \cite[Proof of Theorem 2.1]{Carron}) but for completeness we give a proof in Appendix \ref{app: Proof of Prop}. 
This fact is also used
in the context of 
the Gaussian Free Field in constructive field theory, see e.g. \cite{Pickrell}, \cite{Lin} and \cite{GKRV}.  
\section{A conjecture
on the bulk dependence of the Dirichlet-to-Neumann operator}\label{sec: conj}
For a single Riemannian manifold $M$ with boundary $\gamma$, split the  Dirichlet-to-Neumann operator as 
\begin{equation*}
\DN^{\gamma,M}_{m^2} = \sqrt{\Delta^\gamma + m^2} + (\DN^{\gamma,M}_{m^2})'.
\end{equation*}
 If $\gamma$ is compact,  $(\DN^{\gamma,M}_{m^2})' $ is a pseudo-differential operator of order at most 0, see \cite{Taylor} or \cite[Lemma 2.4]{K}.  It follows that $(\DN^{\gamma,M}_{m^2})'$ is bounded by the Calderon-Vaillancourt theorem \cite{CV}. 
 In dimension 2, one actually has that $(\DN^{\gamma,M}_{m^2})'$ is a pseudo-differential operator of order at most -2 \cite[Proposition A.3]{KMW}. In the case $m=0$, conformal invariance implies that the remainder $(\DN^{\gamma,M}_{m^2})'$ is actually a smoothing operator. See \cite{Langlands}, \cite{GKRV} or \cite[Remark A.5]{KMW}.

\begin{remark}\label{rem:SIcyl} The term $\sqrt{\Delta^\gamma+m^2}$ is the DN operator for the semi-infinite cylinder $\gamma\times [0,+\infty)$. To see this, consider an eigenfunction 
$\eta$ of $\Delta^\gamma+m^2$ with eigenvalue $\lambda^2, \lambda >0$. Then $e^{-\lambda t}\eta$ is harmonic on $\gamma \times [0,\infty)$ and $$\DN_{m^2}^{\gamma,\gamma\times [0,\infty)}\eta = -\partial_t(e^{-t\lambda})\eta\big|_{t=0} = \lambda\eta = \sqrt{\Delta^\gamma+m^2}\;\eta.$$ 
It is easy to see all harmonic functions on the cylinder are of this kind and hence $\DN^{\gamma,\gamma \times [0,\infty)}_{m^2} = \sqrt{\Delta^\gamma+m^2}$. 
\end{remark} 
This means that to leading order, the Dirichlet-to-Neumann operator is only sensitive to the boundary geometry (it is given by the square root of the Helmholtz operator) but subleading orders depend on the bulk geometry.\footnote{In fact changing the metric away from the boundary changes the Dirichlet-to-Neumann operator by a smoothing operator.} The precise nature of this dependence has been the subject of much recent research. To prove our second gluing formula however we require a strong control over this dependence that we were unable to locate in the literature, namely the following statement: 
\begin{assumption}\label{conj: bound}
Let $M$  be a Riemannian manifold with boundary $\gamma$. 
We assume that:
\begin{enumerate}[(i)]
\item The operator $(\DN^{\gamma,M}_{m^2})'\colon L^2(\gamma) \to L^2(\gamma)$ is bounded.\footnote{If $\gamma$ is compact, this property is automatic, see above.}
\item 
For any $ \delta > 0 $ there is $C \gg 1$ such that  for $\operatorname{Re} m^2  > C$ we have 
\begin{equation}
    ||(\DN^{\gamma,M}_{m^2})'|| <\delta \cdot |m|. \label{eq: DN prime bound}
\end{equation}
\end{enumerate}
\end{assumption} 

I.e. for very large $m^2$ the operators $\sqrt{\Delta^\gamma + m^2}$ and $\DN^{\gamma,M}_{m^2}$ can be made arbitrarily close relative to $m$. 
If $M = M_1 \cup_\gamma M_2$ the interface Dirichlet-to-Neumann operator 
\begin{equation}\label{DN total}
    \DD^{\gamma,M}_{m^2}  = \DN^{\gamma,M_1}_{m^2} + \DN^{\gamma,M_2}_{m^2}
\end{equation}
 has the splitting 
\begin{equation*}
    \DD^{\gamma,M}_{m^2} = \underbrace{2\sqrt{\Delta^\gamma + m^2}}_{A} + \underbrace{(\DN^{\gamma,M_1}_{m^2})' + (\DN^{\gamma,M_2}_{m^2})'}_{\DD'} = A + \DD'. 
\end{equation*}
Assumption \ref{conj: bound} then implies that $\DD'$ is bounded and $||\DD'|| < 2|m|$ for very large $m^2$. 


One class of manifolds $M$ for which the Assumption holds is cylinders. For more explicit examples, see Appendix \ref{appendix: DN examples for Conj}.
\begin{example}[Cylinder]\label{example: cylinder}
    Let $\gamma$ be a closed 
    Riemannian manifold and $M = [0,L] \times \gamma$ with the product metric, and let $0 = \omega_0 \leq \omega_1 \leq \omega_2 \leq $ 
    denote the 
    eigenvalues of the Laplace operator $\Delta^\gamma$, with $(\eta_k)_{k=0}^\infty$ the corresponding eigenbasis of $L^2(\gamma)$. Let us for simplicity consider the $0-0$ block of the Dirichlet-to-Neumann operator. The unique function $\phi_k(t,x)$  
    on $I \times \gamma$ satisfying 
    $$(\Delta^{I\times \gamma}+m^2)\phi_k(t,x) = \left(-\frac{\partial^2}{\partial t^2 } + \Delta^\gamma + m^2\right)\phi_k(t,x) = 0$$ with $\phi_k(0,x) = \eta_k(x)$ and $\phi_k(L,x) \equiv 0$ is 
    $$ \phi_k(t,x) = \frac{\sinh (L-t)\mu_k}{\sinh L\mu_k}\eta_k(x),$$
    where $\mu_k = \sqrt{m^2 + \omega_k}$. Taking derivative with respect to $t$ at 0 we get 
    \begin{equation*}
        \DN^{\gamma,I\times\gamma}_{m^2}\eta_k = (\mu_k\coth L\mu_k) \eta_k.
    \end{equation*}
    The eigenvalues of the Dirichlet-to-Neumann operator are therefore given by 
    $\lambda_k = \mu_k\coth(L\mu_k) = \sqrt{m^2 + \omega_k}\coth(L\sqrt{m^2 + \omega_k})$. 
    In particular, both the Dirichlet-to-Neumann operator and $\sqrt{\Delta^\gamma+m^2}$ are diagonal in the basis $\eta_k$ and the difference of the eigenvalues is 
    \begin{equation*}
        \lambda_k - \mu_k = \mu_k(\coth L\mu_k - 1) \underset{\mu_k \to \infty}{\simeq} 2\mu_k e^{-2L\mu_k} .
    \end{equation*}
    As $k\to \infty$ the eigenvalues of this operator decay exponentially, since $\mu_k \to \infty$ as $k \to \infty$. Therefore this operator is bounded. As $m \to \infty$ all eigenvalues are exponentially suppressed and Assumption \ref{conj: bound} holds. 
\end{example}
In fact, Assumption \ref{conj: bound} holds whenever the manifold is cylindrical close to the boundary. 
\begin{lemma}\label{lemma: Assump for product metric}
    If $M$ is a Riemannian manifold with compact boundary $\gamma$ such that the metric is the product metric near $\gamma$ (i.e. there is a neighborhood of the boundary which isometric to $\gamma\times [0,\epsilon]$), then Assumption \ref{conj: bound} holds.
\end{lemma}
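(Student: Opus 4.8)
The plan is to reduce the statement to Example~\ref{example: cylinder} by exploiting the fact, noted in the footnote after the Remark, that changing the metric away from the boundary changes the Dirichlet-to-Neumann operator only by a smoothing operator. Concretely, suppose $M$ has a collar neighborhood $U\cong \gamma\times[0,\epsilon]$ of $\gamma$ with product metric. I would like to compare $\DN^{\gamma,M}_{m^2}$ with $\DN^{\gamma,\gamma\times[0,\epsilon]}_{m^2}$, the latter of which is controlled by Example~\ref{example: cylinder} (there $L=\epsilon$), for which we already know Assumption~\ref{conj: bound} holds: its primed part has eigenvalues $\mu_k(\coth(\epsilon\mu_k)-1)\simeq 2\mu_k e^{-2\epsilon\mu_k}$, which go to $0$ as $m\to\infty$ uniformly in $k$, hence $\|(\DN^{\gamma,\gamma\times[0,\epsilon]}_{m^2})'\|\to 0$, so in particular it is $o(|m|)$.

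The first step is to set up the comparison precisely. Given $\eta\in C^\infty(\gamma)$, let $\phi$ solve $(\Delta^M+m^2)\phi=0$ on $M$ with $\phi|_\gamma=\eta$ and Dirichlet conditions on $\dd M$ (if nonempty), and let $\psi$ solve the same Helmholtz equation on the cylinder $C_\epsilon=\gamma\times[0,\epsilon]$ with $\psi|_{\gamma\times\{0\}}=\eta$, $\psi|_{\gamma\times\{\epsilon\}}=0$. On the overlap collar $U$ both functions satisfy the same (product-metric) Helmholtz equation, so $w:=\phi-\psi$ satisfies $(\Delta+m^2)w=0$ on $\gamma\times[0,\epsilon]$ with $w|_{\gamma\times\{0\}}=0$, and Neumann data $\dd^n w|_{\gamma\times\{0\}}=\DN^{\gamma,\gamma\times[0,\epsilon]}_{m^2}\eta-\DN^{\gamma,M}_{m^2}\eta$ up to sign, its value $w|_{\gamma\times\{\epsilon\}}=\phi|_{\gamma\times\{\epsilon\}}-0$ being some interior trace of $\phi$. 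So the difference of the two DN operators is the map sending $\eta$ to the Neumann data at $\{0\}$ of the unique Helmholtz solution on the cylinder vanishing at $\{0\}$ with prescribed Dirichlet data $\phi|_{\{\epsilon\}}$ at the far end; using the explicit cylinder kernels (sines/sinhs as in Example~\ref{example: cylinder}), this is a composition of: the restriction-to-$\{\epsilon\}$ map $\eta\mapsto\phi|_{\{\epsilon\}}$ (Poisson operator of $M$ from $\gamma$ followed by interior restriction — a smoothing operator), with an explicit cylinder operator whose eigenvalues are $\mu_k/\sinh(\epsilon\mu_k)$, which decay exponentially in $\mu_k$ and in $|m|$.

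The second step is to turn this into the bound~\eqref{eq: DN prime bound}. Write $\DN^{\gamma,M}_{m^2}=\sqrt{\Delta^\gamma+m^2}+(\DN^{\gamma,M}_{m^2})'$ and $\DN^{\gamma,\gamma\times[0,\epsilon]}_{m^2}=\sqrt{\Delta^\gamma+m^2}+(\DN^{\gamma,\gamma\times[0,\epsilon]}_{m^2})'$; subtracting, $(\DN^{\gamma,M}_{m^2})'=(\DN^{\gamma,\gamma\times[0,\epsilon]}_{m^2})'-R_{m^2}$ where $R_{m^2}$ is the smoothing remainder identified above. By Example~\ref{example: cylinder}, $\|(\DN^{\gamma,\gamma\times[0,\epsilon]}_{m^2})'\|\to 0$ as $\operatorname{Re}m^2\to\infty$, so it remains to show $\|R_{m^2}\|=o(|m|)$, indeed $\|R_{m^2}\|\to 0$. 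Since $\gamma$ is compact, boundedness (item (i)) is automatic. For the norm bound I would estimate $R_{m^2}$ as the composition described: its cylinder factor has operator norm $\sup_k \mu_k/\sinh(\epsilon\mu_k)\le \mu_0/\sinh(\epsilon\mu_0)\to 0$ (using $\operatorname{Re}\mu_k\ge\operatorname{Re}\mu_0=\operatorname{Re}m\to\infty$ and $|\sinh z|$ growing like $e^{\operatorname{Re}z}/2$); the interior-restriction factor $\eta\mapsto\phi|_{\{\epsilon\}}$ needs to be shown to have operator norm bounded polynomially in $|m|$ — this will follow from standard elliptic estimates for the Helmholtz Dirichlet problem on $M$ (interior trace estimates), with a power of $|m|$ that is killed by the exponential decay of the cylinder factor.

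The main obstacle is the last point: controlling the operator norm of the map $\eta\mapsto\phi|_{\gamma\times\{\epsilon\}}$ (the Poisson operator of $(\Delta^M+m^2)$ composed with restriction to an interior hypersurface) uniformly enough in $m$ — specifically, showing it grows at most polynomially, or at least subexponentially, in $|m|$, so that the $e^{-\epsilon\operatorname{Re}\mu_0}$ gain from the cylinder wins. Heuristically this is clear because solutions of the Helmholtz equation with boundary data $\eta$ decay like $e^{-\operatorname{dist}(\cdot,\gamma)\,\operatorname{Re}m}$ into the bulk (as in the cylinder model), so the interior trace at distance $\epsilon$ is itself exponentially small in $|m|$; making this rigorous requires either a Phragmén–Lindelöf/maximum-principle argument for $(\Delta+m^2)$ with complex $m$, or resolvent bounds $\|(\Delta^M+m^2)^{-1}\|_{L^2\to H^2}=O(1)$ for $\operatorname{Re}m^2$ large together with trace and mapping-property estimates for the Poisson operator. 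I expect the cleanest route is: localize near the collar, use the exact product structure there to write $\phi$ on $U$ via its Dirichlet data on $\gamma$ and on $\gamma\times\{\epsilon'\}$ for some $\epsilon'<\epsilon$, iterate to gain the exponential factor, and bound the remaining interior piece by a crude resolvent estimate — the exponential always dominates any fixed polynomial loss.
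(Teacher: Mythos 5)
Your decomposition is essentially the one the paper uses: split $M = \cyl\cup_{\gp}\Mp$ with $\cyl = \gamma\times[0,\epsilon]$, compare $\DN^{\gamma,M}_{m^2}$ to the finite-cylinder DN operator (known explicitly from Example~\ref{example: cylinder}), and show the remainder is small. Indeed your operator $T\colon \eta\mapsto\phi|_{\gp}$ is precisely the paper's $(\DD^{\gp,M}_{m^2})^{-1}\DN^{\cyl}_{\gp,\gamma}$, and your $R_{m^2}$ is the paper's $\DN^{\cyl}_{\gamma,\gp}(\DD^{\gp,M}_{m^2})^{-1}\DN^{\cyl}_{\gp,\gamma}$, which appears in equation~\eqref{DN' gluing}. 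So up to this point you have correctly reproduced the structure of the argument.

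However, there is a genuine gap exactly where you identify "the main obstacle." You need a quantitative bound on $T$ (or on $R_{m^2}$), uniform in $\re m^2 > c$, and the argument you sketch — exponential decay of Helmholtz solutions into the bulk via Phragmén–Lindelöf or resolvent/trace estimates — is not carried out, and for complex $m^2$ the maximum-principle route is not straightforward (the zeroth-order coefficient $m^2$ is not real, so comparison arguments do not apply directly). You also do not need the exponential gain you are aiming for: a bound on $R_{m^2}$ that is merely \emph{uniform} in $m$ (for $\re m^2 > c$) already gives $\|(\DN^{\gamma,M}_{m^2})'\| = O(1) = o(|m|)$, which is Assumption~\ref{conj: bound}(ii). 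The paper obtains such a uniform bound by factoring $R_{m^2}$ and treating $(\DD^{\gp,M}_{m^2})^{-1}$ separately: by Proposition~\ref{prop: kernel DN inverse} its kernel is $G^M_{m^2}|_{\gp\times\gp}$, so
\[
(\DD^{\gp,M}_{m^2})^{-1} = \int_0^\infty e^{-m^2 t}\,\kappa_t\,dt,\qquad \kappa_t := K^M(\cdot,\cdot|t)\big|_{\gp\times\gp},
\]
and then one proves $\|\kappa_t\| = O(t^{-1/2})$ as $t\to 0$ and $O(t^{-\dim M/2})$ as $t\to\infty$. The $t\to 0$ estimate is where compactness of $\gamma$ enters — via the bound $\|\kappa_t\|\le (\operatorname{tr}\kappa_t^p)^{1/p}$ and the short-time heat-kernel asymptotics, see estimate~\eqref{A1 computation 5}. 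This yields a uniform $m$-bound on $(\DD^{\gp,M}_{m^2})^{-1}$ for $\re m^2 > c$, and combined with the elementary uniform bounds $|\mu_k(\coth(\epsilon\mu_k)-1)|, |\mu_k/\sinh(\epsilon\mu_k)| < 1/\epsilon$ on the cylinder blocks, finishes the proof. To close your argument you would need to either reproduce this heat-kernel estimate for $(\DD^{\gp,M}_{m^2})^{-1}$, or actually establish a sub-exponential (in $|m|$) bound on the interior trace operator $T$, which you have not done.
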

We give the proof in Appendix \ref{appendix: proof of Lemma}. Compactness of $\gamma$ is used in the estimate (\ref{A1 computation 5}).

\begin{remark}\label{rem: Conj when DN and A commute}
   If $\DN^{\gamma,M}_{m^2}$ and $\sqrt{\Delta^\gamma + m^2}$ commute (for all $m >0$), the eigenvalues estimates derived in \cite{GKLP} (after H\"ormander \cite{Hoermander}) imply a stronger version of Assumption \ref{conj: bound}, but only in the case where $m^2 > 0$ (i.e. $\operatorname{Im} m^2 = 0$). Namely, let $\lambda_{k,m^2}$ be the eigenvalues of $\DN^{\gamma, M}_{m^2}$ and $\omega_k$ the eigenvalues of $\Delta^\gamma$. Then we have \cite[Theorem 4.2]{GKLP} 
   \begin{equation*}
       |\lambda_{k,m^2} - \sqrt{\omega_k + m^2}| < C,
   \end{equation*}
   where $C$ is a constant independent of $k$ and $m$.  
   In particular, for $m > C/\delta$, the bound \eqref{eq: DN prime bound} holds. 
   
   On the other hand, the arguments in \cite{GKLP} do not immediately extend to complex $m^2$, and
it is also proven in \cite{GKLP} that, for $m = 0$ and $M$ is a domain in $\RR^n$ with connected boundary,  $n \geq 3$,  the operators $\DN^{\gamma,M}$ and $\sqrt{\Delta^\gamma}$ commute if and only if $M$ is a ball. This result suggests that $\DN_{m^2}^{\gamma,M}$ and $\sqrt{\Delta^\gamma+m^2}$ in general do not commute. 

In the general case, a version of Assumption \ref{conj: bound} was proven in \cite{Vodev}. However, this result requires the mass to have a large imaginary part.
\end{remark}

\begin{conjecture}\label{conjecture}
    Assumption \ref{conj: bound} holds for any 
    Riemannian manifold $M$ with (not necessarily compact) 
    boundary.
\end{conjecture}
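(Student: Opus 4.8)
Because the statement is a conjecture, I will describe a strategy that I believe settles it at least when $\gamma$ is compact (the case actually used in the sequel, cf.\ Lemma~\ref{lemma: Assump for product metric} and Example~\ref{example: cylinder}), and indicate what extra hypothesis seems unavoidable in general. The plan is to prove the sharper statement that $\|(\DN^{\gamma,M}_{m^2})'\|_{L^2(\gamma)\to L^2(\gamma)}$ is bounded \emph{uniformly} for $\operatorname{Re} m^2\ge 1$: a uniform bound $C_0$ automatically satisfies $C_0<\delta\,|m|$ once $|m|$ is large, so this yields both parts of Assumption~\ref{conj: bound} at once. The guiding heuristic is the small-time heat kernel expansion: by \eqref{eq:Greens and Heat}, $G^{M,\gamma}_{m^2}(x,y)=\int_0^\infty e^{-m^2 t}K^{(M)}(x,y|t)\,dt$, so for $\operatorname{Re} m^2$ large the Green's function, and hence $\DN^{\gamma,M}_{m^2}$ via \eqref{eq: int kernel DN}, is controlled by $K^{(M)}$ as $t\to 0$, where it agrees to all orders with the flat half-space model; the model DN operator is exactly $\sqrt{\Delta^\gamma+m^2}$ (cf.\ the Remark on the semi-infinite cylinder) and the corrections are governed by the second fundamental form of $\gamma$ and the curvature of $M$ near $\gamma$, which do not grow with $m$.

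To make this rigorous I would work in the calculus of pseudodifferential boundary value problems depending on a large parameter (in the spirit of Grubb's and Shubin's parameter-dependent calculi), treating $z:=m^2$ as ranging over the half-plane $\operatorname{Re} z\ge 1$ and measuring orders in $(\xi,m)$ jointly. Two reductions come first: \emph{(a)} altering the metric of $M$ away from $\gamma$ changes $\DN^{\gamma,M}_{m^2}$ by an operator that is smoothing and, by the Agmon-type exponential decay $e^{-c\,\operatorname{Re} m\cdot \mathrm{dist}}$ of the relevant Poisson/Green kernels, is $O(|m|^{-N})$ for every $N$, so it suffices to analyse a collar $\gamma\times[0,\epsilon]$; \emph{(b)} on the collar, a doubling/gluing argument as in Example~\ref{example: cylinder} shows that the contribution of the inner end $\gamma\times\{\epsilon\}$ is suppressed by factors $\sim e^{-\epsilon\,\operatorname{Re} m}$ uniformly in the $\gamma$-frequency, leaving a genuinely local near-boundary computation. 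One then constructs, uniformly in $z$, a parametrix for $(\Delta^M+m^2)\phi=0$, $\phi|_\gamma=\eta$ on the collar; composing with the two normal derivatives gives a parameter-dependent classical pseudodifferential operator on $\gamma$ of order $1$ with parameter-principal symbol $\sqrt{|\xi|_g^2+m^2}$, so that $(\DN^{\gamma,M}_{m^2})'=\DN^{\gamma,M}_{m^2}-\sqrt{\Delta^\gamma+m^2}$ has order $\le 0$: its full symbol $q(x,\xi;m)$ satisfies $|\partial_x^\alpha\partial_\xi^\beta q|\le C_{\alpha\beta}(1+|\xi|^2+|m|^2)^{-|\beta|/2}$ with constants independent of $m$. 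The uniform Calder\'on--Vaillancourt estimate then gives $\|(\DN^{\gamma,M}_{m^2})'\|\le C_0$ for all such $m$, and a slightly more careful bookkeeping of the subleading symbols shows this bound tends to $0$ as $\operatorname{Re} m^2\to\infty$ (compatibly with the cylinder computation $\lambda_k-\mu_k\simeq 2\mu_k e^{-2L\mu_k}$).

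The main obstacle I anticipate is \emph{uniformity of the parametrix construction over the entire half-plane} $\operatorname{Re} m^2\ge1$, especially as $m^2$ approaches $\RR_{\ge0}$. Existing versions of Assumption~\ref{conj: bound} sidestep this: Vodev's result \cite{Vodev} keeps $m^2$ in a sector bounded away from $\RR_{\ge 0}$, where $\Delta+m^2$ is parameter-elliptic in the classical sense; the estimates of \cite{GKLP,Hoermander} apply when $\DN^{\gamma,M}_{m^2}$ and $\sqrt{\Delta^\gamma+m^2}$ commute (e.g.\ $M$ a Euclidean ball). One must instead use that $\Delta^M+m^2$ with Dirichlet conditions is a strictly positive self-adjoint operator, bounded below by $\operatorname{Re} m^2$, hence uniformly invertible for $\operatorname{Re} m^2\ge1$ right up to the real axis, and redo the standard parameter-dependent parametrix with $z$ in a half-plane rather than a sector, controlling the symbol estimates uniformly. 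A second, more structural obstacle is non-compact $\gamma$: without a bounded-geometry assumption on a collar of $\gamma$ (bounded curvature, positive injectivity radius, bounds on the second fundamental form), the symbol estimates, and so the $L^2$-boundedness constant, need not be uniform, so I would expect the conjecture as literally stated to require such a hypothesis --- automatic, of course, when $\gamma$ is compact.
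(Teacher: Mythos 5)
The statement you are addressing is labelled a \emph{conjecture} in the paper: the authors do not supply a proof, and their contribution on this front is limited to (i) the special case of compact $\gamma$ with product metric near the boundary (Lemma~\ref{lemma: Assump for product metric}), proven by an elementary gluing argument for DN operators combined with explicit cylinder eigenvalue bounds and short-time heat-kernel norm estimates; (ii) a few exactly solvable examples (disk, spherical sector); and (iii) pointers to \cite{GKLP,Hoermander} for the commuting, real-$m^2$ case and to \cite{Vodev} for masses with large imaginary part. So there is no ``paper's proof'' to compare against, and I am evaluating your sketch on its own terms.

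Your parameter-dependent pseudodifferential strategy is a reasonable and genuinely different route from the paper's: where the paper localizes to a collar \emph{after} assuming a product metric there and then reduces everything to closed-form cylinder formulae and heat-trace bounds, you propose to build a parametrix for the Poisson/DN operator directly, with $z=m^2$ as a large parameter, and to extract the bound from uniform symbol estimates plus Calder\'on--Vaillancourt. If carried through, this would remove the product-metric hypothesis of Lemma~\ref{lemma: Assump for product metric}, which would be a strict improvement. But as written this is a program, not a proof, and you say so yourself: the first obstacle you name (uniform symbol and remainder estimates over the entire half-plane $\operatorname{Re}m^2\geq 1$, as opposed to a ray or sector as in the classical Agranovich--Vishik/Grubb setting, or the imaginary-mass regime of \cite{Vodev}) is not resolved in your argument, and it does not go away when $\gamma$ is compact. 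Your opening claim that the strategy ``settles'' the compact case is therefore not substantiated; what you have is a plausible reduction of the conjecture to a uniform parametrix lemma that you have not proved. Two smaller points worth flagging: the Agmon-type $O(|m|^{-N})$ decay in your reduction (a) needs to be established uniformly in the half-plane, not just for real $m\to\infty$, which is part of the same unresolved uniformity issue; and the half-plane $\operatorname{Re}m^2\geq 1$ actually sits comfortably away from $-\sigma(\Delta^M)\subset(-\infty,0]$ for the Dirichlet Laplacian, so your remark that difficulties arise ``as $m^2$ approaches $\RR_{\geq 0}$'' has the geometry backwards --- the positive real axis is the benign direction; the delicate regime is $|\operatorname{Im}m^2|\to\infty$ with $\operatorname{Re}m^2$ bounded. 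Your second observation, that some bounded-geometry hypothesis on a collar is likely unavoidable for non-compact $\gamma$, is well taken and worth recording as a caveat on the conjecture as stated.
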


\section{Gluing formulae}\label{sec: gluing formula}
\subsection{Gluing formula I}
Consider a  
Riemannian manifold $M$, with (possibly empty) boundary $\dd M$, with $\gamma\subset M$ a 
{closed}
codimension $1$ submanifold disjoint from $\dd M$, such that one has  a decomposition 
\begin{equation}\label{M splitting sec 4}
M = M_1 \cup_\gamma M_2.
\end{equation}
 The Green's function and heat kernel on $M_i$ are denoted by $G^{(i)}$ and $K^{(i)}$ respectively, with Dirichlet boundary condition imposed on all boundary components, operators on $M$ carry no superscript. The gluing formula for the Green's function is: 
\begin{proposition}
    Let $x\in M_a$, $y\in M_b$, $a,b\in \{1,2\}$. Then:
\begin{equation}
    G_{m^2}(x,y) = \delta_{ab}G^{(a)}_{m^2}(x,y) + \int_{\gamma \times \gamma} \partial^n_uG^{(a)}_{m^2}(x,u)\DD^{-1}_{m^2}(u,v)\partial^n_vG^{(b)}_{m^2}(v,y) du dv .\label{eq: gluing green's functions}
\end{equation}
The first term in the r.h.s. is absent for $a\neq b$.
\end{proposition}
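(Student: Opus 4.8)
The plan is to verify directly that the right-hand side of \eqref{eq: gluing green's functions}, which I will write as $\til G(x,y)$, has all the defining properties of the Green's function of $\Delta^M+m^2$ on $M$ (with Dirichlet conditions on $\partial M$), and then to conclude $\til G=G_{m^2}$ by uniqueness, i.e.\ by invertibility of $\Delta^M+m^2$ together with the mapping properties recalled in Section~\ref{ss: helmholtz}. Fix $y\in M_b$ and set
\[
g_y:=(\DD^{\gamma,M}_{m^2})^{-1}\Big(\partial^n G^{(b)}_{m^2}(\cdot,y)\big|_\gamma\Big),
\]
which is smooth on $\gamma$ since $(\DD^{\gamma,M}_{m^2})^{-1}$ is a pseudodifferential operator of order $-1$ and $\partial^n G^{(b)}_{m^2}(\cdot,y)|_\gamma$ is smooth for $y$ off $\gamma$. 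With this notation the interface term in \eqref{eq: gluing green's functions} is $\int_\gamma\partial^n_uG^{(a)}_{m^2}(x,u)\,g_y(u)\,du$, which by the integral-kernel description \eqref{eq: int kernel DN} of the Dirichlet-to-Neumann / Poisson operators equals (up to a sign fixed by the orientation convention for $\partial^n$ on $\gamma$) the Helmholtz extension $\phi^{(a)}_{g_y}$ of $g_y$ into $M_a$, i.e.\ the solution on $M_a$ of $(\Delta+m^2)\phi=0$ with $\phi|_\gamma=g_y$ and $\phi|_{\partial M\cap M_a}=0$. Hence $\til G(\cdot,y)|_{M_a}=\delta_{ab}G^{(a)}_{m^2}(\cdot,y)\mp\phi^{(a)}_{g_y}$.

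From this form two of the required properties are immediate: in the interior of each $M_a$ one has $(\Delta+m^2)\til G(\cdot,y)=\delta_{ab}\delta_y$ (the source comes from the $G^{(a)}$ term, $\phi^{(a)}_{g_y}$ being annihilated by $\Delta+m^2$), and $\til G(\cdot,y)$ vanishes on $\partial M$ (both summands do). The substantive point is that no spurious distribution is created along $\gamma$, i.e.\ that $(\Delta^M+m^2)\til G(\cdot,y)=\delta_y$ holds globally. For a piecewise smooth function that is continuous across $\gamma$, Green's identity applied on each side gives that its distributional Laplacian is the piecewise Laplacian plus $\big(\partial^{n_1}(\cdot)+\partial^{n_2}(\cdot)\big)\,\delta_\gamma$, where $\partial^{n_a}$ denotes the one-sided outward normal derivative from $M_a$. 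So I would check: (a) $\til G(\cdot,y)$ is continuous across $\gamma$ — it is, because $G^{(a)}_{m^2}(\cdot,y)$ has vanishing trace there while $\phi^{(a)}_{g_y}$ attains the value $g_y$, so the limit from either side is $\mp g_y$; and (b) the sum of the one-sided normal derivatives vanishes — this sum is $\sum_{a}\big(\delta_{ab}\partial^{n}G^{(a)}_{m^2}(\cdot,y)|_\gamma\mp\DN^{\gamma,M_a}_{m^2}g_y\big)=\partial^{n}G^{(b)}_{m^2}(\cdot,y)|_\gamma\mp\DD^{\gamma,M}_{m^2}g_y$, which is exactly $0$ by the definition of $g_y$.

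I expect the main obstacle to be not conceptual but bookkeeping: keeping the signs and the orientations of the normal derivative at the interface consistent (in \eqref{eq: gluing green's functions} the symbol $\partial^n$ is most naturally read as the outward normal of the manifold carrying that particular Green's function, and for $a\neq b$ the two such choices are opposite normals at the same point of $\gamma$). Beyond that there are only mild analytic points to dispatch: the smoothness of $g_y$ (used above), enough regularity of $\til G(\cdot,y)$ up to $\gamma$ for Green's identity to be legitimate, and — in the non-compact case — invoking the invertibility/mapping statements from Section~\ref{ss: helmholtz} to make the uniqueness argument rigorous.

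Equivalently, one can \emph{derive} rather than verify: assuming $G_{m^2}$ exists, its restriction to $M_a$ differs from $\delta_{ab}G^{(a)}_{m^2}(\cdot,y)$ by a function that is Helmholtz-harmonic on $M_a$, vanishes on $\partial M\cap M_a$, and has $\gamma$-trace $\eta:=G_{m^2}(\cdot,y)|_\gamma$ — hence equals $\phi^{(a)}_\eta$; imposing that $G_{m^2}(\cdot,y)$ be $C^1$ across $\gamma$, $\partial^{n_1}(\cdot)+\partial^{n_2}(\cdot)=0$, forces $\big(\DN^{\gamma,M_1}_{m^2}+\DN^{\gamma,M_2}_{m^2}\big)\eta=\mp\,\partial^{n}G^{(b)}_{m^2}(\cdot,y)|_\gamma$, i.e.\ $\eta=\mp\,g_y$, and substituting back into $G_{m^2}(\cdot,y)|_{M_a}=\delta_{ab}G^{(a)}_{m^2}(\cdot,y)+\phi^{(a)}_\eta$ reproduces \eqref{eq: gluing green's functions}.
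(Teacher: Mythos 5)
The paper does not actually prove this proposition here; it refers the reader to Section 4.2 of \cite{KMW}. So there is no in-text proof to compare yours against, and I will simply assess your argument on its own terms.

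Your argument is correct and is the natural one: write the right-hand side as $\til G(\cdot,y)|_{M_a}=\delta_{ab}G^{(a)}_{m^2}(\cdot,y)\mp\phi^{(a)}_{g_y}$ with $g_y=(\DD^{\gamma,M}_{m^2})^{-1}(\partial^n G^{(b)}_{m^2}(\cdot,y)|_\gamma)$, observe that this is piecewise Helmholtz-harmonic with a single $\delta_y$ source, vanishes on $\partial M$, is continuous across $\gamma$, and has matching one-sided normal derivatives there precisely because of how $g_y$ is chosen; then invoke uniqueness. Your jump formula for the distributional Laplacian of a function continuous across a hypersurface, $\Delta f=\Delta_{\mathrm{piece}}f+(\partial^{n_1}f+\partial^{n_2}f)\delta_\gamma$, is correct for the positive Laplacian $\Delta=d^*d$, and the smoothness of $g_y$ (via $(\DD^{\gamma,M}_{m^2})^{-1}\in\Psi^{-1}$) justifies the trace regularity needed for Green's identity. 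Your alternative ``derive'' paragraph is also a valid proof and is arguably cleaner, since it determines $\eta=G_{m^2}(\cdot,y)|_\gamma$ rather than verifying a guess.

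One point worth making explicit rather than leaving as $\mp$: with the paper's positive Laplacian and $\partial^n$ the outward normal, the Poisson representation actually carries a minus sign, $\phi^{(a)}_\eta(x)=-\int_\gamma\partial^n_u G^{(a)}_{m^2}(x,u)\eta(u)\,du$ (and correspondingly $\DN=-\partial^n\partial^n G$ as an integral kernel); the paper's equation \eqref{eq: int kernel DN} omits this sign. The reason the final gluing formula nonetheless comes out with a $+$ in front of the double integral is that two such minus signs cancel: one from the Poisson kernel of $M_a$, one from the relation $\DD^{\gamma,M}_{m^2}\eta=-\partial^{n}G^{(b)}_{m^2}(\cdot,y)|_\gamma$ forced by the flux-matching condition (equivalently, from the Poisson/DN kernel of $M_b$). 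Making that cancellation explicit replaces your $\mp$'s and closes the one point you left as ``bookkeeping.'' Beyond that, the only remaining analytic point you flag — uniqueness in the non-compact case — is handled by the $L^2$-invertibility of $\Delta^M+m^2$ for $\operatorname{Re}m^2>0$ as discussed in Section~\ref{ss: helmholtz}, so your proof is complete.
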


A proof was given \cite[Section 4.2]{KMW}, where we refer to for further discussion. Note that this formula holds for all  $m^2 \notin  -\sigma(\Delta^X)$, in particular for $\operatorname{Re} m^2 >0$. 

Our first gluing formula for the heat kernel is the following.  
\begin{proposition}[Gluing formula for heat kernels I]\label{prop: gluing heat kernel 1}
One has the following gluing formula for heat kernels. Let $x\in M_a$, $y\in M_b$, $a,b\in \{1,2\}$. Then:
\begin{multline}\label{eq: gluing heat kernel 1}
    K(x,y|t) = \delta_{ab}K^{(a)}(x,y|t) +  \\
    \int_{t_0+t_1+t_2 = t, t_i > 0}dt_0dt_1\, \int_{\gamma \times \gamma} dudv\, \partial^n_uK^{(a)}(x,u|t_0)\LL^{-1}[\DD^{-1}_{m^2}(u,v)](t_1)\partial^n_vK^{(b)}(v,y|t_2).
\end{multline}
\end{proposition}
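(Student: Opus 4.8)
The plan is to obtain \eqref{eq: gluing heat kernel 1} from the Green's-function gluing formula \eqref{eq: gluing green's functions} by applying the inverse Laplace transform in the variable $s = m^2$, using the duality between heat kernels and Green's functions established in Section \ref{sec: preliminaries} together with the fact that $\LL^{-1}$ sends products to convolutions \eqref{eq: prod to conv}. Concretely, fix $x \in M_a$, $y \in M_b$ with $x \neq y$ (the diagonal being a measure-zero set that can be handled by continuity, or the kernels understood distributionally). By \eqref{eq:Greens and Heat} applied on $M$, we have $K(x,y|t) = \LL^{-1}[G_{m^2}(x,y)](t)$, and similarly $K^{(a)}(x,y|t) = \LL^{-1}[G^{(a)}_{m^2}(x,y)](t)$ when $a = b$. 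The first term on the right-hand side of \eqref{eq: gluing green's functions} thus transforms directly to $\delta_{ab}K^{(a)}(x,y|t)$, and the constant function $1$ appearing when $a\neq b$ corresponds to the point mass at $0$, consistent with the footnote convention.

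The second term is an integral over $\gamma \times \gamma$ of a triple product of functions of $m^2$: $\partial^n_u G^{(a)}_{m^2}(x,u)$, $\DD^{-1}_{m^2}(u,v)$, and $\partial^n_v G^{(b)}_{m^2}(v,y)$. The first step is to identify each factor as a Laplace transform. Since $\partial^n_u$ commutes with the $t$-integral defining the Laplace transform (normal derivative is in the $u$-variable, unrelated to $s$), we get $\partial^n_u G^{(a)}_{m^2}(x,u) = \LL[\partial^n_u K^{(a)}(x,u)](m^2)$; similarly for the third factor. For the middle factor, $\DD^{-1}_{m^2}(u,v)$ is by definition $\LL[\LL^{-1}[\DD^{-1}_{m^2}(u,v)]](m^2)$ — i.e.\ we simply name its inverse Laplace transform. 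Then applying $\LL^{-1}$ to the product and using \eqref{eq: prod to conv} twice (once to split off $\partial^n_u K^{(a)}$, once more to split the remaining product) yields the threefold convolution
\begin{equation*}
\int_{t_0 + t_1 + t_2 = t,\, t_i > 0} dt_0\, dt_1\; \partial^n_u K^{(a)}(x,u|t_0)\, \LL^{-1}[\DD^{-1}_{m^2}(u,v)](t_1)\, \partial^n_v K^{(b)}(v,y|t_2),
\end{equation*}
and interchanging this with the (compact, or suitably controlled) $\int_{\gamma\times\gamma}du\,dv$ gives exactly the claimed formula.

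The main obstacle is justifying the interchange of the inverse Laplace transform (a contour integral over $\operatorname{Re} s = \kappa$) with the integral over $\gamma \times \gamma$, and ensuring that all the Laplace transforms and inverse transforms involved actually converge on a common right half-plane $\operatorname{Re} s > a$ so that the product-to-convolution identity \eqref{eq: prod to conv} legitimately applies. This requires decay estimates: the heat kernels $K^{(a)}$ and their normal derivatives decay exponentially as $t \to \infty$ (Dirichlet spectrum bounded below by $\omega_1 > 0$), controlling the abscissae of convergence of the two outer factors; and one needs $\LL^{-1}[\DD^{-1}_{m^2}(u,v)]$ to exist and be integrable in $t_1$, which is where the Steklov spectral gap $\lambda_1 > 0$ and the holomorphic dependence of $\DD_{m^2}$ on $m^2$ (cited from \cite{Carron}) enter — $\DD^{-1}_{m^2}(u,v)$ is holomorphic and bounded for $\operatorname{Re} m^2 > 0$. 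One must also check that the $s \to \infty$ decay along the vertical contour is fast enough (after possibly regularizing the kernel of $\DD^{-1}$ as in \cite[Remark 3.4]{KMW}) for Fubini to apply. I expect this technical bookkeeping — which the authors have plausibly relegated to Appendix \ref{app: Proof of Prop} or \ref{appendix: proof of Lemma} — to be the substance of the proof, while the algebraic skeleton (transform, product-to-convolution, transform back) is immediate.
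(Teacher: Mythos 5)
Your algebraic skeleton (Laplace-transform duality, exchange of $\LL$ with the normal derivative, product-to-convolution, interchange of $\LL^{-1}$ with the boundary integrals) matches the paper. However, there is a gap in how you treat the middle factor: you write $\DD^{-1}_{m^2}(u,v) = \LL[\LL^{-1}[\DD^{-1}_{m^2}(u,v)]]$ and say you "simply name its inverse Laplace transform," then propose to justify existence of that inverse transform via holomorphy in $m^2$ (citing \cite{Carron}) plus decay estimates on a vertical contour. This is both the hard part and the part you leave open: the product-to-convolution identity \eqref{eq: prod to conv} requires that each factor genuinely \emph{be} a Laplace transform of a locally integrable function, and holomorphy plus boundedness in a half-plane is not in general enough to conclude this. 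You flag this as "technical bookkeeping" probably handled in an appendix, but you have not actually discharged it, and your guess for which appendix is relevant is off (Appendix \ref{appendix: proof of Lemma} concerns Assumption \ref{conj: bound} for product metrics and is not used for Gluing Formula I).

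The paper closes exactly this gap with a one-line observation that you overlooked: by Proposition \ref{prop: kernel DN inverse}, the integral kernel of $(\DD^{\gamma,M}_{m^2})^{-1}$ is nothing but the Green's function on the full glued manifold restricted to the interface, $\DD^{-1}_{m^2}(u,v) = G^M_{m^2}(u,v)$ for $u,v\in\gamma$. Since $G^M_{m^2}(u,v) = \LL[K(u,v|\cdot)](m^2)$ for $u\neq v$ by \eqref{eq:Greens and Heat}, this identifies $\LL^{-1}[\DD^{-1}_{m^2}(u,v)](t)$ outright as the heat kernel $K(u,v|t)$ on $M$, so it exists, is nonnegative, and has the same convergence and decay properties already established for heat kernels. (This identification is what underlies the rewriting \eqref{eq: gluing heat kernels 2} in Remark \ref{rem: path integral}.) With that single substitution your soft-analytic detour becomes unnecessary: all three factors are manifestly Laplace transforms of heat-kernel-type functions, and the rest of your argument goes through as the paper's does. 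So the missing idea is precisely Proposition \ref{prop: kernel DN inverse} — which is also, incidentally, what Appendix \ref{app: Proof of Prop} proves.
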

\begin{figure}[h]
    \centering
   \begin{tikzpicture}
   \node[bulk] (b1) at (-2,-1) {}; 
   \node[bulk] (b2) at (-1.5,1) {};
   \node[coordinate, label={$x$}] at (b1.west) {};
    \node[coordinate, label={$y$}] at (b2.west) {};
 \node[coordinate, label={$\gamma$}] at (0,-2.5) {};
  \node[coordinate, label={$M_1$}] at (-.5,1.7) {};
  \node[coordinate, label={$M_2$}] at (.5,1.7) {};
   \draw[bubu] (b1) -- (b2) node[midway,right] {$t$};
    \draw (0,2) -- (0,-2);
    \node at (1,0) {=};
    \begin{scope}[xshift = 5cm]
    \node[bulk] (b1a) at (-2,-1) {};
   \node[bulk] (b2a) at (-1.5,1) {};
   \node[coordinate, label={$x$}] at (b1a.west) {};
    \node[coordinate, label={$y$}] at (b2a.west) {};
      \node[coordinate, label={$M_1$}] at (-.5,1.7) {};
  \node[coordinate, label={$M_2$}] at (.5,1.7) {};
   \draw[bubu1] (b1a) -- (b2a) node[midway,right] {$t$};
    \draw (0,2) -- (0,-2);
     \node[coordinate, label={$\gamma$}] at (0,-2.5) {};

    \node at (1,0) {+};
     \begin{scope}[xshift = 4cm]
     \node[bulk] (b1b) at (-2,-1) {};
   \node[bulk] (b2b) at (-1.5,1) {};
   \node[coordinate, label={$x$}] at (b1b.west) {};
    \node[coordinate, label={$y$}] at (b2b.west) {};
      \node[coordinate, label={$M_1$}] at (-.5,1.7) {};
  \node[coordinate, label={$M_2$}] at (.5,1.7) {};
   \node[bdry] (b3b) at (0,-.75) {};
   \node[bdry] (b4b) at (0,.75) {}; 
    \node[coordinate, label={$u$}] at (b3b.east) {};
    \node[coordinate, label={$v$}] at (b4b.east) {};
           \draw[bubo] (b1b) -- (b3b) node[midway,above] {$t_0$};
           \draw[bobo]( b3b) -- (b4b)node[midway,right] {$t_1$}; 
           \draw[bubo](b4b) -- (b2b) node[midway,above] {$t_2$};
    \draw (0,2) -- (0,-2);
     \node[coordinate, label={$\gamma$}] at (0,-2.5) {};
    \end{scope}
    \end{scope}
\end{tikzpicture}
    \caption{Pictorial description of the gluing formula for the heat kernel. The fat straight line to the left denotes $K(x,y|t)$, the thin straight line in the middle $K^{(1)}(x,y|t)$, a red squiggly edge denotes $\partial^n_u K^{(1)}(x,u|t)$, a curly edge denotes $\LL^{-1}[\DD_{m^2}^{-1}(u,v)](t)$. Grey vertices are integrated over the interface $\gamma$.}
    \label{fig: gluing formula 1}
\end{figure}
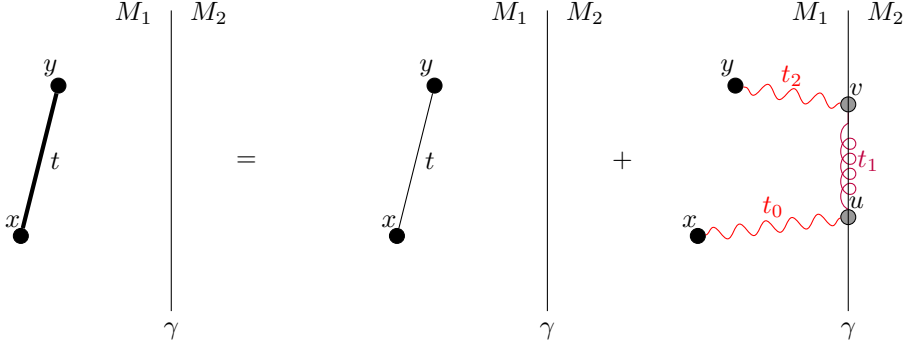

\begin{proof}
    By the arguments above we have $\LL[K(x,y|t)](m^2) = G_{m^2}(x,y)$, and similarly for $K^{(i)},G^{(i)}$. Since the convergence is uniform in $(x,y)$ on compact subsets of $x \neq y$ we can exchange the Laplace transform with the normal derivative to get 
    $$\LL[\partial^n_u K(x,u|t)](m^2) = \partial^nG_{m^2}(x,u).$$
    Finally, we note again that for $u,v \in \gamma$, by Proposition \ref{prop: kernel DN inverse} one has $\DD^{-1}_{m^2}(u,v) = G_{m^2}(u,v)$. Therefore, 
    $$\LL[K(u,v|t)](m^2) = \DD^{-1}_{m^2}$$
    in particular $\LL^{-1}[\DD^{-1}_{m^2}]$ is defined and the gluing formula follows by applying $\LL^{-1}$ to \eqref{eq: gluing green's functions} and using again the fact that multiplication is sent to convolution under the inverse Laplace transform. 
\end{proof}

\begin{remark} \label{rem: path integral}
     The proof of Proposition \ref{prop: gluing heat kernel 1} shows that an alternative (slightly circular) way of writing of the gluing formulae \eqref{eq: gluing green's functions} and \eqref{eq: gluing heat kernel 1} is 
    \begin{equation}
        G_{m^2}(x,y) = \delta_{ab}G^{(a)}_{m^2}(x,y) + \int_{\gamma \times \gamma} \partial^n_uG^{(a)}_{m^2}(x,u)G_{m^2}(u,v)\partial^n_vG^{(b)}_{m^2}(v,y) du dv  \label{eq: gluing greens functions 2}
    \end{equation} 
    and \begin{multline}
         K(x,y|t) = \delta_{ab}K^{(a)}(x,y|t) +  \\                                          
    \int_{t_0+t_1+t_2 = t, t_i > 0}dt_0dt_1\, \int_{\gamma \times \gamma} dudv\, \partial^n_uK^{(a)}(x,u|t_0)K(u,v|t_1)\partial^n_vK^{(b)}(v,y|t_2).\label{eq: gluing heat kernels 2}
    \end{multline} 
    These formulae admit natural interpretations in terms of path integrals and Wiener measure, see Appendix \ref{app: path integral}. 
\end{remark}

\subsection{An expression for $\LL^{-1}[\DN^{-1}]$. } 
 As we will see presently, it is possible to rewrite the formula (\ref{eq: gluing heat kernels 2}) to contain only the heat kernels of $M_i$ and $\gamma$.
Suppose that $A$ and $B$ are two operators on a Hilbert space such that $A+B$ and $A$ are invertible and such $A^{-1}B$ is bounded of norm $||A^{-1}B|| < 1$ (note that $A,B$ are not required to be bounded). Then we have the following geometric progression formula for $(A+B)^{-1}$:
\begin{multline}\label{eq: geometric progression}
    (A+B)^{-1} = (A(I +A^{-1}B)) ^{-1} = (I+A^{-1}B)^{-1}A^{-1} = \\
    =\sum_{k\geq 0}(-A^{-1}B)^kA^{-1} = A^{-1} - A^{-1}BA^{-1} + \ldots 
\end{multline} 
where the  series converges in operator norm. \\
We want to apply this result to compute the inverse of the total Dirichlet-to-Neumann operator of the glued manifold $M = M_1 \cup_\gamma M_2$ via the splitting $\DD^{\gamma,M}_{m^2} = A + \DD'$ (in the notations of Section \ref{sec: conj} above). By Assumption \ref{conj: bound}, we have  that for large $m$, $||\DD'|| < 2\delta |m|$. On the other hand, $A^{-1}$ is a bounded operator with norm $||A^{-1}|| = \frac{1}{2|m|}$ and therefore we have $||A^{-1}\DD'|| < \delta < 1.$ By \eqref{eq: geometric progression}, we then have 
\begin{equation}\label{eq: geom progression for DN^-1}
    \DD^{-1}_{m^2} = \sum_{k \geq 0} (-A^{-1}\DD')^{k}A^{-1} = A^{-1} - A^{-1}\DD'A^{-1} + \ldots
\end{equation} 
We can compute the inverse Laplace transform of $\DD^{-1}_{m^2}$ by applying it to $A^{-1}$ and $\DD'$ separately, and then make use the fact that multiplication gets mapped to convolution. Namely, we have by Proposition \ref{prop: kernel DN inverse}
\begin{equation*}
    A^{-1} = (\DD^{\gamma,\gamma \times \RR})^{-1} 
    = G^{\gamma \times \RR}\rvert_{(\gamma \times \{0\}) \times (\gamma \times \{0\})}
\end{equation*}
and therefore 
\begin{multline*}
    \LL^{-1}[A^{-1}(u,v)](t) = K^{\gamma\times \RR}((u,0),(v,0)|t)  =\\
=K^\gamma(u,v|t)K^\RR(0,0|t) = \frac{1}{\sqrt{4\pi t}}K^\gamma(u,v|t).
\end{multline*}
For 
\begin{equation}\label{DD'}
\DD' = \DD^{\gamma,M}_{m^2} - A
\end{equation}
we then have
\begin{equation*}
    \LL^{-1}[\DD'
    (u,v)](t) = \partial^n_u\partial^n_v \Big(K(u,v|t) - K^{\gamma \times \RR_{\geq 0}}(u,v|t) \Big)
\end{equation*}
and therefore 
\begin{multline*}
    \LL^{-1}[\DD'
    (u,v)](t) = \\
    =\partial^n_u\partial^n_v \left(K^{(1)}(u,v|t) + K^{(2)}(u,v|t) - 2K^{\gamma \times \RR_{\geq 0}}((u,0),(v,0)|t)\right) = \colon \Kp(u,v|t).
\end{multline*}

Using again the fact that the heat kernel on a product manifold is the product of heat kernels, we can slightly simplify 
\begin{multline*}
    \partial^n_u\partial^n_v \Big(K^{(a)}(u,v|t) - K^{\gamma \times \RR_{\geq 0}}(u,v|t) \Big) =\\
    =\partial^n_u\partial^n_v K^{(a)}(u,v|t) - K^{\gamma }(u,v|t)\partial^n_x\partial^n_y K^{\RR_{\geq 0}}(x,y|t) 
    = \partial^n_u\partial^n_v K^{(a)}(u,v|t) - \frac{1}{\sqrt{4\pi t^3}} K^{\gamma }(u,v|t)
\end{multline*}
with $a\in\{1,2\}$,
and thus we get the following formula for $\Kp$: 
\begin{equation*}
    \Kp(u,v|t) =  \partial^n_u\partial^n_v \left(K^{(1)}(u,v|t) + K^{(2)}(u,v|t) \right)- \frac{2}{\sqrt{4\pi t^3}}K^{\gamma }(u,v|t).
\end{equation*}

\subsection{Gluing formula II}
The argument above proves our main result.

\begin{thm}[Gluing formula for heat kernels II]\label{thm: gluing formula}
Let $M$ be a Riemannian manifold with decomposition (\ref{M splitting sec 4}). If Assumption \ref{conj: bound} holds for $M_{1,2}$ (for example, if they have product metric near $\gamma$ and $\gamma$ is compact), one has the following gluing formula for heat kernels. Let $x\in M_a$, $y\in M_b$, $a,b\in \{1,2\}$. Then:
\begin{multline}\label{gluing formula}
    K(x,y|t) = \delta_{ab} K^{(a)}(x,y|t) + \\
    +\sum_{k\geq 0}(-1)^k\int_{\sum_{i=0}^{2k+2} t_i = t, t_i> 0}\prod_{i=0}^{2k+1}dt_i\int_{\gamma^{2k+2}}\prod_{i=0}^{2k+1} du_i\,
    \partial^n_{u_0}K^{(a)}(x,u_0|t_0) \cdot \\ 
    \cdot \prod_{i=0}^{k-1}K^{\gamma \times \RR}((u_{2i},0),(u_{2i+1},0)|t_{2i+1})\Kp(u_{2i+1},u_{2i+2}|t_{2i+2})\cdot \\
    \cdot K^{\gamma \times \RR}((u_{2k},0),(u_{2k+1},0)|t_{2k+1})\partial^n_{u_{2k+1}}K^{(b)}(u_{2k+1},y|t_{2k+2}).
\end{multline}
\end{thm}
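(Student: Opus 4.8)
The plan is to substitute the geometric progression \eqref{eq: geom progression for DN^-1} for $\DD^{-1}_{m^2}$ into the first gluing formula \eqref{eq: gluing heat kernel 1} and take the inverse Laplace transform term by term. Recall that \eqref{eq: geom progression for DN^-1} is available precisely because Assumption \ref{conj: bound} forces $||A^{-1}\DD'|| < \delta < 1$ on a half-plane $\re m^2 > C$ — here $A = 2\sqrt{\Delta^\gamma+m^2}$ and $\DD' = \DD^{\gamma,M}_{m^2} - A$ as in \eqref{DD'}, with $||A^{-1}|| = \frac{1}{2|m|}$ and $||\DD'|| < 2\delta |m|$ — so that $\DD^{-1}_{m^2} = \sum_{k\ge 0}(-A^{-1}\DD')^k A^{-1}$ converges in operator norm with geometric rate $\delta^k$, uniformly on every vertical line $\re m^2 = c > C$.

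Next I would use the two inverse Laplace transforms already computed above, $\LL^{-1}[A^{-1}(u,v)](t) = K^{\gamma\times\RR}((u,0),(v,0)|t) = \frac{1}{\sqrt{4\pi t}}K^\gamma(u,v|t)$ and $\LL^{-1}[\DD'(u,v)](t) = \Kp(u,v|t)$ (the latter matching the form in the theorem statement since $\frac{2}{\sqrt{4\pi t^3}} = \frac{1}{\sqrt{\pi}\,t^{3/2}}$). For fixed $k$, the operator $(-A^{-1}\DD')^k A^{-1} = (-1)^k\, A^{-1}\DD' A^{-1}\cdots A^{-1}\DD' A^{-1}$ is a composition of $2k+1$ operators on $\gamma$, with $A^{-1}$ occurring $k+1$ times and $\DD'$ occurring $k$ times, so its integral kernel is an iterated integral over $\gamma^{2k}$ of a product of $2k+1$ holomorphic functions of $m^2$; applying $\LL^{-1}$ and iterating the product-to-convolution identity \eqref{eq: prod to conv} turns that product into a convolution of $2k+1$ factors in the time variables, alternating between $\frac{1}{\sqrt{4\pi t}}K^\gamma$ and $\Kp$. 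Feeding this back into \eqref{eq: gluing heat kernel 1} — i.e.\ convolving once more with $\partial^n_u K^{(a)}(x,u|t_0)$ on the left and $\partial^n_v K^{(b)}(v,y|t_2)$ on the right and integrating the two outer interface points over $\gamma$ — produces exactly the $k$-th summand of \eqref{gluing formula}, with $2k+3$ factors, hence $2k+2$ free time integrations with $\sum_{i=0}^{2k+2}t_i = t$, $2k+2$ interface integrations, and sign $(-1)^k$; summing over $k$ gives the claimed identity. One should also note, as in the proof of Proposition \ref{prop: gluing heat kernel 1}, that $\LL^{-1}$ commutes with the normal derivatives $\partial^n$ because the defining Laplace integral \eqref{eq:Greens and Heat} converges uniformly on compact subsets away from the diagonal.

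The step I expect to be the main obstacle is the rigorous justification of the two interchanges hidden above: exchanging the infinite sum over $k$ with the inverse-Laplace contour integral, and exchanging it with the $\gamma$- and $t$-integrations. The first follows from the uniform geometric convergence of \eqref{eq: geom progression for DN^-1} on the line $\re m^2 = c$. The second needs the iterated $t$-convolutions of the kernels $\frac{1}{\sqrt{4\pi t}}K^\gamma(u,v|t)$ and $\Kp(u,v|t)$ to be absolutely convergent; away from the diagonal of $\gamma\times\gamma$ all of them are smooth and decay like $e^{-d(u,v)^2/4t}$ as $t\to 0$, so the only danger is near coincident interface points with some $t_i\to 0$. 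This is precisely where the definition of $\Kp$ matters: subtracting the semi-infinite cylinder contribution $\frac{1}{\sqrt{\pi}\,t^{3/2}}K^\gamma$ from $\partial^n_u\partial^n_v(K^{(1)}+K^{(2)})$ cancels the leading diagonal singularity, and the boundedness of $\DD'$ guaranteed by Assumption \ref{conj: bound} controls the remainder, so that the convolutions — and hence the whole series — are well defined.
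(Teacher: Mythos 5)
Your argument is exactly the proof the paper gives: substitute the geometric progression \eqref{eq: geom progression for DN^-1} (justified by Assumption \ref{conj: bound}) for $\DD^{-1}_{m^2}$ into the first gluing formula \eqref{eq: gluing heat kernel 1}, use $\LL^{-1}[A^{-1}] = \frac{1}{\sqrt{4\pi t}}K^\gamma$ and $\LL^{-1}[\DD'] = \Kp$, and convert products to convolutions term by term. If anything, your final paragraph about justifying the interchange of the $k$-sum with the contour and $\gamma\times t$ integrations is more careful than the paper's terse ``the argument above proves our main result,'' since the paper leaves those interchanges implicit.
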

Using again the factorization of heat kernels on direct products, the following is an immediate corollary: 
\begin{corollary}\label{cor: gluing formula 2}
Under the same assumptions as in Theorem \ref{thm: gluing formula}, we have:
\begin{multline}
\label{gluing formula 2}
    K(x,y|t) = \delta_{ab} K^{(a)}(x,y|t) + \\
    +\sum_{k\geq 0}(-1)^k\int_{\sum_{i=0}^{2k+2} t_i = t, t_i> 0}\prod_{i=0}^{2k+1}dt_i\int_{\gamma^{2k+2}}\prod_{i=0}^{2k+1} du_i\,
    \partial^n_{u_0}K^{(a)}(x,u_0|t_0) \cdot \\ 
    \cdot \prod_{i=0}^{k-1}\frac{1}{\sqrt{4\pi t_{2i+1}}}K^{\gamma}(u_{2i},u_{2i+1}|t_{2i+1})\Kp(u_{2i+1},u_{2i+2}|t_{2i+2})\cdot \\
    \frac{1}{\sqrt{4\pi t_{2k+1}}}\cdot K^{\gamma}(u_{2k},u_{2k+1}|t_{2k+1})\partial^n_{u_{2k+1}}K^{(b)}(u_{2k+1},y|t_{2k+2}).
\end{multline} 
\end{corollary}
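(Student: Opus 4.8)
The plan is to combine the integral gluing formula for the Green's function (\ref{eq: gluing green's functions}) with the geometric-progression expansion \eqref{eq: geom progression for DN^-1} of $\DD^{-1}_{m^2}$, and then pass to the inverse Laplace transform termwise. Concretely, start from the Green's function gluing formula, substitute
$\DD^{-1}_{m^2} = \sum_{k\geq 0}(-A^{-1}\DD')^k A^{-1}$, and observe that a generic term of the series is a product (in the sense of composition of operators on $L^2(\gamma)$, i.e. convolution of integral kernels over copies of $\gamma$) of the building blocks $A^{-1}$ and $\DD'$, sandwiched between $\partial^n_u G^{(a)}_{m^2}(x,\cdot)$ and $\partial^n_v G^{(b)}_{m^2}(\cdot,y)$. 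The $k$-th term has $k+1$ factors of $A^{-1}$ alternating with $k$ factors of $\DD'$, which accounts for the $2k+2$ points $u_0,\dots,u_{2k+1}$ on $\gamma$ and the $2k+3$ time variables $t_0,\dots,t_{2k+2}$ after inverting the Laplace transform.

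The key steps, in order: (1) Justify that the series \eqref{eq: geom progression for DN^-1} converges in operator norm under Assumption \ref{conj: bound}, so that it may be inserted into \eqref{eq: gluing green's functions} and the resulting sum over $k$ interchanged with the $\gamma\times\gamma$ integration; this is already done in the text preceding the statement (the bound $\|A^{-1}\DD'\|<\delta<1$ for $\operatorname{Re} m^2 > C$). (2) Record the inverse Laplace transforms of the two building blocks, both computed above: $\LL^{-1}[A^{-1}(u,v)](t) = \frac{1}{\sqrt{4\pi t}}K^\gamma(u,v|t)$ and $\LL^{-1}[\DD'(u,v)](t) = \Kp(u,v|t)$, together with $\LL^{-1}[\partial^n_u G^{(a)}_{m^2}(x,u)](t) = \partial^n_u K^{(a)}(x,u|t)$ from the proof of Proposition \ref{prop: gluing heat kernel 1}. (3) Apply $\LL^{-1}$ to the generic $k$-th term: since each term is a finite product of Laplace transforms (the spatial integrals over $\gamma$ commute with $\LL^{-1}$ by the uniform convergence already used in Proposition \ref{prop: gluing heat kernel 1}), property \eqref{eq: prod to conv} turns it into an iterated convolution, i.e. an integral over the simplex $\{t_0+\cdots+t_{2k+2}=t,\ t_i>0\}$ of the product of the corresponding heat-kernel factors. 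This yields exactly \eqref{gluing formula}. (4) Deduce Corollary \ref{cor: gluing formula 2} by substituting $K^{\gamma\times\RR}((u,0),(v,0)|t) = \frac{1}{\sqrt{4\pi t}}K^\gamma(u,v|t)$ into every odd-indexed factor of \eqref{gluing formula}.

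For the corollary itself, essentially nothing new is required beyond step (4): it is a purely cosmetic rewriting of Theorem \ref{thm: gluing formula} using the product formula for heat kernels on $\gamma\times\RR$, so the proof is a one-line substitution. The main obstacle is not in the corollary but lurks one level up, in Theorem \ref{thm: gluing formula}: one must be careful that the termwise inverse Laplace transform is legitimate, i.e. that the series obtained after applying $\LL^{-1}$ still converges and reproduces $\LL^{-1}$ of the sum. This is controlled by noting that for $\operatorname{Re} m^2$ large the $k$-th term is bounded by $\delta^k$ times a fixed kernel, so the series of Green's-function terms converges uniformly on compacts in $\{x\neq y\}$ and one may invert the Laplace transform of the sum by inverting term by term; the convolution structure on the simplex then ensures the resulting $t$-series is locally finite for each fixed $t>0$ because only finitely many simplices $\{\sum_{i=0}^{2k+2}t_i = t\}$ contribute any mass once one tracks the short-time behavior of the factors $\Kp$ and $\frac{1}{\sqrt{4\pi t}}K^\gamma$. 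Assuming Theorem \ref{thm: gluing formula}, which the excerpt states is proved by exactly this argument, the corollary then follows immediately.
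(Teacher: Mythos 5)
Your step (4) is exactly the paper's proof of the corollary: granting Theorem \ref{thm: gluing formula}, one substitutes the product factorization $K^{\gamma\times\RR}((u,0),(v,0)|t)=\frac{1}{\sqrt{4\pi t}}\,K^\gamma(u,v|t)$ into each odd-indexed factor and the statement follows immediately, which is precisely what the text does (``Using again the factorization of heat kernels on direct products\ldots''). One caveat on a side remark: your claim that after applying $\LL^{-1}$ the series in $k$ is ``locally finite\ldots because only finitely many simplices contribute any mass'' is not correct --- for every $k$ the simplex $\{\sum_{i=0}^{2k+2} t_i = t,\ t_i>0\}$ has positive Lebesgue measure, so every term in principle contributes. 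The actual mechanism for convergence (illustrated by the paper's remark in the two-interval example) is factorial suppression from the simplex volume $t^{n-1}/(n-1)!$ combined with bounds on the factors, not local finiteness. This flaw concerns your gloss on Theorem \ref{thm: gluing formula}, not the corollary itself, which, as you correctly say, is a one-line rewriting.
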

We remark that in \eqref{gluing formula 2}, the heat kernel $K$ of the glued manifold is expressed solely in terms of the heat kernels on $\gamma$ and $M^{(a)}$.
\begin{figure}[h]
    \centering
   \begin{tikzpicture}
   \node[bulk] (b1) at (-2,-1) {}; 
   \node[bulk] (b2) at (-1.5,1) {};
   \node[coordinate, label={$x$}] at (b1.west) {};
    \node[coordinate, label={$y$}] at (b2.west) {};
 \node[coordinate, label={$\gamma$}] at (0,-2.5) {};
  \node[coordinate, label={$M_1$}] at (-.5,1.7) {};
  \node[coordinate, label={$M_2$}] at (.5,1.7) {};
   \draw[bubu] (b1) -- (b2) node[midway,right] {$t$};
    \draw (0,2) -- (0,-2);
    \node at (1,0) {=};
    \begin{scope}[xshift = 5cm]
    \node[bulk] (b1a) at (-2,-1) {};
   \node[bulk] (b2a) at (-1.5,1) {};
   \node[coordinate, label={$x$}] at (b1a.west) {};
    \node[coordinate, label={$y$}] at (b2a.west) {};
      \node[coordinate, label={$M_1$}] at (-.5,1.7) {};
  \node[coordinate, label={$M_2$}] at (.5,1.7) {};
   \draw[bubu1] (b1a) -- (b2a) node[midway,right] {$t$};
    \draw (0,2) -- (0,-2);
     \node[coordinate, label={$\gamma$}] at (0,-2.5) {};

    \node at (1,0) {$+\sum_{k\geq 0}$};
     \begin{scope}[xshift = 4cm]
     \node[bulk] (b1b) at (-2,-1) {};
   \node[bulk] (b2b) at (-1.5,1) {};
   \node[coordinate, label={$x$}] at (b1b.west) {};
    \node[coordinate, label={$y$}] at (b2b.west) {};
    \node[coordinate, label={$M_1$}] at (-.5,3.3) {};
  \node[coordinate, label={$M_2$}] at (.5,3.3) {};
   \node[bdry] (b3b) at (0,-3) {};
   \node[bdry] (b5b) at (0,-2) {};
   \node[bdry] (b6b) at (0,-1) {};
   \node[bdry] (b7b) at (0,0) {};
   \node[bdry] (b8b) at (0,1) {}; 
   \node[bdry] (b9b) at (0,2) {};
   \node[bdry] (b4b) at (0,3) {}; 
    \node[coordinate, label=right:{$u_0$}] at (b3b.east) {};
        \node[coordinate, label=right:{$u_1$}] at (b5b.east) {};
    \node[coordinate, label=right:{$u_2$}] at (b6b.east) {};
    \node[coordinate, label=right:{$u_3$}] at (b7b.east) {};
    \node[coordinate, label=right:{$u_{i}$}] at (b8b.east) {};
    \node[coordinate, label=right:{$u_{2k}$}] at (b9b.east) {};
    \node[coordinate, label=right:{$u_{2k+1}$}] at (b4b.east) {};
           \draw[bubo] (b1b) -- (b3b) node[midway,above] {$t_0$};
           \draw[bobo2](b3b) -- (b5b)node[midway,right] {$t_1$}; 
            \draw[bobo3](b5b) -- (b6b)node[midway,right] {$t_2$}; 
            \draw[bobo2](b6b) -- (b7b)node[midway,right] {$t_3$};
\draw[dotted] (b7b) to[dashed, bend left=60] (b8b) {};
\draw[dotted] (b8b) to[dashed, bend right=60] (b9b) {};
\draw[bobo2] (b9b) -- (b4b)node[midway,right] {$t_{2k+1}$};
           \draw[bubo](b4b) -- (b2b) node[midway,above] {$t_{2k+2}$};
    \draw (0,3.3) -- (0,-3.3);
     \node[coordinate, label={$\gamma$}] at (0,-3.7) {};
    \end{scope}
    \end{scope}
\end{tikzpicture}
    \caption{Pictorial description of the gluing formula for the heat kernel. The fat straight line to the left denotes $K(x,y|t)$, the thin straight line in the middle $K^{(1)}(x,y|t)$, a red squiggly edge denotes $\partial^n_u K^{(1)}(x,u|t)$. Blue wavy edges denote $\frac{1}{\sqrt{4 \pi t}}K^\gamma(u,v|t)$, green zigzag edges denote $K'(u,v|t)$.  Grey vertices are integrated over the interface $\gamma$.}
    \label{fig: gluing formula 2}
\end{figure}
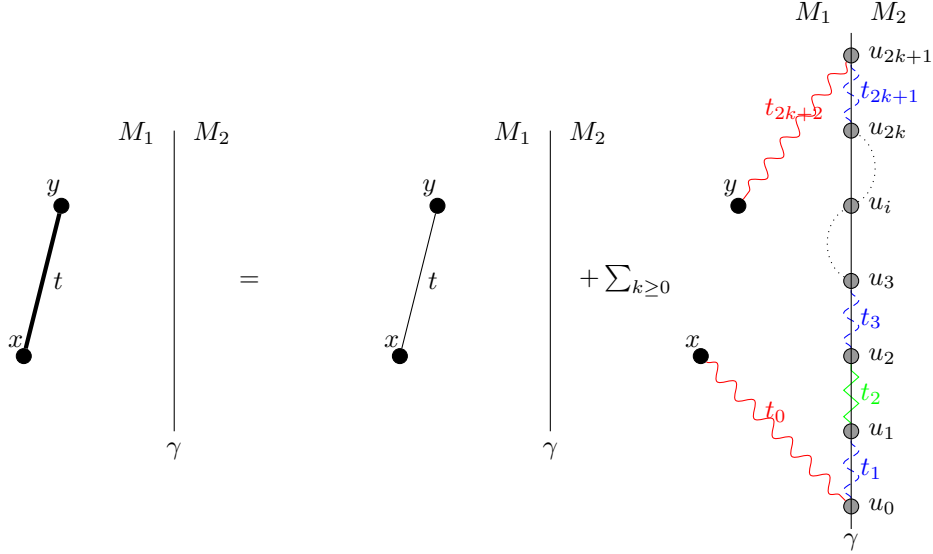
\subsection{``Cutting-out'' formulae for the heat kernel}\label{ss: cutting formulae}
    Results of a similar nature to equation \eqref{eq: gluing green's functions} are known in the literature on scattering theory as Birman-Krein formulae, see for the recent paper \cite{HSW} and references therein. These results are ``cutting formulae,'' i.e., they allow one to express the Green's function with Dirichlet boundary conditions on a closed region 
    $M_1 
    \subset M$ in terms of the full Green's function on $M$.\footnote{ 
    A typical example is: $M=\RR^n$ and $M_1=\overline{\RR^n\setminus M_2}$, with $M_2\subset \RR^n$ a compact subset (``defect'') with smooth $(n-1)$-dimensional  boundary $\gamma$. This setup is pertinent in scattering theory -- scattering on the defect $M_2$.}
    For instance, formula (19) in \cite{HSW} reads (in our notation) 
    \begin{equation}
        G_{m^2}(x,y) - \delta_{ab}G^{(a)}_{m^2}(x,y) = \int_{\gamma \times\gamma}G_{m^2}(x,u)\DD_{m^2}(u,v)G_{m^2}(v,y). \label{eq: cutting}
    \end{equation}
    By Proposition \ref{prop: kernel DN inverse}, $\DD_{m^2}$ can be expressed as the inverse of the operator given by restricting $G_{m^2}$ to $\gamma \times \gamma$, so the right hand side of equation \eqref{eq: cutting} can be expressed in terms of the full Green's function only. Applying the inverse Laplace transform, we obtain a ``cutting formula'' for the heat kernel: 
    \begin{multline*}
        K(x,y|t) - \delta_{ab}K^{(a)}(x,y|t)  = \\ 
        =\int_{t_0+t_1+t_2 = t, t_i > 0}dt_0dt_1\, \int_{\gamma \times \gamma} dudv\, K(x,u|t_0)\LL^{-1}[\DD_{m^2}(u,v)](t_1)K(v,y|t_2).
    \end{multline*}
    Here the term $\LL^{-1}[\DD_{m^2}]$ could be expanded similarly to the proof of Theorem \ref{thm: gluing formula}:\footnote{
    This sequence is convergent for $\re m^2>C$ under Assumption \ref{conj: bound} for $M_{1,2}$: one has 
    ${||A(\DD^{-1}-A^{-1})||}=||-\DD'(A+\DD')^{-1}||\leq ||\DD'|| \cdot ||(A+\DD')^{-1}||< 2\delta |m|\frac{1}{2|m|-2\delta |m|}=\frac{\delta}{1-\delta}$. Here $\DD'$ is as in (\ref{DD'}). So, choosing e.g. $\delta=\frac12$ we obtain $||A(\DD^{-1}-A^{-1})||<1$, hence the geometric series  (\ref{DD geometric series}) is convergent.
    }
    \begin{equation}\label{DD geometric series}
        \DD=(A^{-1}+(\DD^{-1}-A^{-1}))^{-1}=A-A(\DD^{-1}-A^{-1})A+A(\DD^{-1}-A^{-1})A(\DD^{-1}-A^{-1})A-\cdots
    \end{equation}
    Hence, one has 
    \begin{equation*}
        \LL^{-1}[\DD^{-1}]=\sum_{k\geq 0} (\LL^{-1}[A]*\LL^{-1}[\DD^{-1}-A^{-1}])^{*k} *\LL^{-1}[A],
    \end{equation*}
    where $*$ stands for the convolution in the $t$ variable. Passing to the integral kernels of the operators, this implies the second cutting formula
    \begin{multline}
         K(x,y|t) - \delta_{ab}K^{(a)}(x,y|t)  =\\
         =\sum_{k\geq 0}(-1)^k\int_{\sum_{i=0}^{2k+2}t_i=t,t_i>0} 
         \prod_{i=0}^{2k+1} dt_i \int_{\gamma^{2k+2}}\prod_{i=0}^{2k+1} du_i \cdot \\ \cdot K(x,u_0|t_0)\prod_{i=0}^{k-1}\frac{1}{\sqrt{\pi} t_{2i+1}^{3/2}}K^\gamma(u_{2i},u_{2i+1}|t_{2i+1}) \til{K}(u_{2i+1},u_{2i+2}|t_{2i+2})\cdot \\
         \cdot \frac{1}{\sqrt{\pi}t_{2k+1}^{3/2}} K^\gamma(u_{2k},u_{2k+1}|t_{2k+1}) K(u_{2k+1},y|t_{2k+2}),\label{eq: cutting II}
    \end{multline}
    where we denoted
    \begin{equation*}
        \til{K}(u,v|t) \colon = K(u,v|t)-\frac{1}{\sqrt{4\pi t}} K^\gamma(u,v|t).
    \end{equation*}
    Formula (\ref{eq: cutting II}) is similar in structure to the gluing formula (\ref{gluing formula 2}).
    

Note that equation \eqref{eq: cutting II} expresses the heat kernel with Dirichlet boundary conditions  in terms of the full heat kernel (and heat kernel on $\gamma$) only.

\section{Examples}\label{sec: examples}
\subsection{Gluing two rays}
The heat kernel on $\RR$ is 
\begin{equation*}
K(x,y|t) = \frac{1}{\sqrt{4\pi t}}e^{-\frac{(x-y)^2}{4t}}    
\end{equation*}
and the heat kernel on $\RR_{\geq 0}$ with Dirichlet boundary conditions is 
\begin{equation*}
    K^{(1)}(x,y|t) = \frac{1}{\sqrt{4\pi t}}\left(e^{-\frac{(x-y)^2}{4t}} - e^{-\frac{(x+y)^2}{4t}}\right).
\end{equation*}
The normal derivative 
of the heat kernel is 
\begin{equation*}
    - \partial_yK^{(1)}(x,y|t)\rvert_{y=0} = \frac{ -1}{2\sqrt{\pi}t^{3/2}}x e^{-\frac{x^2}{4t}}.
\end{equation*}
The Dirichlet-to-Neumann operator is $2m = 2 \sqrt{\Delta^\gamma + m^2} = A$, with $A^{-1} = (2m)^{-1}$ and its inverse Laplace transform is 
\begin{equation*}
    \LL^{-1}[(2m)^{-1}]  = \frac12\LL^{-1}[s^{-1/2}](t) =  \frac{1}{2\Gamma(1/2)t^{1/2}} = \frac{1}{\sqrt{4\pi t}}=K(0,0|t).
\end{equation*}
Equation \eqref{eq: gluing heat kernel 1} then becomes for $x,y \geq 0$
\begin{equation*}
\begin{aligned}
K(x,y|t) - &K^{(1)}(x,y|t) = \frac{1}{\sqrt{4\pi t}}e^{\frac{(x+y)^2}{4t}} \\
&= \int_{t_0+t_1+t_2=t, t_i > 0}dt_0dt_1 \underbrace{\frac{ -1}{\sqrt{4\pi t^3_0}}xe^{-\frac{x^2}{4t_0}}}_{\partial^n_uK_1(x,u|t_0)} \underbrace{\frac{1}{\sqrt{4\pi t_1}}}_{\LL^{-1}[\DD^{-1}]} \underbrace{\frac{ -1}{\sqrt{4\pi t_2^3}} ye^{-\frac{y^2}{t_2}}}_{\partial^n_vK_1(v,y|t_0)} \\
&= \frac{1}{8\pi^{3/2}}\int_{t_0+t_1+t_2=t, t_i > 0}dt_0dt_1\frac{xy\, e^{-\frac{x^2}{4t_0}-\frac{y^2}{4t_2}}}{\sqrt{t_0^3t_1t_2^3}}.
\end{aligned} 
\end{equation*}
\subsection{Gluing two finite intervals}
The heat kernel on a finite interval of length $L$ is 
\begin{equation}
\begin{aligned} \label{K interval}
    K^{L}(x,y|t)&= \frac{1}{\sqrt{4\pi t}}
    \sum_{k\in \ZZ} \left( e^{-\frac{1}{4t} (x-y+2k L)^2}- e^{-\frac{1}{4t} (x+y+2k L)^2} \right)
    \\ 
    &= \frac{2}{L}\sum_{k = 1}^\infty e^{-\frac{\pi^2k^2}{L^2} t}\sin\left(\frac{\pi k x}{L}\right)\sin\left(\frac{\pi k y}{L}\right).
\end{aligned} 
\end{equation}
Here the first equation is proved by the image charge method, and the second one by considering eigenfunctions of the Laplacian (with Dirichlet boundary conditions) on the interval $[0,L]$. Its normal derivative at 0 
is
\begin{equation}\label{interval normal derivative}
\begin{aligned}
    - \partial_y \rvert_{y=0} K^L(x,y|t) &= 
    \frac{ -1}{\sqrt{4\pi}\, t^{3/2}}\sum_{k\in \ZZ} (x+2kL) e^{-\frac{1}{4t}(x+2kL)^2}
    \\
    &=  -\frac{2}{L}\sum_{k=1}^\infty  e^{-\frac{\pi^2k^2}{L^2}  t}\frac{\pi k}{L}\sin\left(\frac{\pi k x}{L}\right).
\end{aligned} 
\end{equation}
The Dirichlet-to-Neumann operator is $\DD = m(\coth mL_1 + m\coth mL_2)$ we can compute the inverse Laplace transform of its inverse via residues, 
\begin{equation*}
\begin{aligned}
    \LL^{-1} [\DD^{-1}] &= \frac{1}{2\pi \ii}\int_C \frac{e^{st}ds}{\sqrt{s}(\coth{\sqrt{s}L_1}+ \coth \sqrt{s}L_2)} \\
    &= \frac{1}{2\pi i} \int_C \frac{e^{st } ds\,\sinh \sqrt{s}L_1\sinh \sqrt{s}L_2}{\sqrt{s}\sinh \sqrt{s}(L_1 + L_2)} .
\end{aligned}
\end{equation*}
Here the contour $C$ can be taken to be the imaginary axis, and we can close it at infinity,  adding an infinitely remote arc with $\mr{arg}(s)\in [\frac{\pi}{2},\frac{3\pi}{2}]$. Notice that the integrand is an even function of $\sqrt{s}$ and therefore can be extended over the branch cut of the square root. 
 This function has poles at $s = -\frac{(\pi k)^2}{(L_1 + L_2)^2}$ for $k =1,2,3,\ldots .$ 
 So, computing the  residues, the inverse Laplace transform is  
 \begin{equation*}
 \begin{aligned}
     \LL^{-1} [\DD^{-1}](t) = 2\sum_{k\geq 1} \frac{(-1)^{k+1}}{L_1 + L_2} e^{-\frac{\pi^2k^2 t}{(L_1+L_2)^2}}\sin\frac{\pi k L_1}{L_1 + L_2}\sin\frac{\pi k L_2}{L_1 + L_2}\\
     = \frac{1}{\sqrt{4\pi t}} \sum_{n\in\ZZ} \left(e^{-\frac{(L_1+L_2)^2 n^2}{t}} - 
     e^{-\frac{((n+1)L_1+n L_2)^2}{t}}\right).
 \end{aligned}
 \end{equation*}
 The second line here is obtained by Poisson resummation of the first.
 The gluing formula (\ref{eq: gluing heat kernel 1}) then becomes (for $0\leq x,y \leq L_2$)
 \begin{multline*}
     K^L(L_1+x,L_1+y|t) - K^{L_2}(x,y|t) = \frac{1}{4\pi}\int_{t_0+t_1+t_2 = t, t_i> 0}dt_0dt_1\frac{1}{(t_0t_2)^{3/2}} \\
     \sum_{k_0\in \ZZ} \left(x+ 2 k_0 L_2\right)e^{-\frac{1}{4t_0}\left(x + 2k_0 L_2\right)^2}
     \cdot 2 \sum_{k_1\geq 1} \frac{(-1)^{k_1+1}}{L_1 + L_2} e^{-\frac{\pi^2k_1^2 t_1}{(L_1+L_2)^2}}\sin\frac{\pi k_1 L_1}{L_1 + L_2}\sin\frac{\pi k_1 L_2}{L_1 + L_2}\\ 
     \cdot \sum_{k_2\in \ZZ} \left(y+ 2k_2 L_2\right)e^{-\frac{1}{4t_2}\left(y + 2k_2 L_2\right)^2}.
 \end{multline*}
 
Alternatively, we can apply the decomposition \begin{equation}\DD = \underbrace{2m}_A + \underbrace{m(\coth (mL_1) -1 + \coth (mL_2) -1 )}_{\DD'}.\end{equation}
One has 
\begin{multline*}
 \LL^{-1}[m (\coth (mL)-1)]=\dd^n_x|_{x=0}\dd^n_y|_{y=0}(K^L(x,y|t)-K^\infty(x,y|t))\\
 =2\sum_{k\geq 1} \frac{1}{\sqrt{4\pi} t^{3/2}} \left(1-\frac{2k^2L^2}{t}\right) e^{-\frac{k^2L^2}{t}}  
\end{multline*}
and
\begin{equation} \label{eq35}
     \LL^{-1}\left[\frac{1}{2m}\right]*\LL^{-1}[m (\coth (mL)-1)]=-2\sum_{k\geq 1} \frac{k L}{\sqrt{4 \pi} t^{3/2}} e^{-\frac{k^2 L^2}{t}},
\end{equation}
 where the star stands for convolution in $t$. Therefore, the gluing formula (\ref{gluing formula}) becomes
 \begin{multline}\label{gluing of intervals 36}
       K^L(L_1+x,L_1+y|t) - K^{L_2}(x,y|t) = \sum_{n=0}^\infty\int_{t_0+\cdots+t_{n+2}=t,\, t_i> 0} dt_0\cdots dt_{n+1} \\
       \frac{1}{\sqrt{4\pi}t_0^{3/2}}\sum_{k_0\in \ZZ} (x+2k_0L_2) e^{-\frac{1}{4t_0}(x+2k_0 L_2)^2} \sum_{k_1,\ldots,k_n\geq 1} \prod_{j=1}^n\frac{k_j}{\sqrt{\pi} t_j^{3/2}} (L_1 e^{-\frac{k_j^2 L_1^2}{t_j}}+L_2 e^{-\frac{k_j^2 L_2^2}{t_j}})\cdot \\
       \cdot \frac{1}{\sqrt{4\pi t_{n+1}}} \cdot  \frac{1}{\sqrt{4\pi}t_{n+2}^{3/2}}\sum_{k_{n+2}\in \ZZ} (y+2k_{n+2}L_2) e^{-\frac{1}{4t_{n+2}}(y+2k_{n+2} L_2)^2}.
 \end{multline}

\begin{remark}
 The sum over $n$ in (\ref{gluing of intervals 36}) is absolutely convergent by the following argument. Denote minus the r.h.s. of (\ref{eq35}) by $\Phi^{L_1}(t)$. We are summing up $n$-fold convolutions of $\Phi=\Phi^{L_1}+\Phi^{L_2}$ which is a positive bounded function of $t$ (it is smooth, behaves as\footnote{Notice that as $t\to \infty$ one can bound the sum by
$$ \sum_{k\geq 1} kL\ e^{-\frac{k^2L^2}{t}} \leq \int_0^\infty kL\ e^{-\frac{k^2L^2}{t}} dk 
 = \frac{t}{2L}. $$ For $t\to 0$, we use $k^2 \geq k$ to bound 
 $$ \sum_{k\geq 1} kL\ e^{-\frac{k^2L^2}{t}} \leq \sum_{k\geq 1} kL (e^{-\frac{L^2}{t}})^k = L\frac{e^{-\frac{L^2}{t}}}{\left(1 -e^{-\frac{L^2}{t}}\right)^2 } \sim_{t \to 0} O\left(e^{-\frac{L^2}{t}}\right).$$  } $O\left(t^{-3/2} e^{-\frac{\mr{min}(L_1,L_2)^2}{t}}\right)$ at $t\ra 0$ and as $O(t^{-1/2})$ at $t\ra +\infty$). Hence, its $n$-fold convolution is positive and bounded by $C^n t^{n-1}/(n-1)!$ where $C=\mr{sup}_{t>0} \Phi(t)$ and $t^{n-1}/(n-1)!$ is the volume of $(n-1)$-simplex of size $t$. Hence, the sum over $n$ is absolutely convergent.

 On the other hand, the geometric progression (\ref{eq: geom progression for DN^-1}) in this case is only convergent for $m$ large enough, such that $|\coth (mL_1)+\coth(mL_2)-2|<2$.
 \end{remark}
 \subsection{Half-spaces}
The gluing formula for half-spaces $\RR_{\geq 0} \times \RR^{n-1}$ follows from the gluing of rays and the product and convolution properties of the heat kernel. {
Note that in this example, $\gamma = \RR^{n-1}$ is not compact. }
Namely, let 
\begin{equation*}
    K^n(x,y|t) = \frac{1}{(4\pi t)^{n/2}}e^{- \frac{|x-y|^2}{4t}}
\end{equation*}
be the heat kernel on $\RR^n$ and 
\begin{equation*}
    K^{(1)}((x_1,x), (y_1,y)|t) = \frac{1}{(4\pi t)^{n/2}}\left(e^{\frac{-(x_1-y_1)^2}{4t}} - e^{\frac{-(x+y)^2}{4t}}\right)e^{-\frac{|x-y|^2}{4t}}
\end{equation*}
be the heat kernel on $\RR
_{\geq 0} \times \RR^{n-1}$. Its normal derivative at $y=0$ is 
\begin{equation*}
-\partial_{y_1}K^{(1)}((x_1,x),(y_1,y)|t)\rvert_{y=0} = \frac{-1}{2\sqrt{\pi}t^{3/2}}x_1 e^{-\frac{x_1^2}{4t}} K^{n-1}(x,y|t) .
\end{equation*}
The gluing formula \eqref{eq: gluing heat kernel 1} then becomes 
\begin{multline*}
    K^n((x_1,x),(y_1,y)|t) - K^{(1)}((x_1,x),(y_1,y)|t)= \\ =  \frac{1}{8\pi^{3/2}}\int_{t_0+t_1+t_2=t, t_i > 0}dt_0dt_1\frac{xy\, e^{-\frac{x^2}{4t_0}-\frac{y^2}{4t_2}}}{\sqrt{t_0^3t_1t_2^3}} \cdot \\
    \cdot \underbrace{\int_{u,v \in \RR^{n-1}}dudv\, K^{n-1}(x,u|t_0)K^{n-1}(u,v|t_1)K^{n-1}(v,y|t_2)}_{=K^{n-1}(x,y|t_0+t_1+t_2) \text{ by convolution}} \\ 
    = \frac{1}{\sqrt{4\pi t}}e^{-\frac{(x_1+y_1)^2}{4t}}K^{n-1}(x,y|t).
\end{multline*}
\subsection{Cylinders}
Let $M_1=I_1\times \gamma$ and $M_2=I_2\times \gamma$ be two cylinders, with $I_1=[0,L_1]$, $I_2=[L_1,L_1+L_2]$ two intervals and $\gamma$ a Riemannian manifold (
{not necessarily compact}). The glued cylinder is $M=M_1\cup_{\{L_1\}\times \gamma}M_2= I\times \gamma$, where 
$I=I_1\cup I_2=[0,L_1+L_2]$ is the glued interval.
Formula (\ref{gluing formula 2}), using the factorization of heat operators $K^{I\times M}(t)=K^I(t)\otimes K^\gamma(t)$ becomes
\begin{multline*}
    K^M(t)|_{M_a\times M_b}=K^I(t)|_{I_a\times I_b}\otimes K^\gamma(t)=
    \delta_{ab}K^{I_a}(t)\otimes K^\gamma(t)+\\
    +\sum_{n\geq 0} \left((-1)^n \int_{t_0+\cdots t_{2n+2}=t,\, t_i>0} \dd^n K^{I_a}(t_0)\prod_{i=0}^{n-1}\Big(\frac{1}{\sqrt{4\pi t_{2i+1}}}K^{'I}(t_{2i+2})\Big) \cdot \right.\\
    \left. \cdot \frac{1}{\sqrt{4\pi t_{2n+1}}} (\dd^n K^{I_b}(t_{2n+2}))^*\right)
    \otimes \underbrace{(K^\gamma(t_1)\cdots K^\gamma(t_{2n+2}))}_{K^\gamma(t)}.
\end{multline*}
Here:
\begin{itemize}
\item $a,b\in \{1,2\}$; 
\item $(\cdots)|_{M_a\times M_b}$ stands for the operator $C^\infty(M_b)\ra C^\infty(M_a)$ defined by the restriction of the kernel of the operator $(\cdots)$ to $M_a\times M_b$; 
\item $\dd^n K^{I_a}(t)\colon C^\infty(\{L_1\})\ra C^\infty(I_a)$ is the 
operator defined by the kernel (\ref{interval normal derivative}) on the interval $I_a$ and $(\dd^n K^{I_a}(t))^*\colon C^\infty(I_a)\ra C^\infty(\{L_1\})$ is its dual (transpose); 
\item
$\displaystyle
    K^{'I}(t)=\sum_{k\geq 1}\sum_{a=1}^2\frac{1}{\sqrt{\pi} t^{3/2}}
    \left(1-\frac{2 k^2 L_a^2}{t}\right) e^{-\frac{k^2 L_a^2}{t}} .
$
\end{itemize}

Thus, the gluing formula in the case of cylinders amounts to the gluing formula for intervals, tensored with the heat operator on the slice $\gamma$ (making use of the convolution property of the latter).


\appendix

\section{Motivation: Cutting and gluing in renormalized quantum field theory}\label{sec: motivation}
Our main motivation comes from heat kernel renormalization of perturbative quantum field theory. Namely, consider the massive scalar field theory on a Riemannian manifold $M$: the space of fields is $F_M = C^\infty(M)$ and the action functional is 
\begin{equation*}
    S_M(\phi) = \int_M \frac{1}{2}d\phi\wedge * d\phi + \frac{m^2}{2}\phi * \phi + * p(\phi)
\end{equation*}
with $*$ the Hodge star operator on $M$ and $p(\phi) = \sum_{k=3}^N \frac{1}{k!}p_k \phi^k$ a polynomial, $m >0$ is a parameter 
(interpreted as mass). We are interested in the cutting and gluing behaviour of its perturbative partition function, given by
\begin{equation*}
        Z^{pert}_M(\hbar) \text{``=''} \lim_{\hbar\to 0} \int_{F_M}e^{-\frac{1}{\hbar}S_M(\phi)}D\phi := \frac{1}{\det_\zeta^{1/2}(\Delta + m^2)}\sum_{\Gamma}\frac{\hbar^{-\chi(\Gamma)
        }}{|\mr{Aut}(\Gamma)|}F(\Gamma) . 
\end{equation*}
Here in the right hand side
\begin{itemize}
    \item $\Delta$ denotes the positive Laplace operator on $M$ and $\det_\zeta$ the zeta-regularized determinant, 
    \item $\Gamma$ runs over all graphs which are at least trivalent and at most $N$-valent, 
    \item For a graph $\Gamma$: 
    \begin{itemize}
        \item $\chi(\Gamma)=|V_\Gamma|-|E_\Gamma| \leq 0$ is its Euler characteristic,
        \item $\mr{Aut}(\Gamma)$ its automorphism group, 
        \item $F(\Gamma)$ denotes its Feynman weight, given by 
        \begin{equation}
            F(\Gamma) = \int_{(x_1,\ldots x_{|V_\Gamma|})\in M^{|V_\Gamma|}}\prod_{v\in V(\Gamma)}(-p_{val(v)}) \prod_{e=(v_i,v_j) \in E(\Gamma)} G(x_i,x_j) dx_1\cdots dx_j. \label{eq: Feynman}
        \end{equation}
    \end{itemize}
\end{itemize}
Here $G \in C^\infty(M \times M \setminus \mr{diag})$ denotes the Green's function of the Helmholtz operator $\Delta + m^2$. Close to the diagonal, the Green's function behaves like 
\begin{equation*}
G(x,y) \simeq_{d(x,y) \to 0}     \begin{cases} O(1), & \dim M = 1, \\ 
C_2 \log d(x,y), &\dim M = 2, \\
C_{\dim M}\frac{1}{d(x,y)^{\dim M -2}}, & \dim M > 2 ,\\
    \end{cases} 
\end{equation*}
which means that some of the integrals given by \eqref{eq: Feynman} diverge if $\dim M > 1$. In dimension 2, the only divergent integrals are the ones from graphs containing an edge starting and ending at the same vertex (a short loop) as it naively leads to the evaluation of $G$ on the diagonal (where it is singluar). In \cite{KMW}, we showed how to regularize these in such a way that the result is compatible with cutting and gluing of $M$. However, for $\dim M \geq 3$ there are nontrivial divergent graphs and a variety of ways exist to make sense of them. For instance, if $K(x,y|t)$ denotes the heat kernel of the Laplace-Beltrami operator $\Delta$, then a naive solution to the problem of divergencies is to introduce a regulator $\epsilon$ and define 
\begin{equation*}
    G^\epsilon(x,y) = \int_\epsilon^\infty e^{-m^2 t}K(x,y|t) dt. 
\end{equation*}
Then $G^\epsilon \in C^\infty(M \times M)$, and thus at least on compact manifolds the regularized integrals $F^\epsilon(\Gamma)$, where we replace $G$ with $G^\epsilon$, are convergent. The next problem is to analyze the behaviour of the $F^\epsilon(\Gamma)$ as $\epsilon \to 0$, trying to consistently remove the divergencies using a finite amount of choices -- this leads to some restrictions on $\dim M$ and the interaction polynomial $p$ as to when this is possible.\footnote{A relatively recent write-up aimed a mathematical audience is the textbook by Costello \cite{Costello}. }

In \cite{KMW} we proved a gluing formula for the perturbative partition function of the form $\langle Z^{pert}_{M_1},Z^{\pert}_{M_2}\rangle = Z^{pert}_{M_1 \cup_\gamma M_2}$ in the case where $\dim M_i =2$. The eventual goal is to prove a similar formula in higher dimensions that is compatible with renormalization. The first step in this direction is to prove a gluing formula for the regularized partition functions
\begin{equation*}
    Z^{\pert, \epsilon}_M = \frac{1}{\det_\zeta(\Delta + m^2)^\frac12}\sum_{\Gamma}\frac{
    \hbar^{-\chi(\Gamma)}
    }{|\mr{Aut}(\Gamma)|}F^\epsilon(\Gamma).
\end{equation*}
A critical first step, covered in this paper, is to understand the behaviour of the heat kernel under gluing. 

\section{Proof of Proposition \ref{prop: kernel DN inverse}}\label{app: Proof of Prop}
We want to prove that for a Riemannian manifold decomposed as $M = M_1 \cup_\gamma M_2$ we have for $x,y \in \gamma$ that
$$(\DD^{\gamma,M}_{m^2})^{-1}(x,y) = G_{m^2}(x,y)$$ 
with $G_{m^2}$ denoting the Green's function on $M$. \\
Let $\eta \in C^\infty(\gamma)$, and define 
$$u(x) = \int_{y \in \gamma}G_{m^2}(x,y)\eta(y)dy.$$
Then $u \in L^2(M)$ is a smooth  function on $M \setminus \gamma$, and satisfies $(\Delta + m^2)u(x) = 0 $ for $x \in M \setminus \gamma$. For any function $\phi \in C^\infty(M)$ we have 
$$\int_M u(\Delta + m^2)\phi\, dx  = \int_{y\in \gamma} \eta(y) \int_{x \in M} G_{m^2}(y,x)(\Delta+m^2)\phi(x) dx dy = \int_\gamma \phi \eta\, dy,$$
by the definition of the Green's function. On the other hand, we have 
\begin{align*}
\int_M u (\Delta + m^2)\phi dx &= \int_{M_1} u (\Delta + m^2)\phi\, dx + \int_{M_2} u (\Delta + m^2)\phi\, dx \\ 
&= \int_{M_1} \left(u (\Delta + m^2)\phi - \phi (\Delta + m^2) u\right) dx \\
&+ \int_{M_2} \left(u (\Delta + m^2)\phi - \phi (\Delta + m^2) u\right) dx \\
&= \int_{M_1} \left(u \Delta \phi - \phi \Delta u \right)dx + \int_{M_2} \left(u \Delta \phi - \phi \Delta u\right) dx
\end{align*}

where in the second equality we have used that $u$ is $(\Delta + m^2)$-harmonic on ${M}^\circ_1 \sqcup M^\circ_2$. 
Now, using Green's formula we can rewrite this as 
\begin{multline*}
    \int_{M_1} (u \Delta \phi - \phi \Delta u) dx + \int_{M_2} (u \Delta \phi - \phi \Delta u) dx =\\
    =
    \int_\gamma (-u \partial_{M_1}^n \phi + \phi \partial^n_{M_1} u - u \partial^n_{M_2} \phi + \phi \partial^n_{M_2}) dy. 
\end{multline*}
Since $\phi \in C^\infty(M)$, it is in particular smooth across $\gamma$ and $\partial^n_{M_1}\phi + \partial^n_{M_2}\phi = 0$. Therefore, we finally arrive at 
$$\int_{\gamma} \phi \eta  \, dy = \int_M u (\Delta + m^2)\phi\, dy = \int_\gamma \phi \DD^{\gamma,M}_{m^2}u\, dy, $$ 
from which the statement follows. 

\section{Proof of Lemma \ref{lemma: Assump for product metric}} \label{appendix: proof of Lemma}
Let us represent $M$ as a gluing of a cylindrical collar of the boundary (with product metric), $\mr{cyl}=\gamma\times [0,\epsilon]$ (with the boundary components $\gamma=\gamma\times\{0\}$ and $\gp=\gamma\times\{\epsilon\}$),
and the complement $\Mp\subset M$:
$$M= 
\mr{cyl}\cup_{\gp}
\Mp,$$
see Figure \ref{fig: App A}.
\begin{figure}[h]
    \centering
    \includegraphics[scale=0.8]{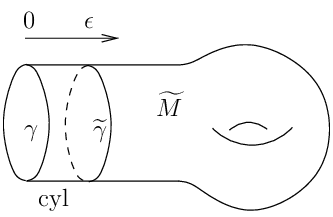}
    \caption{Decomposition of $M$ into a  cylindrical collar of the boundary and a complement $\til{M}$}
    \label{fig: App A}
\end{figure}
By the gluing formula for DN operators (which follows from the gluing formula (\ref{eq: gluing green's functions}) for Green's functions by taking the second normal derivative at the boundary), one has
\begin{equation}\label{DN gluing}
    \DN^{\gamma,M}= \DN^\mr{cyl}_{\gamma,\gamma} + \DN^\mr{cyl}_{\gamma,\gp} (\DD^{\gp,M})^{-1}  \DN^\mr{cyl}_{\gp,\gamma} .
\end{equation}
Here the subscripts $\gamma,\gp$ refer to a block of the $\DN$ operator; $\DD^{\gp,M}=\DN^{\mr{cyl}}_{\gp,\gp}+\DN^{\gp,\Mp}$ is the total $\DN$ operator on the interface $\gp$, as in (\ref{DN total}). We understand that all $\DN$ operators depend on $m^2$ and we temporarily suppress the $m^2$ subscripts. Subtracting $\sqrt{\Delta^\gamma+m^2}$ from both sides of (\ref{DN gluing}), we have
\begin{equation}\label{DN' gluing}
     (\DN^{\gamma,M})'= (\DN^\mr{cyl}_{\gamma,\gamma})' + \DN^\mr{cyl}_{\gamma,\gp} (\DD^{\gp,M})^{-1}  \DN^\mr{cyl}_{\gp,\gamma} .
\end{equation}
All the operators in the r.h.s. are bounded uniformly in $m$, for $\re m^2>c$ for any $c>0$:
\begin{itemize}
    \item $(\DN^\mr{cyl}_{\gamma,\gamma})'$ has eigenvalues $\mu_k (\coth(\epsilon \mu_k)-1)$ in the notations of Example \ref{example: cylinder}, which implies boundedness uniformly in $m$.
    \item Similarly, $(\DN^\mr{cyl}_{\gamma,\gp})'$ and $(\DN^\mr{cyl}_{\gp,\gamma})'$ have eigenvalues $-\frac{\mu_k}{\sinh(\epsilon\mu_k)}$ which are also bounded uniformly in $m$.\footnote{More explicitly, one has $|\mu_k (\coth(\epsilon \mu_k)-1)|, |\frac{\mu_k}{\sinh(\epsilon \mu_k)}| < \frac{1}{\epsilon}$ for any $m$ with $\re m^2> 0$ and any $k=0,1,2,\ldots$, which implies $|| (\DN^\mr{cyl}_{\gamma,\gamma})'||, ||(\DN^\mr{cyl}_{\gamma,\gp})' ||, || (\DN^\mr{cyl}_{\gp,\gamma})'|| <\frac{1}{\epsilon}$ for $\re m^2>0$.}
    \item The integral kernel  of the interface $\DN$ operator $(\DD^{\gp,M})^{-1}$ is the restriction of the Green's function on the total manifold $M$ to pairs of points on the interface $\gp$.\footnote{See e.g. the proof of Theorem 2.1 in \cite{Carron}.} Using this and the expression for the Green's function on $M$ via heat kernel, we have the equality of operators
    \begin{equation}\label{DNinv via kappa_t}
        (\DD^{\gp,M})^{-1}=\int_0^\infty dt\, e^{-m^2 t} \kappa_t,
    \end{equation}
    where $\kappa_t\colon C^\infty(\gp)\ra C^\infty (\gp)$ is the operator defined by the integral kernel -- the heat kernel on $M$, $K^M(x,y|t)$ restricted to $(x,y)\in \gp\times\gp$. Taking the operator norm of both sides of (\ref{DNinv via kappa_t}), we have
    \begin{equation}\label{DNinv bound}
        ||(\DD^{\gp,M})^{-1}|| \leq \int_0^\infty dt\, e^{-(\re m^2)t} || \kappa_t||.
    \end{equation}
    For $t>0$, the operator $\kappa_t$ is smoothing (in particular, bounded) and its norm behaves as 
    $O(t^{-\frac12 \dim M})$
    as $t\ra \infty$.\footnote{This follows from bounding the norm by the $l^1$ norm, $||\kappa_t||\leq \mr{tr}\, \kappa_t=\int_{\gp}K^M(x,x|t) dx=O(t^{-\frac{\dim M}{2}})$, see 
    e.g. \cite[f-la (1.1)]{Grigoryan}.
    } 
    For $t\ra 0$, one has the asymptotics\footnote{This follows from the short-time asymptotics of the heat kernel. In the asymptotic regime $t\ra 0$, one can replace $M$ by a cylinder $\gp\times \RR$; then one uses the product property of heat kernels for Cartesian products. In Appendix \ref{appendix A.1} below we give a detailed argument for the norm estimate (\ref{||kappa_t||}) that we need.}
    \begin{equation*}
    \kappa_t \sim \underbrace{(4\pi t)^{-1/2}}_{K^\RR(0,0|t)} K^{\gp}(t)\sim (4\pi t)^{-1/2}\mr{Id}.
    \end{equation*}
    Therefore, for the norm of $\kappa_t$ one has 
    \begin{equation}\label{||kappa_t||}
    ||\kappa_t||=O(t^{-1/2})
    \end{equation} 
    as $t\ra 0.$  This discussion implies that the r.h.s. of (\ref{DNinv bound}) is a monotonously decreasing positive function of $\re m^2$, which implies that the operator $(\DD^{\gp,M})^{-1}$ is bounded uniformly in $m$, for $\re m^2>c$ for any $c>0$.

\end{itemize}
Thus, the l.h.s. of (\ref{DN' gluing}) is bounded uniformly in $m$ for $\re m^2>c$, which immediately implies that Assumption \ref{conj: bound} holds.

\subsection{Proof of the estimate (\ref{||kappa_t||}).}\label{appendix A.1}
    For completeness, we give another, more detailed, proof of the estimate (\ref{||kappa_t||}). Without loss of generality, we can assume that $\gp$ has a neighborhood with product metric in $M$ (one always can arrange this by shifting $\gp$ toward $\gamma$). So, we have $M=\mr{cyl}\cup_{\gp}\Mp$. Let us iterate the construction and represent $\Mp$ as $\Mp=\til{\mr{cyl}}\cup_{\til{\gp}} \til{\Mp}$, i.e., we chop off a cylinder $\til{\mr{cyl}}=\gamma\times [\epsilon,\til{\epsilon}]$, with $\til{\epsilon}-\epsilon>0$ (and with boundaries $\gp=\gamma\times\{\epsilon\}$ and $\til{\gp}=\gamma\times \til{\epsilon}$ -- copies of $\gamma$), off $\Mp$ and call the remainder $\til{\Mp}$, see Figure \ref{fig: App A1}. 
    \begin{figure}[h]
        \centering
        \includegraphics[scale=0.8]{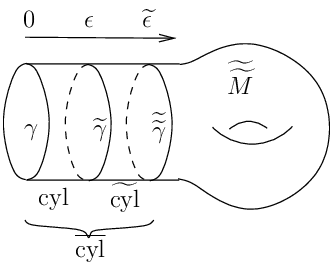}
       \caption{Removing of a collar from $M$, iterated twice.}
        \label{fig: App A1}
    \end{figure}
    
    Denote $\cylbar\colon= \cyl\cup_{\gp}\cylp$.
    By the formula (\ref{eq: gluing heat kernels 2}) applied to the decomposition $M=\cylbar\cup_{\gpp}\Mpp$, for the integral kernel of the operator $\kappa_t$ we have 
    \begin{multline}\label{A1 computation}
        \kappa_t(x,y)=K^M(x,y|t)=K^{\cylbar}(x,y|t)+\\
        +\int_{t_0+t_1+t_2=t, t_i\geq 0} dt_0 dt_1
        \int_{\gpp\times \gpp} dudv \dd^n_u K^{\cylbar}(x,u|t_0) K^M(u,v|t_1) \dd^n_v K^{\cylbar}(v,y|t_2)\\
        =K^\gamma(x,y|t) K^{[0,\ep]}(\epsilon,\epsilon|t)
        +\int_{t_0+t_1+t_2=t, t_i> 0} dt_0 dt_1 \\
        \int_{\gpp\times \gpp} dudv\  \left.\frac{d}{d\alpha}\right|_{\alpha=\ep} K^{[0,\ep]}(\epsilon,\alpha|t_0)  K^{\gamma}(x,u|t_0) K^M(u,v|t_1)    \left. \frac{d}{d\beta}\right|_{\beta=\ep} K^{[0,\ep]}(\beta,\epsilon) K^{\gamma}(v,y|t_2)
    \end{multline}
    for $x,y\in \gp$. In the second step above we used the factorization property of heat kernel on a Cartesian product. Going back from integral kernels to operators, (\ref{A1 computation}) reads
    \begin{multline}\label{A1 computation 2}
        \kappa_t= K^{[0,\ep]}(\epsilon,\epsilon|t) K^\gamma(t)  +\\+\int_{t_0+t_1+t_2=t,t_i> 0} dt_0 dt_1 
        \left.\frac{d}{d\alpha}\right|_{\alpha=\ep} K^{[0,\ep]}(\epsilon,\alpha|t_0)  K^{\gamma}(t_0) \kappa_{t_1}^{\gpp}   K^{\gamma}(t_2)  \left. \frac{d}{d\beta}\right|_{\beta=\ep} K^{[0,\ep]}(\beta,\epsilon), 
    \end{multline}
    where $\kappa_{t}^{\gpp}\colon C^\infty(\gpp)\ra C^\infty(\gpp)$ is the operator defined by the integral --  the restriction of the heat kernel on $M$ to pairs of points on $\gpp$; $K^\gamma(t)\colon C^\infty(\gamma)\ra C^\infty(\gamma)$ is the heat operator on $\gamma$. Taking the operator norm of both sides of (\ref{A1 computation 2}), we have
    \begin{multline}\label{A1 computation 3}
         ||\kappa_t|| \leq  K^{[0,\ep]}(\epsilon,\epsilon|t) ||K^\gamma(t)||  +\int_{t_0+t_1+t_2=t,t_i> 0} dt_0 dt_1\\ 
        \left.\frac{d}{d\alpha}\right|_{\alpha=\ep} K^{[0,\ep]}(\epsilon,\alpha|t_0) \cdot ||K^{\gamma}(t_0)||\cdot ||\kappa_{t_1}^{\gpp}||\cdot   ||K^{\gamma}(t_2)||\cdot  \left. \frac{d}{d\beta}\right|_{\beta=\ep} K^{[0,\ep]}(\beta,\epsilon).
    \end{multline}
    Next, we have the following:
    \begin{enumerate}[(a)]
        \item  $||K^\gamma(t)||=||e^{-t\Delta^\gamma}||=1$ for any $t>0$ (since zero is the lowest eigenvalue of $\Delta^\gamma$).
        \item 
            Bounding the operator norm of $\kappa^{\gpp}_{t}$ by the $l^p$ norm, with any $p\geq 1$, one has
        \begin{multline}\label{A1 computation 5}
            ||  \kappa^{\gpp}_t || \leq \left(\mr{tr}(\kappa^{\gpp}_t)^p\right)^{1/p}=\\
            =
            \left(\int_{\gpp^{\times p}} dx_1\cdots dx_p K^M(x_1,x_p|t)K^M(x_p,x_{p-1}|t)\cdots K^M(x_2,x_1|t) \right)^{1/p}
            \\
            = \left(\int_{\gpp}dx_1 \int_{{\gpp}^{p-1}}  dx_2\cdots dx_p K^M(x_1,x_p|t)K^M(x_p,x_{p-1}|t)\cdots K^M(x_2,x_1|t)\right)^{1/p}\\
            = O((t^{-\frac{\dim M}{2}p}t^{\frac{\dim M-1}{2}(p-1)})^{1/p})=O(t^{-\frac12-\frac{\dim M-1}{2p}}),
        \end{multline}
        where we in the last step use the short-time asymptotics of the heat kernel, see e.g. \cite[Theorem 2.30]{BGV}.\footnote{
In the asymptotic regime $t\ra 0$, the relevant configurations of points for the integral in (\ref{A1 computation 5}) are when $x_1,\ldots,x_p$ are close together (with pairwise distances $O(\sqrt{t})$). So, the asymptotics is given by integrating, say, $x_1$ over $\gpp$, and for the integration over the remaining points, one can asymptotically replace the heat kernel on $M$ by the one on flat space $\RR^{\dim M}$. In the r.h.s. of (\ref{A1 computation 5}), the $O(t^{\frac{\dim M-1}{2}(p-1)})$ is  the volume of the contributing region in the integral over $x_2,\ldots,x_p$ for fixed $x_1$; $O(t^{-\frac{\dim M}{2}p})$ is the maximum of the integrand.
        } Hence, for any $\delta>0$ one has
        \begin{equation}\label{||kappa|| L^p bound}
            ||\kappa^{\gpp}_t ||=O(t^{-1/2-\delta}),
        \end{equation}
        as follows from selecting large enough $p$ in the estimate above.
        \item From the explicit formula for the heat kernel on an interval (\ref{K interval}), we have the bounds 
        \begin{equation*}
            K^{[0,\ep]}(\epsilon,\epsilon|t)<(4\pi t)^{-1/2}+e^{-c_1/t},\qquad
           \left| \left.\frac{d}{d\alpha}\right|_{\alpha=\ep} K^{[0,\ep]}(\epsilon,\alpha|t)\right|< e^{-c_2/t}
        \end{equation*}
        for $t$ sufficiently small, with some constants $c_1,c_2>0$.
    \end{enumerate}
    Therefore, (\ref{A1 computation 3}) implies, for $t$ sufficiently small,
    \begin{multline}\label{A1 computation 4}
        ||\kappa_t|| \leq (4\pi t)^{-1/2}+
        \underbrace{
        e^{-c_1/t}+c_3\int_{t_0+t_1+t_2=t,t_i>0} e^{-c_2/t_0-c_2/t_2} 
        t_1^{-\frac{1}{2}-\delta}
        }_{O(e^{-c_4/t})},
    \end{multline}
    with $c_{1,\ldots,4}$ some positive constants. Here we assumed that we fixed some $\delta<\frac12$ in (\ref{||kappa|| L^p bound}) (which is important for convergence of the integral in (\ref{A1 computation 4}) in the region $t_1\ra 0$).
    Hence we have the desired estimate $||\kappa_t||=O(t^{-1/2})$ for $t\ra 0$.

\section{Examples of Dirichlet-to-Neumann operators (for Assumption \ref{conj: bound})}
\label{appendix: DN examples for Conj}
\begin{example}[Disk]
Let us consider a disk of radius $R$. Then $\gamma = S^1$ and an eigenbasis of $\Delta^\gamma$ is given by $\eta_k(\theta) = e^{ik\theta}$ for $k \in \ZZ$ with eigenvalues $\omega_k = k^2/R^2$. The harmonic extension to the inside of the disk is given by $$\phi_k(r,\theta) = \frac{I_k(mr)}{I_k(mR)}e^{ik\theta}$$ with $I_n(x)$ the modified Bessel function solving 
$$ x^2I_n''(x) + xI_n'(x) - (x^2 + n^2)I_n(x) = 0.$$ Therefore, the eigenvalues of the Dirichlet-to-Neumann operator are 
\begin{equation*}
    \lambda_k = m\frac{I'_k(mR)}{I_k(mR)}.
\end{equation*}
One can show\footnote{E.g. from the series expansion of $I_n(z)$ one has for $k \to \infty$ that $$\lambda_k = \frac{k}{R}\left(1 + \frac{m^2R^2}{2k^2} + O(k^{-3})\right).$$ On the other hand, we have $k \to \infty$ $$\mu_k = \sqrt{m^2 + \frac{k^2}{R^2}} = \frac{k}{R}\sqrt{1+\frac{(mR)^2}{k^2}} = \frac{k}{R}\left(1 + \frac{1}{2}\frac{m^2R^2}{k^2} - \frac{1}{8}\frac{(mR)^4}{k^4}+ O(k^{-6})\right).$$}
that for fixed $m$, as $k \to \infty$,
\begin{equation*}
    \lambda_k - \mu_k =  m\frac{I'_k(mR)}{I_k(mR)} - \sqrt{m^2+\frac{k^2}{R^2}} =  O(k^{-2})
\end{equation*}
and therefore the operator $\DN^{S^1_R,D_R}_{m^2} - \sqrt{\Delta_{S^1}+m^2}$ is bounded. To prove the bound on the norm we notice that for large $m$ and fixed $k$ the eigenvalues behave as\footnote{This follows from the large $z$ asymptotics $I_k(z) \sim \frac{e^z}{\sqrt{2\pi z}}\left(1 + O(z^{-1})\right)$ which imply +$I_n'(z)/I_n(z) \sim 1 - \frac{1}{2z} + O(z^{-2})$, and $\sqrt{m^2 + \frac{k^2}{R^2}} = m + O(m^{-1})$.}
\begin{equation*}
    \lambda_k - \mu_k = \frac{-1}{2R} + O(m^{-1})
\end{equation*}
and therefore we have 
\begin{equation*}
    \frac{1}{m}(\lambda_k - \mu_k) = \frac{-1}{2mR} + O(m^{-2}),
\end{equation*}
which implies Assumption \ref{conj: bound}.
\end{example}
\begin{example}[Spherical sector]
We let $\Sigma_{\varphi_0,R}$ be a spherical sector of angle $0 < \varphi_0 < \pi$ (for $\pi/2$ we have a hemisphere) on a sphere of radius $R$. The boundary is a circle of radius $R\sin \varphi_0$. By separation of variables one can show that the unique harmonic extension of $\eta_k = e^{ik\theta} \in
L^2(S^1_{R\sin\varphi_0})$ is the function
$$\phi_k(\varphi,\theta) = \frac{P^k_{\alpha}(\cos\varphi)}{P^k_{\alpha}(\cos\varphi_0)}\eta_k(\theta),$$ 
where $P^\mu_\nu(x)$ is the Legendre function and $\alpha$ either one of the roots 
of $\alpha^2 + \alpha + (mR)^2 = 0$. \footnote{In spherical coordinates, the Helmholtz equation $(\Delta_{S^2} + m^2)f(\varphi,\theta) = 0$ in the product ansatz $f = F(\varphi)e^{ik\theta}$ becomes $\varphi F''(\varphi) + \cot\varphi F'(\varphi) - \left(\frac{k^2}{\sin^2\varphi}+m^2R^2\right)F(\varphi) = 0$. The solution which is regular at 
$\varphi=0$
is the associated Legendre function of the first kind $F(\phi)= P^k_{\alpha}(\cos\varphi)$ with $\alpha^2 + \alpha  = - (mR)^2.$   } 
In particular, the Dirichlet-to-Neumann operator is diagonal in that basis and the eigenvalues are
\begin{equation*}
    \lambda_k = R^{-1} \left.\frac{d}{d\varphi}\right|_{\varphi=\varphi_0}\log P^k_{\alpha}(\cos\varphi). 
\end{equation*}
Then one can show $\lambda_k - \mu_k = O(k^{-2})$ as $k \to \infty$ for fixed $m$ (see \cite[Eq. (A.20)]{KMW}), which shows that $(\DN^{S^1_R,\Sigma_{\varphi_0,R}}_{m^2})'$ is a bounded operator. 

 For $m^2$ real, assumption in this example holds by Remark \ref{rem: Conj when DN and A commute} (since the spherical sector is a surface of revolution).

\end{example}
\section{Path integral interpretation of the gluing formula}\label{app: path integral}

    Recall from Remark \ref{rem: path integral} that an alternative way to write the gluing formulae for the Green's function and the heat kernel is 
    \begin{equation}
        G_{m^2}(x,y) = \delta_{ab}G^{(a)}_{m^2}(x,y) + \int_{\gamma \times \gamma} \partial^n_uG^{(a)}_{m^2}(x,u)G_{m^2}(u,v)\partial^n_vG^{(b)}_{m^2}(v,y) du dv  \label{eq: gluing greens functions alt}
    \end{equation} 
    and 
    \begin{multline}
         K(x,y|t) = \delta_{ab}K^{(a)}(x,y|t) +  \\                                          
    \int_{t_0+t_1+t_2 = t, t_i > 0}dt_0dt_1\, \int_{\gamma \times \gamma} dudv\, \partial^n_uK^{(a)}(x,u|t_0)K(u,v|t_1)\partial^n_vK^{(b)}(v,y|t_2).\label{eq: gluing heat kernels path int}
    \end{multline}
Heuristically, these formulae are evident from the 
  path integral (Feynman-Kac) formulae for  the heat kernel and the Green's function respectively.\footnote{
 These formulae also arise in the context of first quantization, see e.g. the introduction in \cite{CKMW}.
}
For instance, the 
 path integral 
formula for the heat kernel is \cite{Feynman-Hibbs}
\begin{equation*}
    K_{m^2}(x,y|t) = \int_{P^t_M(x,y)}e^{-S_{m^2}(p)}\mathcal{D}p
\end{equation*}
with 
$$ P^t_M(x,y) = \{ p\colon [0,t]\to M, p(0) = x, p(t) = y \}$$ 
and  
\begin{equation*}
    S_{m^2}(p) = \int_0^t \left(\frac{\dot{p}^2}{4} +m^2 - \frac{R(p(\tau))}{6} \right)d\tau  = m^2t + \int_0^t \left(\frac{\dot{p}^2}{4} - \frac{R(p(\tau))}{6}\right)d\tau,
\end{equation*}
where $R$ is the scalar curvature and $\dot{p}^2(\tau) = g_{p(\tau)}(\dot{p},\dot{p})$.  More precisely, we interpret $e^{-S_{m^2}(p)}\mathcal{D}p$ as $e^{-m^2t}$ times the Wiener measure induced by the heat kernel on $P^t_M(x,y)$, see for instance \cite{AD}, \cite{BS}. Consider now formula \eqref{eq: gluing heat kernels path int} with $a=b=1$, i.e., we have $x,y\in M_1$. We can decompose the set
$$P^t_M(x,y) = \{ p\colon [0,t]\to M, p(0) = x, p(t) = y \} \equiv P^t_{M,\gamma}(x,y) \sqcup P^t_{M_1\setminus \gamma}(x,y)$$
into paths that touch the interface $\gamma$ and those that do not. Integrating over the latter gives $K^{(1)}(x,y|t)$,\footnote{See for instance \cite[Corollary 3.11]{BS}.} while a path $p$ that does touch $\gamma$ can be written uniquely as a concatenation of three paths $p = p_0 * p_1 * p_2$: Denoting $u$ (resp. $v$) the first (resp. last) intersection point of $p$ with $\gamma$, we can cut $p$ into a path $p_0$ of length $t_0$ from $x$ to $u$, a path $p_1$ of length $t_1$ from $u$ to $v$ and a third path $p_2$ from $v$ to $y$ of length $t_2 = t - t_0 - t_1$. Note that $p_0$, $p_2$ have the property that they only touch $\gamma$ at the end (resp. start) 
- denote the set of such paths by $(P_{M,\gamma}^{t_0})'(x,u)$, resp. $(P_{M,\gamma}^{t_2})'(v,y)$. {
The map $p \to (p_0,p_1,p_2)$ defines a bijection 
$$ P^t_{M,\gamma}(x,y) = \bigsqcup_{\Delta^2_t} \bigsqcup_{u,v \in \gamma} (P^{t_0}_{M,\gamma})'(x,u)\times P^{t_1}_M(u,v) \times (P^{t_2}_{M,\gamma})'(v,y), $$
where $\Delta^2_t$ denotes the 2-simplex $\{(t_0,t_1,t_2)| t_i >0, \sum_i t_i =t\}.$ 
Let us now assume that the fictional measure $\mathcal{D}p$ factorizes as 
\begin{equation}
\mathcal{D}p = dt_1 dt_2 \cdot du\,dv\cdot (\mathcal{D}p_0)'\mathcal{D}p_1 (\mathcal{D}p_2)' ,\label{eq: measure factor}
\end{equation} 
for appropriately defined measures $(\mathcal{D}p_i)'$.
}
 Since $S_{m^2}(p_0*p_1*p_2) = S_{m^2}(p_0) + S_{m^2}(p_1) + S_{m^2}(p_2)$, we then obtain 
\begin{multline}
     K_{m^2}(x,y|t) = \int_{P^t_M(x,y)}e^{-S_{m^2}(p)}\mathcal{D}p 
     = \int_{P^t_{M\setminus \gamma}(x,y)}e^{-S_{m^2}(p)}\mathcal{D}p\, + \int_{P^t_{M,\gamma}(x,y)}e^{-S_{m^2}(p)}\mathcal{D}p \\
     =  \int_{P^t_{M\setminus \gamma}(x,y)}e^{-S_{m^2}(p)}\mathcal{D}p 
     + \int_{t_0+t_1+t_2 = t, t_i > 0}dt_0dt_1\, \int_{\gamma \times \gamma} dudv \\\int_{(P^{t_0}_{M,\gamma})'(x,u)}e^{-S_{m^2}(p_0)}(\mathcal{D}p_0)'\int_{P^{t_1}_{M,\gamma}(u,v)}e^{-S_{m^2}(p_1)}\mathcal{D}p_1\int_{(P^{t_2}_{M,\gamma})'(v,y)}e^{-S_{m^2}(p_2)}(\mathcal{D}p_2)' .\label{eq: path integral formula gluing}
\end{multline}
Integrating over all possible times $t_i$ and $u,v$, we obtain the second term of \eqref{eq: gluing heat kernels path int} {
\begin{equation}
    \int_{(P^{t_0}_{M,\gamma})'(x,u)}e^{-S_{m^2}(p_0)}(\mathcal{D}p_0)' = \partial^n_u K^{(1)}(x,u|t_0), \label{eq: heat kernel derivative} 
\end{equation} which is suggested by the discrete case, see Appendix \ref{sec: discrete} below, in particular Section \ref{rem: ext operator}, and the discussion in the introduction of \cite{CKMW}.
{
It would be interesting to understand equations \eqref{eq: measure factor} and \eqref{eq: heat kernel derivative} from the measure-theoretic viewpoint, as properties of the Wiener measure on a Riemannian manifold. }


\section{A gluing formula for the heat kernel on a graph}\label{sec: discrete}
In this appendix, we prove a gluing formula for the heat kernel on a graph, and show that in this case the path integral formulae of the previous appendix obtain a precise meaning as path sum formulae. We will provide two proofs of the gluing formula: one by computing inverses of block matrices and one by counting paths. 
\subsection{Generalities} Let $X = (V_X,E_X)$ be a graph, which for simplicity we assume to be simple (i.e. no short loops or multiple edges) and finite. We write $Y \subset X$ for a full subgraph $Y = (V_Y,E_Y)$. 
 In the following we will consider operators on the space $C^0(X) \cong \RR^{V_X}$ of 0-cochains on $X$, those are matrices $A$ with row and columns labeled by $V_X$, the entries of such a matrix $A$ are denoted $A(u,v)$. 

For example, $D^X$ is the diagonal 
matrix with $D^X(u,v) = \delta_{uv}\mr{val}(v)$, called the degree or valency matrix. Let $A^X$ be the adjacency matrix, with entries $A^X(u,v) = 1$ if $(u,v) \in E_X$ and $A^X(u,v) = 0$ otherwise. The graph Laplacian is 
\begin{equation*}
    \Delta^X := D^X - A^X \colon C^0(X) \to C^0(X)
\end{equation*}
 The heat kernel on $X$ is the operator $K^X(t):=e^{-t\Delta^X}$, in analogy with the continuum case we denote its entries by $K^X(u,v|t)$.  
\subsection{Gluing formula for Green's functions}
Now, let $Y \subset X$ be a full subgraph.\footnote{The situation in the main body of the text corresponds to the case where $X \setminus Y = X_1 \sqcup X_2$ is disconnected.}  
Let $G^X_{m^2}$ be the inverse matrix of $\Delta^X + m^2$, then we have block decompositions\footnote{Here, $\Delta^{X,Y}\colon C^0(X\setminus Y) \to C^0(X \setminus Y)$ denotes the Laplacian operator on $X$ ``relative to $Y$'', it is defined as $(D^X - A^X)|_{X\setminus Y}$. In particular, it is different from $\Delta^{X \setminus Y} = {D^{X \setminus Y} - A^{X\setminus Y}} \colon {C^0(X\setminus Y)} \to C^0(X \setminus Y)$. In the latter $D^{X\setminus Y}$ contains valencies of vertices in $X\setminus Y$ whereas $D^X|_{X\setminus Y}$ contains valencies of vertices in $X$ (including edges which start at vertices in $X\setminus Y$ and end at vertices of $Y$).}
\begin{equation}
  \Delta^X + m^2  = \left( 
    \begin{array}{c|c}
         \wh{A}=\Delta^{X,Y} + m^2& \wh{B} \\ \hline
         \wh{C}& \wh{D}
    \end{array}
    \right), \;\;
  G^X_{m^2} = \left( 
    \begin{array}{c|c}
         A& B \\ \hline
         C& D
    \end{array}
    \right),  \label{eq: block decomposition}
\end{equation}
with the blocks corresponding to vertices of $X \setminus Y$ and $Y$ respectively. 
We have a Dirichlet problem for a function $\varphi\in C^0(X)$ 
with boundary condition
$\eta \in C^0(Y)$:
\begin{equation}
    \begin{cases}
        (\Delta^{X}+m^2 )\varphi(x) &= 0, \text{ for all } x \in X\setminus Y, \\
        \varphi(y) &\equiv \eta(y), \text{ for all } y \in Y,
    \end{cases} \label{eq: dirichlet problem discrete}
\end{equation}
which has a unique solution $\phi_\eta$ for any $\eta \in C^0(Y)$. We denote $E^{Y,X}_{m^2}\colon C^0(Y) \to C^0(X)$ the map that associates to $\eta \in C^0(Y)$ the unique solution $\varphi_\eta$ of the Dirichlet problem \eqref{eq: dirichlet problem discrete}. It is easy to show, see \cite{CKMW}, that 
\begin{align*}
    E^{Y,X}_{m^2} \colon C^0(Y) &\to C^0(X) \cong C^0(X\setminus Y) \oplus C^0(Y) \\
    \eta &\mapsto (BD^{-1}\eta,\eta)  
\end{align*}
i.e. $E^{Y,X}_{m^2} = BD^{-1} \oplus 1_Y$. We define the relative Green's function $$G^{X,Y}_{m^2} := \wh{A}^{-1} = (\Delta^{X,Y} + m^2)^{-1} \colon C^0(X\setminus Y) \to C^0(X \setminus Y),$$ and extend it to $C^0(X)$ by zero on $Y$.  Finally, we define $$\DD_{m^2}^{Y,X} := D^{-1}$$ 
to be the total combinatorial Dirichlet-to-Neumann operator.\footnote{In the situation where $X = X_1 \cup_Y X_2$, one can show that $\DD^{Y,X}_{m^2} = \DD^{Y,X_1}_{m^2} + \DD^{Y,X_2}_{m^2} - (\Delta^Y + m^2)$.} Then we have a gluing formula for the Green's function 
\begin{equation}
    G^X_{m^2} = G^{X,Y}_{m^2} + E^{Y,X}_{m^2} (\DD^{Y,X}_{m^2})^{-1} (E^{Y,X}_{m^2})^T,
    \label{eq: gluing Green's function discrete}
\end{equation}
\begin{remark}[Discrete cutting formula] 

   The gluing formula \eqref{eq: gluing Green's function discrete} follows from the Schur complement formula 
   \begin{equation}
     G^{X,Y}_{m^2} = \wh{A}^{-1} = A - BD^{-1}C = \underbrace{A}_{G_X|_{X \setminus Y}} - \underbrace{(BD^{-1})}_{(E^{Y,X}_{m^2})}\underbrace{D}_{(\DD^{Y,X}_{m^2})^{-1}}\underbrace{(BD^{-1})^T}_{(E^{Y,X}_{m^2})^T}. \label{eq: Schur complement}
   \end{equation}
   Another way to read \eqref{eq: Schur complement} is that, on $X\setminus Y$,  
   \begin{equation*}
       G^{X,Y}_{m^2} - G^X_{m^2}|_{X \setminus Y} = - B D^{-1} C = - G^X_{m^2}\big|_{(X\setminus Y) \times Y}\DD^{Y,X}_{m^2}G^X_{m^2}\big|_{Y \times (X\setminus Y)} 
   \end{equation*}
   -- the discrete analog of the cutting formula \eqref{eq: cutting}, which shows that these two formulae are equivalent in the discrete case.
\end{remark}
\subsubsection{Extension operator as a discrete normal derivative} \label{rem: ext operator} 
    In the continuous 
    case, the (integral kernel of the) extension operator is given by the normal derivative of the Green's function with Dirichlet boundary conditions, i.e. if $M$ is a Riemannian manifold with boundary $\gamma$, then for a function $\eta \in C^\infty(\gamma)$, the function $\varphi_\eta \in C^\infty(M)$ given by 
    \begin{equation*}
        \varphi_\eta(x) = \int_{\gamma}\partial^n_uG_{m^2}(x,u)\eta(u)du
    \end{equation*}
    satisfies $(\Delta^M + m^2)\varphi_\eta = 0$ and $\varphi_\eta\big|_\gamma = \eta$. We now want to see how this is reflected in the discrete setting. First, notice that another way to write the extension operator  $(E^{Y,X}_{m^2})$ is as\footnote{This follows from $\wh{A}B + \wh{B}D = 0$, which is immediate from $G^X_{m^2} = (\Delta^X + m^2)^{-1}$ and the block decomposition \eqref{eq: block decomposition}.}
    \begin{equation}
        E^{Y,X}_{m^2}\bigg|_{(X\setminus Y) \times Y} = BD^{-1} = - \wh{A}^{-1}\wh{B}, \label{eq: ext op alt}
    \end{equation}
    which can be interpreted as a discrete analog of the normal derivative of the Green's function as follows. The entries of the matrix $\wh{B}$ are labeled by a pair of vertices $(x,u)$ with $x \in X\setminus Y$ and $u \in Y$, and $\wh{B}(x,u) = -1$ precisely if $(x,u) \in E_X$ and $\wh{B}(x,u) = 0$ otherwise (that is $\wh{B} = - A^X\big|_{X\setminus Y \times Y}$). Therefore, 
    \begin{equation*}
        E^{Y,X}_{m^2}(x,u)\bigg|_{(X\setminus Y) \times Y} = \sum_{x'\in X \setminus Y\colon (x',u) \in E_X} G^{X,Y}_{m^2}(x,x'). 
    \end{equation*}
    By definition, $G^{X,Y}_{m^2}(x_1,x_2) = 0$ if either $x_1 \in Y$ or $x_2 \in Y$, therefore, we can write $E^{Y,X}_{m^2}$ as 
    \begin{equation}
         E^{Y,X}_{m^2}(x,u) = \left(\sum_{x'\in X \setminus Y\colon (x',u) \in E_X} G^{X,Y}_{m^2}(x,x') - G^{X,Y}_{m^2}(x,u)\right). \label{eq: combinatorial normal derivative}
    \end{equation}
    Equation \eqref{eq: combinatorial normal derivative}  is a combinatorial analog of a normal derivative in the sense of finite differences. For instance, if we suppose that $X$ is a line graph and $Y$ is one of the endpoints, then \eqref{eq: combinatorial normal derivative} is just a single difference and with correct normalization will approximate the derivative of $G$ in the continuum limit.  More generally, this interpretation holds for sequences of pairs $(X,Y)$ ``converging'' to a domain with boundary, with each vertex of $Y$ having exactly one adjacent vertex in $X\setminus Y$.
\subsection{Gluing formulae for the heat kernel}
\subsubsection{Discrete gluing formula I}
Let us consider the boundary value problem for the heat equation on $X$. For a function $\varphi\in C^\infty(\RR_{\geq 0}, C^0(X))$ and $\eta \in C^0(Y)$, it reads:
\begin{equation}
    \begin{cases}
        (\partial_t + \Delta^{X})\varphi(t,x) = 0,  &  t>0,\; x\in X \setminus Y; \\
         \varphi(0,x) = 0, & x\in X\setminus Y ;\\
        \varphi(t,y) \equiv \eta(y), & t \geq 0,\; y \in Y .
    \end{cases} \label{eq: dirichlet problem discrete 2}
\end{equation}
For any given $\eta \in C^0(Y)$, there is a unique $\varphi_\eta$ solving the Dirichlet problem \eqref{eq: dirichlet problem discrete 2}, which defines a map $\EE^{Y,X}\colon C^0(Y) \to C^\infty(\RR_{\geq 0}, C^0(X))$.  
 One can check that 
\begin{equation*}
    (\EE^{Y,X}\eta)(t,x) = \int_0^t dt' (\varepsilon^{Y,X} \eta)(t',x) ,
\end{equation*} where 
$\varepsilon^{Y,X}(t)\colon C^0(Y) \to C^0(X)$ is given by 
\begin{equation}
    (\varepsilon^{Y,X}\eta)(t, x) = \sum_{y \in Y}\LL^{-1}[E^{Y,X}_{m^2}(x,y)](t)\eta(y). \label{eq: extension op heat}
\end{equation}
Now, we can state and prove the discrete analog of the first gluing formula for the heat kernel on a graph: 
\begin{proposition}\label{prop: gluing heat kernel discrete}
    We have  
    \begin{equation}
       K^X(t) = K^{X,Y}(t) + \int_{\substack{t_1+t_2+t_3 = t \\ t_i >0}}dt_1dt_2\,\varepsilon^{Y,X}(t_1)\LL^{-1}[(\DD^{Y,X}_{m^2})^{-1}](t_2)(\varepsilon^{Y,X}(t_3))^T.\label{eq: gluing heat kernel discrete}
    \end{equation}
\end{proposition}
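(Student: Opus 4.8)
\emph{The plan.} The cleanest route is to mirror the proof of Proposition~\ref{prop: gluing heat kernel 1}: apply the Laplace transform in $t$, observe that the resulting identity of matrix-valued holomorphic functions of $m^2$ is exactly the gluing formula for Green's functions \eqref{eq: gluing Green's function discrete}, and then invoke injectivity of the Laplace transform to conclude. All objects here are finite matrices whose entries are finite linear combinations of functions $t^j e^{-\omega t}$, with $\omega$ a nonnegative eigenvalue of $\Delta^X$ or of $\Delta^{X,Y}$, so all Laplace transforms below converge for $\re m^2$ large and all abscissas of absolute convergence are $\le 0$. In particular $\LL[K^X(t)](m^2) = (\Delta^X+m^2)^{-1} = G^X_{m^2}$ and $\LL[K^{X,Y}(t)](m^2) = (\Delta^{X,Y}+m^2)^{-1} = G^{X,Y}_{m^2}$ (both extended by $0$ on $Y$ where appropriate).

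\emph{Key steps.} First I would record the Laplace transforms of the three factors in the convolution on the right-hand side of \eqref{eq: gluing heat kernel discrete}. By the defining formula \eqref{eq: extension op heat}, $\LL[\varepsilon^{Y,X}(t)](m^2) = E^{Y,X}_{m^2}$, and transposing, $\LL[(\varepsilon^{Y,X}(t))^T](m^2) = (E^{Y,X}_{m^2})^T$; the middle factor is by construction $\LL^{-1}[(\DD^{Y,X}_{m^2})^{-1}]$, so its transform is $(\DD^{Y,X}_{m^2})^{-1} = D$. Applying $\LL$ to \eqref{eq: gluing heat kernel discrete} and using the iterated version of the product-to-convolution rule \eqref{eq: prod to conv} (legitimate since the three factors are locally integrable with overlapping regions of convergence), the right-hand side becomes
$$ G^{X,Y}_{m^2} + E^{Y,X}_{m^2}\,(\DD^{Y,X}_{m^2})^{-1}\,(E^{Y,X}_{m^2})^{T}, $$
which equals $G^X_{m^2}$ by \eqref{eq: gluing Green's function discrete}. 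Since the Laplace transforms of the two sides of \eqref{eq: gluing heat kernel discrete} agree on a half-plane and both sides are matrices of continuous (indeed smooth) functions on $[0,\infty)$, the identity \eqref{eq: gluing heat kernel discrete} follows from injectivity of the Laplace transform.

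\emph{The main obstacle, and an alternative.} There is essentially no analytic difficulty in the argument above: unlike in the continuum case (Proposition~\ref{prop: gluing heat kernel 1}) there is no exchange of a Laplace transform with a normal derivative, and $\varepsilon^{Y,X}(t)$ has no singularity at $t=0$, so the only thing to get right is the bookkeeping, namely that the three convolution factors really are the inverse Laplace transforms of $E^{Y,X}_{m^2}$, $(\DD^{Y,X}_{m^2})^{-1}$ and $(E^{Y,X}_{m^2})^T$, which is immediate from their definitions. The more substantive task is to give the second, self-contained proof by \emph{counting paths}: expand $K^X(t)=e^{-t\Delta^X}$ and the relative heat kernel $K^{X,Y}(t)=e^{-t\Delta^{X,Y}}$ in their path-sum (Dyson-series) form, and check that the correction term in \eqref{eq: gluing heat kernel discrete} is exactly the contribution of the walks on $X$ that visit $Y$, split at their first and last visit to $Y$ into an initial piece that stays in $X\setminus Y$ until hitting $Y$ (this produces $\varepsilon^{Y,X}(t_1)$, the discrete ``normal derivative'' of the Dirichlet heat kernel, cf.\ Section~\ref{rem: ext operator}), an arbitrary middle piece on $X$ restricted to $Y$, and a final piece symmetric to the first. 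This is the rigorous discrete counterpart of the heuristic decomposition of Brownian paths in Appendix~\ref{app: path integral}, and I expect the combinatorial identity matching the two path sums term by term — rather than the Laplace-transform reduction — to be where the real work lies.
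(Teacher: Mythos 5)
Your proof is correct and is essentially identical to the paper's: apply the inverse Laplace transform to the discrete gluing formula \eqref{eq: gluing Green's function discrete}, use that product goes to convolution, and identify the three factors with $\varepsilon^{Y,X}$, $\LL^{-1}[(\DD^{Y,X}_{m^2})^{-1}]$ and $(\varepsilon^{Y,X})^T$. One small imprecision: $\varepsilon^{Y,X}(t)$ is not singularity-free at $t=0$ — since $E^{Y,X}_{m^2}=BD^{-1}\oplus 1_Y$, its $Y$-block is $\LL^{-1}[1]=\delta(t)$ (cf.\ Remark~\ref{rem: gluing formula on interface}), so strictly speaking one is convolving against a measure rather than a function; but this is harmless and in fact exactly what makes the formula hold with equality on $Y$ as well.
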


\begin{proof}
    As in the continuous case, one can get the gluing formula for the heat kernel \eqref{eq: gluing heat kernel discrete} from the gluing formula for the Green's function \eqref{eq: gluing Green's function discrete} by applying the inverse Laplace transform and using the fact that multiplication gets mapped to convolution. The inverse Laplace transform of the Green's functions are the corresponding heat kernels and the inverse Laplace transform of the extension operator $E^{Y,X}_{m^2}$ is $\varepsilon^{Y,X}(t)$ \eqref{eq: extension op heat}. 
\end{proof}
\begin{remark}\label{rem: gluing formula on interface}
    Equation \eqref{eq: gluing heat kernel discrete} is valid on all of $Y$, if we define $K^{X,Y}(u,v|t) = 0$ if $u \in Y$ or $v \in Y$ and $\varepsilon^{Y,X}(x,y|t) = \delta(t)$ for $x \in Y$.
\end{remark}
\begin{example}
    Consider $X$ a line graph with 3 vertices.\footnote{
     A more common name for it is ``path graph.''  We use the term ``line graph'' instead, to avoid confusion with paths on a graph $X$.
    } Then 
    \begin{gather*}\Delta^X = \begin{pmatrix}
        1 & -1 & 0 \\ 
        -1 & 2 & -1 \\ 
        0 & -1 & 1
    \end{pmatrix},
    \\ K^X(t) = e^{-t\Delta^X} = \frac16\left(
\begin{array}{ccc}
 e^{-3 t}+3 e^{-t}+2 & 2-2 e^{-3 t} & e^{-3 t}-3 e^{-t}+2 \\
 2-2 e^{-3 t} & 4 e^{-3 t}+2 & 2-2 e^{-3 t} \\
 e^{-3 t}-3 e^{-t}+2 & 2-2 e^{-3 t} & e^{-3 t}+3 e^{-t}+2 \\
\end{array}
\right).
\end{gather*}
Let us glue $X = X_1 \cup_Y X_2$, with $X_i$ line graphs with 2 vertices and $Y$ a single vertex in $X_i$, see Figure \ref{fig: line graph gluing}. 
\begin{figure}[h]
    \centering
    \includegraphics[scale=0.8]{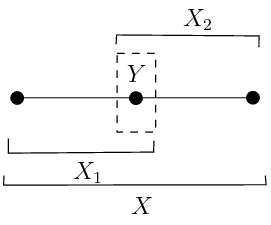}
    \caption{Line graph as a gluing of two subgraphs.}
    \label{fig: line graph gluing}
\end{figure}
Then 
$$\Delta_{X_1} = \begin{pmatrix}
    1 & -1 \\ 
    -1 & 1
\end{pmatrix},\quad  G_{X_1,m^2} = \frac{1}{m^2(2+m^2)}\left(
    \begin{array}{c|c}
        1+m^2 & 1 \\ \hline
        1 & 1+m^2
    \end{array}
    \right).
    $$
    The extension operator, and its inverse Laplace transform, are 
    \begin{align*} E^{Y,X_1} &= (BD^{-1}, 1) = \left( \frac{1}{1+m^2},1 \right), \\ \varepsilon^{Y,X_1}(t) &= \left(\LL^{-1}\left[\frac{1}{1+m^2}\right],1\right) = (e^{-t},\delta(t)).
    \end{align*}
    The total Dirichlet-to-Neumann operator is 
    \begin{multline*}
    \DD^{Y,X}_{m^2} = \DD^{Y,X_1}_{m^2}+\DD^{Y,X_2}_{m^2}-(\Delta^Y + m^2) =\\
    =\frac{m^2(2+m^2)}{1+m^2}+\frac{m^2(2+m^2)}{1+m^2}-m^2 = \frac{m^2(3+m^2)}{1+m^2} .
    \end{multline*}
    And the inverse Laplace transform of its inverse
    is 
    $$(\DD^{Y,X}_{m^2})^{-1} = \frac{1+m^2}{m^2(3+m^2)},\qquad \LL^{-1}[(\DD^{Y,X}_{m^2})^{-1}] = \frac{1}{3}(1 + 2 e^{-
    3t}).$$ 
    Then we can compute 
    \begin{align*}  
    K^X(1,3|t) &= \int_{t_1+t_2+t_3 = t, t_i>0}dt_1dt_2\, e^{-t_1}\frac13\left(1+ 2e^{-3t_2}\right) e^{-t_3} \\
    &= \frac{1}{3} - \frac{e^{-t}}{2}+\frac{e^{-3t}}{6}, \\
    K^X(3,3|t) &= K_{X_2,Y}(3,3|t) + \int_{t_1+t_2+t_3 = t, t_i>0}dt_1dt_2\, e^{-t_1}\frac13\left(1+ 2e^{-3t_2}\right) e^{-t_3}  \\ 
     &= e^{-t} + \frac{1}{3} - \frac{e^{-t}}{2}+\frac{e^{-3t}}{6} = \frac{1}{3} + \frac{e^{-t}}{2}+\frac{e^{-3t}}{6}, \\
    K^X(1,2|t) &= \int_{t_1+t_2 = t}dt_1e^{-t_1}\frac13(1+ 2e^{-3t_2}) =\frac{1}{3} - \frac{e^{-3t}}{3}. \\
    K^X(2,2|t) &= \frac{1}{3}(1 + 2e^{-3t}) = \LL^{-1}[(\DD^{Y,X}_{m^2})^{-1}]
    \end{align*}
    which agrees with the direct computation of the matrix exponential as above and Remark \ref{rem: gluing formula on interface}.
\end{example}
\begin{example}
    For $p<q \in \ZZ$, Let us denote $[p;q]$ the line graph with vertices labeled $p,p+1,\ldots,q$. The heat kernel of the line graph $X=[0;N]$ with Dirichlet boundary condition on the two endpoints $\dd X=\{0,N\}$ is:\footnote{
    One can obtain this immediately from the eigenfunctions and eigenvalues of the graph Laplacian on the line graph. Alternatively, one can compute the inverse Laplace transform of the Green's function on the line graph, \cite[Example 3.7]{CKMW}.
    }
    \begin{equation*}
        K^{[0;N],\{0,N\}}(x,y|t)=\frac{2}{N}\sum_{j=1}^{N-1}\sin \frac{\pi j x}{N} \sin \frac{\pi j y}{N}\, e^{-\left(4\sin^2\frac{\pi j}{2N}\right) t}.
    \end{equation*}
If we represent the graph $[0;N]$ as a gluing of two line graphs $X_1=[0;N_1]$ and $X_2=[N_1;N=N_1+N_2]$ over the 1-vertex subgraph $Y=\{N_1\}$, 
\begin{figure}[h]
    \centering
    \includegraphics[scale=0.8]{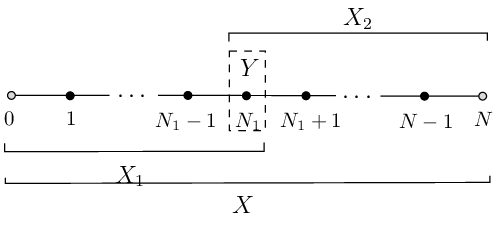}
    \caption{Line graph with Dirichlet condition on the end-vertices (drawn as hollow vertices), represented as a gluing of two subgraphs.}
    \label{fig: path N gluing}
\end{figure}
the gluing formula (\ref{eq: gluing heat kernel discrete}) takes the following form:
    \begin{multline*}
        K^{[0;N],\{0,N\}}(x,y;t)=
        K^{X_a,\dd X_a}(x,y;t) \delta_{ab}+\\+\int_{t_1+t_2+t_3=t,t_i>0} dt_1 dt_2 
        \underbrace{\left(\sum_{j=1}^{N_a-1}\frac{2}{N_a} \sin \frac{\pi j}{N_a} \sin \frac{\pi j\, \mr{dist}(x,Y)}{N_a}\, e^{-4 \sin^2\frac{\pi j}{2 N_a}\cdot t_1}\right)}_{\varepsilon^{Y,X_a}(t_1)} \\
        \underbrace{\left(\sum_{k=1}^{N-1}\frac{2}{N}\sin^2 \frac{\pi k N_1}{N} e^{-4\sin^2\frac{\pi k}{2N}\cdot t_2}\right)}_{\LL^{-1}[(\DD^{Y,X}_{m^2})^{-1}](t_2)} 
        \underbrace{\left(\sum_{l=1}^{N_b-1}\frac{2}{N_b} \sin \frac{\pi l}{N_b} \sin \frac{\pi l \,\mr{dist}(y,Y)}{N_b}\, e^{-4 \sin^2\frac{\pi l}{2 N_b}\cdot t_3}\right)}_{\varepsilon^{Y,X_b}(t_3)}.
    \end{multline*}
    Here we assume $x\in X_a, y\in X_b$ with $a,b\in \{1,2\}$.
\end{example}

\subsubsection{Discrete gluing formula II}
We can also prove a version of our ``second gluing formula,'' Theorem \ref{thm: gluing formula}. In the discrete case, we use another splitting of the Dirichlet-to-Neumann operator. Namely, using again the Schur complement formula we have 
$$ \DD^{Y,X}_{m^2} = D^{-1} = \wh{D} - \wh{C}\wh{A}^{-1}\wh{B}$$ 
and we have 
$$\wh{D} = (D^X)_{Y-Y} + m^2 - A^Y.$$
Here $(D^X)_{Y-Y}$ denotes the $Y-Y$ block of the degree matrix, i.e. it is a diagonal matrix whose entries are the valencies in $X$ of the vertices in $Y$.  
We therefore have a decomposition 
\begin{equation*}
    \DD^{Y,X}_{m^2} = \underbrace{(D^X)_{Y-Y} + m^2}_{=:\Lambda} - \underbrace{(A^Y + \wh{C}\wh{A}^{-1}\wh{B})}_{=:\DD'}.
\end{equation*} 
We then get a geometric progression for $\DD^{-1}_{Y,X}$:
\begin{equation*}
    (\DD^{Y,X}_{m^2})^{-1} = \Lambda^{-1}\sum_{k\geq 0}(\DD'\Lambda^{-1})^k = \Lambda^{-1} + \Lambda^{-1}\DD'\Lambda^{-1} + \ldots. 
\end{equation*}
The inverse Laplace transform of $\Lambda^{-1}$ is the diagonal matrix with entries $e^{-t\mr{val}_X(v)}$. We will give an explicit formula for the operator $\DD'$ and its inverse Laplace transform in terms of paths in Proposition \ref{prop: L inv D prime paths} ( Equation \eqref{eq: L inv D prime paths}) below. For now we remark that in the case where $X \setminus Y = (X_1 \setminus Y) \sqcup (X_2 \setminus Y)$ is a disconnected graph (i.e. there are no edges between $X_1 \setminus Y$ and $X_2 \setminus Y$), then the block decomposition \eqref{eq: block decomposition} becomes 
\begin{equation*}
     \Delta^X + m^2  = \left( 
    \begin{array}{c|c|c}
         \wh{A}_1=\Delta_{X_1,Y} + m^2& 0 & \wh{B}_1 \\ \hline
         0 & \wh{A}_2=\Delta_{X_2,Y} + m^2& \wh{B}_2 \\ \hline
         \wh{C}_1& \wh{C}_2 & \wh{D}
    \end{array}
    \right)
\end{equation*}
and therefore, we have 
\begin{equation*}
\DD' = A^Y + \wh{C}_1\wh{A}_1^{-1}\wh{B}_1 + \wh{C}_2\wh{A}_2^{-1}\wh{B}_2.
\end{equation*} 
This means that $\DD'$ can be expressed purely in terms of the two graphs $X_1,X_2$ and the interface $Y$. Similarly to Section \ref{rem: ext operator}, we can think of the second and the third term as combinatorial normal derivatives applied to both arguments of the Green's function with Dirichlet boundary conditions, compare this with the equation from the continuous case \eqref{eq: int kernel DN}. The inverse Laplace transform of $\DD'$ is 
\begin{equation*}
    \LL^{-1}[\DD'](t) = \delta(t)A^Y + \wh{C}_1K_{X_1,Y}(t)\wh{B}_1 + \wh{C}_2K_{X_2,Y}(t)\wh{B}_2.
\end{equation*}
(Notice that none of $A^Y$, $\wh{B_i}$, $\wh{C_i}$ depend on $m^2$.) We then have the following discrete version of Theorem \ref{thm: gluing formula}: 
\begin{proposition}\label{prop: gluing heat kernel discrete 2}
    Suppose that $X = X_1 \cup_Y X_2$. Let $x_1 \in X_{(a)}$, $x_2\in X_{(b)}$. Then 
    \begin{multline*}
        K^X(x_1,x_2|t) = \delta_{ab}K^{X_{(a)}}(x_1,x_2|t)+ \\ +\sum_{k\geq 0}\int_{\sum_{i=0}^{2k+2} t_i = t, t_i> 0}\prod_{i=0}^{2k+1}dt_i\sum_{u_0,\ldots u_k \in Y}
    \varepsilon^{Y,X_{(a)}}(x_1, u_0|t_0) \cdot \\ 
 \cdot \prod_{i=0}^{k-1}\left(  e^{-t_{2i+1}\mr{val}_X(u_{i})} \LL^{-1}[\DD'(u_i,u_{i+1})](t_{2i+2})\right)\cdot e^{-t_{2k+1}\mr{\val_X}(u_k)} \cdot \\
    \cdot(\varepsilon^{Y,X_{(b)}})^T(u_k, x_2|t_{2k+2}).
    \end{multline*}
\end{proposition}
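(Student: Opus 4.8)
The plan is to run, in the discrete setting, exactly the argument that proves Theorem~\ref{thm: gluing formula}: start from the first discrete gluing formula (Proposition~\ref{prop: gluing heat kernel discrete}), insert the geometric series for $(\DD^{Y,X}_{m^2})^{-1}$ associated to the splitting $\DD^{Y,X}_{m^2} = \Lambda - \DD'$ with $\Lambda = (D^X)_{Y-Y}+m^2$ and $\DD' = A^Y + \wh{C}_1\wh{A}_1^{-1}\wh{B}_1 + \wh{C}_2\wh{A}_2^{-1}\wh{B}_2$, and then apply the inverse Laplace transform, using that it turns products of matrices of Laplace transforms into time-convolutions of matrix products.

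First I would dispose of the two ``boundary'' ingredients. Since $X\setminus Y=(X_1\setminus Y)\sqcup(X_2\setminus Y)$ is disconnected, the block form \eqref{eq: block decomposition} shows that $\wh{A}=\Delta^{X,Y}+m^2$ is block diagonal with blocks $\Delta^{X_i,Y}+m^2$; hence $K^{X,Y}(t)=e^{-t\Delta^{X,Y}}$ splits as $K^{X_1,Y}(t)\oplus K^{X_2,Y}(t)$, which produces the term $\delta_{ab}K^{X_{(a)}}(x_1,x_2|t)$, and $E^{Y,X}_{m^2}=(-\wh{A}^{-1}\wh{B})\oplus 1_Y$ restricted to the rows belonging to $X_{(a)}$ equals $E^{Y,X_{(a)}}_{m^2}$, so the two outer factors of Proposition~\ref{prop: gluing heat kernel discrete} become $\varepsilon^{Y,X_{(a)}}$ and $(\varepsilon^{Y,X_{(b)}})^T$. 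Next I would justify the geometric series: $\Lambda$ is diagonal with $\|\Lambda^{-1}\|=O(|m^2|^{-1})$ as $m^2\to\infty$, while $\DD'$ stays bounded (indeed $\wh{A}_i^{-1}\to 0$, so $\DD'\to A^Y$), whence $\|\Lambda^{-1}\DD'\|<1$ for $\re(m^2)$ large and
\[
(\DD^{Y,X}_{m^2})^{-1}=\sum_{k\ge 0}\Lambda^{-1}(\DD'\Lambda^{-1})^k
\]
converges in operator norm on that half-plane. In contrast to the continuous case, no analogue of Assumption~\ref{conj: bound} is needed here: finite-dimensionality makes this automatic.

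Substituting this series into Proposition~\ref{prop: gluing heat kernel discrete} and applying $\LL^{-1}$ term by term, using $\LL^{-1}[\Lambda^{-1}](t)=\mathrm{diag}\big(e^{-t\,\mr{val}_X(v)}\big)$, $\LL^{-1}[\DD'](t)=\delta(t)A^Y+\wh{C}_1K^{X_1,Y}(t)\wh{B}_1+\wh{C}_2K^{X_2,Y}(t)\wh{B}_2$, and $\LL^{-1}[E^{Y,X}_{m^2}]=\varepsilon^{Y,X}$: the $k$-th summand is a product of $2k+3$ matrices of Laplace transforms ($\varepsilon$, then an alternating string of $k+1$ copies of $\Lambda^{-1}$ and $k$ copies of $\DD'$, then $\varepsilon^T$), so $\LL^{-1}$ of it is a convolution over the $(2k+2)$-simplex $\{\sum_{i=0}^{2k+2}t_i=t\}$; writing the matrix products as sums over the intermediate vertices $u_0,\dots,u_k\in Y$ (the diagonal $\Lambda^{-1}$-factors introduce no new vertices, only the scalars $e^{-t\,\mr{val}_X(u_i)}$ in the appropriate time-slots), the integrand becomes
\begin{multline*}
\varepsilon^{Y,X_{(a)}}(x_1,u_0|t_0)\cdot\prod_{i=0}^{k-1}\Big(\LL^{-1}[\DD'(u_i,u_{i+1})](t_{2i+1})\,e^{-t_{2i+2}\,\mr{val}_X(u_{i+1})}\Big)\cdot\\
e^{-t_{2k+1}\,\mr{val}_X(u_k)}\cdot(\varepsilon^{Y,X_{(b)}})^T(u_k,x_2|t_{2k+2}),
\end{multline*}
which, summed over $k$ and over $u_0,\dots,u_k$ and integrated, is the asserted formula. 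Finally I would check that the series in $t$ converges and represents $K^X$: a convolution estimate of the type $C^n t^{n-1}/(n-1)!$ (as in the interval example of Section~\ref{sec: examples}, using that the kernels on $Y\times Y$ are bounded, the $\delta$-part of $\LL^{-1}[\DD']$ contributing only shifts) gives absolute convergence, and since the identity already holds for the Laplace transforms it passes to the inverse transforms by uniqueness.

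The only genuine work is the bookkeeping in the previous paragraph --- matching the strictly alternating $\Lambda^{-1},\DD',\Lambda^{-1},\dots$ string coming from the geometric series to the index ranges of the displayed multiple integral (the discrete counterpart of the $\tfrac{1}{\sqrt{4\pi t}}K^\gamma$ / $\Kp$ alternation in Corollary~\ref{cor: gluing formula 2}) --- together with the routine justification of interchanging $\LL^{-1}$ with the infinite sum; both are elementary and mirror the continuous case. As a cross-check one can instead derive the formula combinatorially, expanding $K^X=e^{-t\Delta^X}$ as a sum over walks on $X$, grouping walks according to their successive visits to $Y$, and identifying $\LL^{-1}[\DD']$ --- via Proposition~\ref{prop: L inv D prime paths} --- with the generating function of walks that step off $Y$ and return to it without meeting $Y$ in between.
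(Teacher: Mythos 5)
Your approach is exactly the one the paper uses (the paper does not give a formal proof environment for Proposition~\ref{prop: gluing heat kernel discrete 2}; it sets up the splitting $\DD^{Y,X}_{m^2}=\Lambda-\DD'$, states the geometric series $(\DD^{Y,X}_{m^2})^{-1}=\Lambda^{-1}\sum_{k\ge 0}(\DD'\Lambda^{-1})^k$, gives $\LL^{-1}[\Lambda^{-1}]$ and $\LL^{-1}[\DD']$, and then calls the result ``the following discrete version of Theorem \ref{thm: gluing formula}'' --- which is precisely your substitution-into-Proposition~\ref{prop: gluing heat kernel discrete} argument; the path-sum derivation you mention at the end is the ``alternative proof'' the paper records). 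Your ingredients --- block diagonality of $\wh{A}$ giving $\delta_{ab}K^{X_{(a)}}$, identification of the restricted extension operator with $\varepsilon^{Y,X_{(a)}}$, automatic convergence of the Neumann series for $\re m^2$ large, convolution over the $(2k+2)$-simplex --- are all correct and sufficient.

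However, the ``elementary bookkeeping'' step you defer does \emph{not} actually produce the displayed formula you (and the paper) write. You correctly say the inner string coming from $\Lambda^{-1}(\DD'\Lambda^{-1})^k$ is an \emph{alternating} string of $k+1$ copies of $\Lambda^{-1}$ and $k$ copies of $\DD'$, \emph{beginning and ending} with $\Lambda^{-1}$. Since $\Lambda^{-1}$ is diagonal, this yields exactly one exponential factor $e^{-t\,\val_X(u_j)}$ per intermediate vertex $u_0,u_1,\dots,u_k$, with $\DD'(u_i,u_{i+1})$ in between. But the displayed integrand has $\DD'$ immediately following $\varepsilon$, and ends with \emph{two} consecutive valency exponentials, both at $u_k$ (at times $t_{2k}$ and $t_{2k+1}$), while $u_0$ receives no valency exponential at all. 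That is the string $\varepsilon\,(\DD'\Lambda^{-1})^k\Lambda^{-1}\,\varepsilon^T$, not $\varepsilon\,\Lambda^{-1}(\DD'\Lambda^{-1})^k\,\varepsilon^T$; the two coincide only if $\DD'$ and $\Lambda^{-1}$ commute (e.g.\ when $|Y|=1$, which is why the paper's line-graph examples do not expose it). Cross-checking against the paper's own path-sum identity~(\ref{W via cutting}) confirms the correct placement: the lone-vertex weights $W(y_j,t_{2j+1})=e^{-t_{2j+1}\val_X(y_j)}$ sit at the odd time slots at \emph{each} of $y_0,\dots,y_k$, and the $\DD'$ factors at the even slots. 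So the product in the statement should read
\[
\prod_{i=0}^{k-1}\Bigl(e^{-t_{2i+1}\,\val_X(u_i)}\,\LL^{-1}[\DD'(u_i,u_{i+1})](t_{2i+2})\Bigr)\cdot e^{-t_{2k+1}\,\val_X(u_k)}.
\]
This is evidently a typo in the paper's statement, which you transcribed rather than caught; since you explicitly flag this bookkeeping as ``the only genuine work'' and declare it routine, it is worth noting that the check as you left it is not consistent.
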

 \subsection{Path sum formulae} 
 In this section, we will give alternative proofs of Proposition \ref{prop: gluing heat kernel discrete} and Proposition \ref{prop: gluing heat kernel discrete 2} (the two gluing formulas for the discrete heat kernel) that rely on path-counting arguments. 
 
 Let $P_X(u,v)$ be the set of paths from $u$ to $v$.\footnote{A \emph{path} from $u$ to $v$ is a sequence of vertices $(u = v_{0},v_{1},\ldots, v_{k} = v)$, with $v_{i} \neq v_{i+1}$. The length of this path is $k$. } In \cite{CKMW}, we prove the following path sum formula for the heat kernel. 
 \begin{proposition}[\cite{CKMW}]\label{prop: path sum heat kernel}
     We have 
     
     \begin{equation*}
         K(u,v|t) = \sum_{\gamma \in P_X(u,v)}W(\gamma,t),
     \end{equation*}
     where for a path $\gamma = (u=v_{0},v_{1},\ldots, v_{k}=v)$ we denote 
     \begin{equation*}
         W(\gamma,t) = \int_{\scriptsize
\begin{array}{c}
       t_0,\ldots,t_{k}> 0\\
     t_0+\cdots+t_{k}=t
\end{array}
} dt_1\cdots dt_{k}\, e^{-\sum_{j=0}^{k} t_j \val(v_{i})}.
     \end{equation*}
 \end{proposition}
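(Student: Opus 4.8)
The plan is to expand the graph heat operator $e^{-t\Delta^X}=e^{-t(D^X-A^X)}$ as a Dyson (Duhamel) series, treating the diagonal degree matrix $-D^X$ as the free part and the adjacency matrix $A^X$ as the interaction, and then to read off the $(u,v)$ matrix element as a sum over walks. Since $X$ is finite, $A^X$ is bounded and the Duhamel identity $e^{t(-D^X+A^X)}=e^{-tD^X}+\int_0^t e^{-(t-s)D^X}A^X e^{s(-D^X+A^X)}\,ds$ iterates to the series
\begin{equation*}
    e^{-t\Delta^X}=\sum_{k\geq 0}\int_{0\leq s_1\leq\cdots\leq s_k\leq t} e^{-(t-s_k)D^X}A^X e^{-(s_k-s_{k-1})D^X}A^X\cdots A^X e^{-s_1 D^X}\,ds_1\cdots ds_k,
\end{equation*}
which converges absolutely in operator norm, the $k$-th term being bounded by $\frac{t^k}{k!}\|A^X\|^k$; so there is no analytic subtlety.

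Next I would take the $(u,v)$ matrix element term by term. Since $D^X$ is diagonal, $(e^{-\tau D^X})_{ab}=\delta_{ab}e^{-\tau\,\val(a)}$, and $(A^X)_{ab}=1$ precisely when $(a,b)\in E_X$ and $0$ otherwise (in particular $0$ when $a=b$, as $X$ is simple). Multiplying out the matrix product, the $k$-th term of $K^X(u,v|t)$ becomes a sum over intermediate vertices $v_1,\dots,v_{k-1}$ in which the product of adjacency entries $A^X_{v_1 v_0}A^X_{v_2 v_1}\cdots A^X_{v_k v_{k-1}}$ (with $v_0=u$, $v_k=v$) is nonzero---hence equal to $1$---exactly when $(v_0,v_1,\dots,v_k)$ is a path of length $k$ in the sense of the footnote. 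Thus the sum over intermediate vertices is precisely the sum over $\gamma\in P_X(u,v)$ of length $k$.

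It then remains to match the time integral with $W(\gamma,t)$. For a fixed path $\gamma=(v_0,\dots,v_k)$ the integrand is $e^{-(t-s_k)\val(v_k)}e^{-(s_k-s_{k-1})\val(v_{k-1})}\cdots e^{-s_1\val(v_0)}$, so changing variables from the ordered times $(s_1,\dots,s_k)$ to the $k+1$ durations $\sigma_0=s_1$, $\sigma_j=s_{j+1}-s_j$ for $1\leq j\leq k-1$, and $\sigma_k=t-s_k$ (a unimodular substitution, $\sigma_k$ being determined by the rest) turns the region $0\leq s_1\leq\cdots\leq s_k\leq t$ into the simplex $\{\sigma_j\geq 0,\ \sum_{j=0}^k\sigma_j=t\}$ and the weight into $e^{-\sum_{j=0}^k\sigma_j\,\val(v_j)}$. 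This integral is exactly $W(\gamma,t)$; summing over $k\geq 0$ and over paths of each length gives $K^X(u,v|t)=\sum_{\gamma\in P_X(u,v)}W(\gamma,t)$, the $k=0$ term contributing $\delta_{uv}e^{-t\val(u)}=\delta_{uv}W((u),t)$ as it should.

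There is no real obstacle; the content is entirely the bookkeeping above, namely (i) that a nonvanishing product of adjacency entries is the same datum as a path in their sense---which works precisely because $X$ has no self-loops---and (ii) the unimodular reparametrization of the time simplex, both routine. As an alternative in the Laplace-transform spirit of the rest of the paper, one could instead expand the resolvent $(\Delta^X+s)^{-1}=\sum_{k\geq 0}(D^X+s)^{-1}\big(A^X(D^X+s)^{-1}\big)^k$, whose $(u,v)$-entry is $\sum_{\gamma\in P_X(u,v)}\prod_{j=0}^{|\gamma|}(s+\val(v_j))^{-1}$, and then apply $\LL^{-1}$ with $s=m^2$, using that $\LL^{-1}$ sends $\prod_{j=0}^k(s+\val(v_j))^{-1}$ to the $(k{+}1)$-fold convolution $W(\gamma,\cdot)$ by \eqref{eq: prod to conv} and sends the Green's function $G^X_{m^2}$ to the heat kernel $K^X$.
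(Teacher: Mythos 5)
Your proof is correct, but there is nothing in this paper to compare it against: Proposition~\ref{prop: path sum heat kernel} is stated as imported from \cite{CKMW} and the present paper gives no proof of it. On its own merits, your argument is clean and complete. The Duhamel iteration of $e^{-t(D^X-A^X)}$ with $-D^X$ as free part is the standard way to turn a matrix exponential into a walk sum, the operator-norm bound $\tfrac{t^k}{k!}\lVert A^X\rVert^k$ (using $\lVert e^{-\tau D^X}\rVert\le 1$ and finiteness of $X$) settles convergence, the matching of the product of adjacency entries with paths in the sense of the footnote genuinely uses that $X$ is simple (so that $A^X$ has zero diagonal and consecutive vertices differ), and the unimodular change from ordered times $(s_1,\dots,s_k)$ to durations $(\sigma_0,\dots,\sigma_k)$ with $\sum_j\sigma_j=t$ is exactly the reparametrization that produces $W(\gamma,t)$, with the $k=0$ term correctly giving $\delta_{uv}e^{-t\,\val(u)}$. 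Your closing remark about instead expanding the resolvent $(\Delta^X+s)^{-1}=\sum_{k\ge 0}(D^X+s)^{-1}\bigl(A^X(D^X+s)^{-1}\bigr)^k$ and applying $\LL^{-1}$ termwise is in fact the route that sits most naturally with the Laplace-transform machinery used throughout this paper (and with the Green's-function path sums proved in \cite{CKMW}); both routes are equivalent here since $\LL^{-1}$ intertwines products of resolvent factors with the convolution/simplex integral defining $W(\gamma,\cdot)$.
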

For a path $\gamma = (v_{0},v_{1},\ldots, v_{k})$introduce the notations 
\begin{align*}
    \ogam &=  (v_{0},v_{1},\ldots, v_{k-1}),\\
    \ugam &= (v_{1},v_{1},\ldots, v_{k}),
\end{align*}
i.e., $\ogam$ is the $\gamma$ with the last vertex dropped and $\ugam$ is $\gamma$ with the first vertex dropped. The following lemma is elementary and we skip the proof: 
\begin{lemma}\label{lem: path weight}
    For any two composable paths $\gamma_1$ and $\gamma_2$ we have 
    \begin{equation*}
        W(\gamma_1 * \gamma_2) = \int_{\substack{t_1,t_2 >0 \\ t_1 + t_2 = t}} dt_1\,W(\ogam_1,t_1)W(\gamma_2,t_2) = \int_{\substack{t_1,t_2 >0\\ t_1 + t_2 = t}}dt_1 \, W(\gamma_1,t_1)W(\ugam_2,t_2).
    \end{equation*}
\end{lemma}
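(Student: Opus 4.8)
The plan is to recognize the weight $W(\gamma,t)$ of Proposition~\ref{prop: path sum heat kernel} as an iterated convolution of exponentials, and then to read the lemma off from associativity of convolution; the only genuine content is a small amount of bookkeeping about the vertex shared by the two composable paths.

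Concretely, for a vertex $v$ put $e_v(s):=e^{-s\,\val(v)}$ for $s>0$, and recall that the $k$-fold convolution of functions on $[0,\infty)$ is $(f_0*\cdots*f_k)(t)=\int_{t_0+\cdots+t_k=t,\;t_i>0} f_0(t_0)\cdots f_k(t_k)\,dt_1\cdots dt_k$. Unwinding the definition of $W$, one has for any path $\gamma=(v_0,v_1,\ldots,v_k)$ the identity
\[
    W(\gamma,\cdot)\;=\;e_{v_0}*e_{v_1}*\cdots*e_{v_k}.
\]
Now write the two composable paths as $\gamma_1=(u_0,u_1,\ldots,u_{k_1})$ and $\gamma_2=(u_{k_1},w_1,\ldots,w_{k_2})$ — composability is precisely the statement that the first vertex of $\gamma_2$ equals the last vertex $u_{k_1}$ of $\gamma_1$ — so that $\gamma_1*\gamma_2=(u_0,\ldots,u_{k_1},w_1,\ldots,w_{k_2})$, $\ogam_1=(u_0,\ldots,u_{k_1-1})$ and $\ugam_2=(w_1,\ldots,w_{k_2})$. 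By the displayed identity,
\[
    W(\gamma_1*\gamma_2,\cdot)\;=\;e_{u_0}*\cdots*e_{u_{k_1-1}}*e_{u_{k_1}}*e_{w_1}*\cdots*e_{w_{k_2}}.
\]

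I would then simply regroup this product. Grouping it as $\bigl(e_{u_0}*\cdots*e_{u_{k_1-1}}\bigr)*\bigl(e_{u_{k_1}}*e_{w_1}*\cdots*e_{w_{k_2}}\bigr)$ and recognizing the first factor as $W(\ogam_1,\cdot)$ and the second as $W(\gamma_2,\cdot)$ yields, by associativity, $W(\gamma_1*\gamma_2,t)=\int_{t_1+t_2=t,\;t_i>0}W(\ogam_1,t_1)W(\gamma_2,t_2)\,dt_1$; grouping it instead as $\bigl(e_{u_0}*\cdots*e_{u_{k_1}}\bigr)*\bigl(e_{w_1}*\cdots*e_{w_{k_2}}\bigr)=W(\gamma_1,\cdot)*W(\ugam_2,\cdot)$ yields the second equality. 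All functions in sight are nonnegative and bounded, integrated over a bounded simplex, so associativity of convolution (equivalently Tonelli) presents no analytic difficulty. The single point requiring attention — and the only "obstacle," such as it is — is the accounting for the shared vertex $u_{k_1}$: in $W(\gamma_1*\gamma_2,\cdot)$ it must contribute exactly one factor $e_{u_{k_1}}$, and dropping the last vertex of $\gamma_1$ (respectively the first vertex of $\gamma_2$) is exactly what prevents it from being double-counted when the convolution is split into two halves. Once the indexing is set up as above this is automatic, and the lemma follows.
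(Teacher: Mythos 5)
Your proof is correct. The paper itself states ``The following lemma is elementary and we skip the proof,'' so there is no argument in the text to compare against; the task was genuinely to supply one. Your observation that $W(\gamma,\cdot)=e_{v_0}*\cdots*e_{v_k}$ (with $e_v(s)=e^{-s\,\val(v)}$) turns the lemma into nothing but associativity of convolution, and your two regroupings match exactly the two displayed equalities, with the bar/underline notation doing precisely the job of avoiding a double-counted factor $e_{u_{k_1}}$ at the seam. This is surely the intended ``elementary'' argument. One small remark worth keeping in mind for the degenerate cases: if $\gamma_1$ has length $0$ then $\ogam_1$ is the empty path, and your identity then requires the convention $W(\emptyset,\cdot)=\delta$, i.e.\ the empty convolution is the identity element; this is consistent with the convention the authors adopt later (just before Proposition~\ref{prop: L inv D prime paths}, where they set $W(\ougam_i,t)=\delta(t)$ for an empty path). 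If you want the write-up to be airtight you might state that convention explicitly, but it is not a gap in the reasoning.
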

In particular, if $\gamma = \gamma_1 * \gamma_2 * \gamma_3$, then 
\begin{equation*}
    W(\gamma,t) = \int_{\substack{t_1,t_2,t_3 >0 \\ t_1 + t_2 + t_3= t}} dt_1dt_2\, W(\ogam_1,t_1)W(\gamma_2,t_2)W(\ugam_3,t_3) . 
\end{equation*}
Now, suppose that $Y\subset X$ is a full subgraph. Then we have an obvious decomposition $P_X(u,v) = P_{X\setminus Y}(u,v) \sqcup P_{X,Y}(u,v)$ of paths from $u$ to $v$ into does that contain no vertices in $Y$ and those that do.\footnote{In the case where $X \setminus Y = X_1 \sqcup X_2$ is disconnected, the set $P_{X \setminus Y}(u,v)$ is empty if $u \in X_1$ and $v \in X_2$ or vice versa.} Any $\gamma \in P_{X,Y}(u,v)$ can be decomposed as $\gamma = \gamma_1 * \gamma_2 * \gamma_3$, where $\gamma_1$ and $\gamma_3$ intersect $Y$ only in the ending (resp. starting) vertex, and $\gamma_2$ is an arbitrary path starting and ending on $Y$. This gives 
\begin{multline*}
    K(u,v|t) = \sum_{\gamma \in P_{X \setminus Y}(u,v)} W(\gamma,t) + \\ 
    + \sum_{y_1, y_2\in Y}\sum_{\gamma_1\in P'_{X,Y}(u,y_1)}\sum_{\gamma_2 \in P_X(y_1,y_2)}\sum_{\gamma_3\in P'_{X,Y}(y_2,v)}W(\gamma_1 * \gamma_2 * \gamma_3),
\end{multline*}
where we denote by $P'_{X,Y}(v,y)$ (resp. $P'_{X,Y}(y,v)$) the paths that end (resp. start) at a vertex $y \in Y$ and do not contain any other vertices in $Y$. Using Lemma \ref{lem: path weight}, we obtain: 
\begin{proposition}\label{prop: path sum gluing formula}
The heat kernel $K^X(t)$ on a graph admits the ``path sum gluing formula'' 
\begin{multline}
    K^X(u,v|t) = \sum_{\gamma \in P_{X \setminus Y}(u,v)} W(\gamma,t) + \\ 
    + \int_{\substack{t_1,t_2,t_3 >0 \\  t_1 + t_2 + t_3= t}} dt_1dt_2 \sum_{y_1, y_2\in Y}\sum_{\gamma_1\in P'_{X,Y}(u,y_1)}W(\ogam_1,t_1) \cdot \\ \cdot \sum_{\gamma_2 \in P_X(y_1,y_2)}W(\gamma_2,t_2)\sum_{\gamma_3\in P'_{X,Y}(y_2,v)}W(\ugam_3,t_3). \label{eq: path sum gluing formula}
\end{multline}
\end{proposition}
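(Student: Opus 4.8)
The plan is to derive \eqref{eq: path sum gluing formula} directly from the path sum representation of the heat kernel, Proposition~\ref{prop: path sum heat kernel}, by sorting the paths from $u$ to $v$ according to the way they visit the subgraph $Y$. Every $\gamma \in P_X(u,v)$ either avoids $Y$ altogether or meets it, which gives the disjoint decomposition $P_X(u,v) = P_{X\setminus Y}(u,v) \sqcup P_{X,Y}(u,v)$; applying $\sum_\gamma W(\gamma,t)$ to the first block produces the term $\sum_{\gamma\in P_{X\setminus Y}(u,v)} W(\gamma,t)$ of the right-hand side, so it remains to evaluate the contribution of $P_{X,Y}(u,v)$.

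The next step is the cutting bijection. For $\gamma\in P_{X,Y}(u,v)$ let $y_1$ be the first vertex of $\gamma$ lying in $Y$ and $y_2$ the last one; excising $\gamma$ at $y_1$ and $y_2$ writes $\gamma$ uniquely as a concatenation $\gamma=\gamma_1*\gamma_2*\gamma_3$ with $\gamma_1\in P'_{X,Y}(u,y_1)$, $\gamma_3\in P'_{X,Y}(y_2,v)$, and $\gamma_2\in P_X(y_1,y_2)$ unconstrained (in particular $y_1=y_2$ with $\gamma_2$ the one-vertex path is allowed). Conversely any triple $(\gamma_1,\gamma_2,\gamma_3)$ of this type concatenates to an element of $P_{X,Y}(u,v)$, and the two operations are mutually inverse; this bijection, together with the obvious disjointness over the choices $y_1,y_2\in Y$, is the combinatorial core of the argument.

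Finally I would push the weights through the bijection. Iterating Lemma~\ref{lem: path weight} for a triple concatenation gives $W(\gamma_1*\gamma_2*\gamma_3,t)=\int_{t_1+t_2+t_3=t,\,t_i>0}dt_1dt_2\,W(\ogam_1,t_1)W(\gamma_2,t_2)W(\ugam_3,t_3)$, where the bar and underline drop the shared interface vertices so that each of $y_1,y_2$ is weighted exactly once, via the factor $W(\gamma_2,t_2)$. Summing this over $\gamma\in P_{X,Y}(u,v)$, i.e.\ over $y_1,y_2\in Y$ and over the three independent path factors, rewrites the $P_{X,Y}$ contribution as
\[
\int_{\substack{t_1,t_2,t_3>0\\ t_1+t_2+t_3=t}}dt_1dt_2\sum_{y_1,y_2\in Y}\Big(\sum_{\gamma_1\in P'_{X,Y}(u,y_1)}W(\ogam_1,t_1)\Big)\Big(\sum_{\gamma_2\in P_X(y_1,y_2)}W(\gamma_2,t_2)\Big)\Big(\sum_{\gamma_3\in P'_{X,Y}(y_2,v)}W(\ugam_3,t_3)\Big),
\]
which is precisely the second term of \eqref{eq: path sum gluing formula}; adding back the $P_{X\setminus Y}$ contribution completes the proof.

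I expect the main obstacle to be the careful verification that the cutting map is a genuine bijection and that the associated time integrations line up: one must check that ``first/last visit to $Y$'' forces $\gamma_1$ and $\gamma_3$ into the primed path sets while leaving $\gamma_2$ entirely free, handle the degenerate cases ($y_1=y_2$, $\gamma_2$ trivial, or $\gamma_1$ resp.\ $\gamma_3$ of length zero when $u$ resp.\ $v$ already lies in $Y$), and confirm that the bar/underline truncations in the triple-concatenation weight formula account for every vertex of $\gamma$ with its $X$-valency exactly once.
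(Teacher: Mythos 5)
Your proposal is correct and follows the same route as the paper: split $P_X(u,v)$ into paths avoiding $Y$ and paths meeting $Y$, cut the latter at their first and last visit to $Y$ to obtain the bijection with triples $(\gamma_1,\gamma_2,\gamma_3)$, and then iterate Lemma~\ref{lem: path weight} to turn the weight of a concatenation into a convolution. The degenerate-case checks you flag (e.g.\ $y_1=y_2$ with $\gamma_2$ the one-vertex path) are real but routine, and your bar/underline bookkeeping of which interface vertex is weighted once matches the paper's conventions.
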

This is the discrete analog of the path integral formula \eqref{eq: path integral formula gluing}. \\ 
 The following proposition identifies the terms of the formula \eqref{eq: path sum gluing formula} with the integral kernels of the operators $K^{X,Y}(t)$, $\varepsilon^{Y,X}(t)$ and $\LL^{-1}[(\DD^{Y,X})^{-1}](t)$: 
\begin{proposition}\label{prop: path sum formulas heat kernels} We have
\begin{align*}
    K^{X,Y}(u,v|t) &= \sum_{\gamma \in P_{X \setminus Y}(u,v)} W(\gamma,t), \\
        \varepsilon^{Y,X}(u,y|t) &= \sum_{\gamma\in P'_{X,Y}(v_i,y)}W(\ogam,t), \\
        \LL^{-1}[(\DD^{Y,X})^{-1}](y_1,y_2|t) &= \sum_{\gamma_2 \in P_X(y_1,y_2)}W(\gamma,t).
    \end{align*}
\end{proposition}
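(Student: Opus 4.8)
The plan is to establish the three identities in the order listed; the first carries essentially all of the analytic content, and the other two then follow from it by linear algebra together with elementary bijections of path sets. For the first identity I would re-run the Dyson--Duhamel expansion behind Proposition \ref{prop: path sum heat kernel}, but now for the operator $\Delta^{X,Y} = (D^X - A^X)|_{X\setminus Y}$ on $C^0(X\setminus Y)$. Write $\Delta^{X,Y} = \mathsf D - \mathsf A$, with $\mathsf D$ the diagonal matrix of $X$-valencies $\val_X(v)$, $v\in X\setminus Y$, and $\mathsf A = A^X|_{X\setminus Y}$ the adjacency matrix of the full subgraph $X\setminus Y$. Iterating Duhamel's identity $e^{-t(\mathsf D-\mathsf A)} = e^{-t\mathsf D} + \int_0^t e^{-(t-s)\mathsf D}\mathsf A\, e^{-s(\mathsf D-\mathsf A)}\,ds$ yields the (finite, hence absolutely convergent) series
\begin{equation*}
    e^{-t\Delta^{X,Y}} = \sum_{k\geq 0}\int_{0<s_1<\cdots<s_k<t} e^{-(t-s_k)\mathsf D}\,\mathsf A\,e^{-(s_k-s_{k-1})\mathsf D}\cdots \mathsf A\,e^{-s_1\mathsf D}\,ds_1\cdots ds_k.
\end{equation*}
Since $e^{-r\mathsf D}$ is diagonal with entries $e^{-r\val_X(v)}$ and $\mathsf A(v,v') = 1$ exactly when $(v,v')\in E_X$ with $v,v'\in X\setminus Y$ (hence $\mathsf A$ vanishes on the diagonal, $X$ being simple), reading off the $(u,v)$ entry and substituting $t_0 = s_1$, $t_i = s_{i+1}-s_i$, $t_k = t-s_k$ produces exactly $\sum_{\gamma\in P_{X\setminus Y}(u,v)}W(\gamma,t)$, the admissible index sequences $u = v_0, v_1,\dots, v_k = v$ being precisely the paths in $X\setminus Y$. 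As $K^{X,Y}(t) = \LL^{-1}[G^{X,Y}_{m^2}](t) = \LL^{-1}[(\Delta^{X,Y}+m^2)^{-1}](t) = e^{-t\Delta^{X,Y}}$, this is the first identity; the one thing to keep straight is that the on-site weights are $\val_X$, not $\val_{X\setminus Y}$, as dictated by $\Delta^{X,Y} = (D^X - A^X)|_{X\setminus Y}$.

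For the second identity I would use the algebraic form of the extension operator: from $\wh A B + \wh B D = 0$ one gets $BD^{-1} = -\wh A^{-1}\wh B = -G^{X,Y}_{m^2}\wh B$, and $\wh B = -A^X|_{(X\setminus Y)\times Y}$ is independent of $m^2$, so applying $\LL^{-1}$ gives $\varepsilon^{Y,X}(t)|_{(X\setminus Y)\times Y} = K^{X,Y}(t)\,A^X|_{(X\setminus Y)\times Y}$, i.e. $\varepsilon^{Y,X}(u,y|t) = \sum_{x\in X\setminus Y,\,(x,y)\in E_X}K^{X,Y}(u,x|t)$ for $u\in X\setminus Y$. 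Substituting the first identity and then observing that appending the edge $(x,y)$ is a bijection from $\bigsqcup_{x\in X\setminus Y,\,(x,y)\in E_X}P_{X\setminus Y}(u,x)$ onto $P'_{X,Y}(u,y)$ — a path meeting $Y$ only in its last vertex $y$ becomes, upon deleting that vertex, a path of $X\setminus Y$ ending at a neighbour of $y$ — sending $\rho$ to the $\gamma$ with $\ogam = \rho$, one obtains $\varepsilon^{Y,X}(u,y|t) = \sum_{\gamma\in P'_{X,Y}(u,y)}W(\ogam,t)$. The degenerate case $u\in Y$ is covered separately by the convention of Remark \ref{rem: gluing formula on interface}.

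The third identity is immediate: by definition $(\DD^{Y,X}_{m^2})^{-1} = D$ is the $Y\times Y$ block of $G^X_{m^2} = (\Delta^X+m^2)^{-1}$, so $\LL^{-1}[(\DD^{Y,X}_{m^2})^{-1}](t) = K^X(t)|_{Y\times Y}$, and Proposition \ref{prop: path sum heat kernel} gives $\LL^{-1}[(\DD^{Y,X}_{m^2})^{-1}](y_1,y_2|t) = K^X(y_1,y_2|t) = \sum_{\gamma\in P_X(y_1,y_2)}W(\gamma,t)$.

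The main obstacle, modest as it is, is the first identity: it is not a corollary of Proposition \ref{prop: path sum heat kernel}, since the exponential of a compression of a matrix is not the compression of the exponential, so the Dyson expansion genuinely has to be redone for $\Delta^{X,Y}$ — though this is the same computation with the hopping operator restricted to $X\setminus Y$ and the on-site weights left alone. Everything else is bookkeeping with the block decomposition \eqref{eq: block decomposition} and the two path-set bijections.
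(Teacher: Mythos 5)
Your proof is correct, and it takes a slightly different route from the paper's. The paper disposes of all three identities in one sentence by invoking \cite{CKMW}: the path-sum expressions for the Green's function $G^{X,Y}_{m^2}=(\Delta^{X,Y}+m^2)^{-1}$, for the extension operator $E^{Y,X}_{m^2}$, and for $(\DD^{Y,X}_{m^2})^{-1}$ are all taken as proved there, and the proposition then follows by applying $\LL^{-1}$ term by term (exactly as Proposition~\ref{prop: gluing heat kernel discrete} was obtained from the Green's-function gluing formula). You instead make the argument self-contained: for the first identity you rerun the Duhamel/Dyson expansion directly on $e^{-t\Delta^{X,Y}}$ rather than Laplace-inverting the CKMW formula --- correctly noting that this genuinely has to be redone because exponentiating a compression is not compressing the exponential, and correctly keeping the on-site weights $\val_X$ rather than $\val_{X\setminus Y}$. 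For the second identity, instead of quoting the path sum for $E^{Y,X}_{m^2}$, you derive $\varepsilon^{Y,X}(t)|_{(X\setminus Y)\times Y}=K^{X,Y}(t)\,A^X|_{(X\setminus Y)\times Y}$ from the block-matrix relation $BD^{-1}=-\wh{A}^{-1}\wh{B}$ (the paper's own equation~\eqref{eq: ext op alt}), then use identity~1 and the edge-appending bijection $P_{X\setminus Y}(u,x)\times\{x\colon (x,y)\in E_X\}\leftrightarrow P'_{X,Y}(u,y)$. Your third identity is essentially what the paper does. What your route buys is independence from the external reference and, in identity~2, an explicit link between the combinatorial normal derivative and the path bijection, which makes the analogy to the continuum $\partial^n_u K^{(a)}(x,u|t)$ transparent; what it costs is a slightly longer proof than the paper's one-line deferral.

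One small bookkeeping point worth pinning down if you were to write this up fully: for the second identity the sum over $\gamma\in P'_{X,Y}(u,y)$ with $u\in X\setminus Y$ runs over paths of length $\geq 1$, so $\ogam$ is always nonempty and $W(\ogam,t)$ is well-defined; and the statement as printed contains a typo ($P'_{X,Y}(v_i,y)$ should read $P'_{X,Y}(u,y)$), which you silently and correctly repaired.
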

\begin{proof}
    These formulae can be proved, exactly like Proposition \ref{prop: path sum formulas heat kernels}, by applying the inverse Laplace transform to the path sum expressions for the Green's function for $\Delta^{X,Y} + m^2$ 
    and the corresponding extension operator $E^{Y,X}_{m^2}$ which were derived in \cite{CKMW}.
\end{proof}
Propositions \ref{prop: path sum gluing formula} and \ref{prop: path sum formulas heat kernels} together imply the Proposition \ref{prop: gluing heat kernel discrete}.  \\
Finally, we remark that Proposition \ref{prop: gluing heat kernel discrete 2} can be proved by splitting paths that intersect $Y$ in a different way, namely, splitting at every vertex where they intersect $Y$.  Suppose a path $\gamma$ from $u$ to $v$ intersects $Y$ at $k+1$ vertices $y_0,\ldots,y_k$. For convenience, set $u = y_{-1}$ and $v = y_{k+1}$ and denote $\gamma_i$ the segment of $\gamma$ between $y_{i-1}$ and $y_i$. 
 \begin{figure}[h]
    \begin{tikzpicture}
\foreach \x in {0,1,...,7} 
{
\draw (\x,-0.5) -- (\x,7);
}
\foreach \y in {0,1,...,6} 
{\draw (-.5,\y) -- (8,\y);
}
\draw[line width = 2pt] (4,-.5) -- (4,7);
\draw[red, fill = red] (1,4) circle (3pt);
\draw[red, fill = red] (4,5) circle (3pt);
\draw[red, fill = red] (4,4) circle (3pt);
\draw[red, fill = red] (4,3) circle (3pt);
\draw[red, fill = red] (4,2) circle (3pt);
\draw[red, fill = red] (6,2) circle (3pt);

\draw[dashed] (4,3.25) ellipse (0.55cm and 4cm);
\draw[path] (1,4) -- (2,4) -- (2,5) -- (5,5) -- (5,6) -- (7,6) -- (7,5) -- (6,5) -- (6,4) -- (5,4) -- (3,4) -- (3,3) -- (4,3) -- (4,2) -- (6,2);
\node at (0.8,4.2) {$u$}; 
\node at (3.75,5.2) {$y_0$};
\node at (3.75,4.2) {$y_1$};
\node at (3.75,3.2) {$y_2$};
\node at (3.75,2.2) {$y_3$};
\node at (6.2,1.7) {$v$};
\node at (1.8,5.2) {$\gamma_0$};
\node at (5.25,5.5) {$\gamma_1$};
\node at (2.75,3.5) {$\gamma_2$};
\node at (4.25,2.5) {$\gamma_3$};
\node at (5.5,1.75) {$\gamma_4$};

\node at (1,-1.3) {$X_1$};
\node at (4,-1.3) {$Y$};
\node at (6,-1.3) {$X_2$};
\end{tikzpicture} 

\caption{Decomposing a path $\gamma$ into components $\gamma_i$.  }
        \label{fig: gamma decomposition}
        \end{figure}
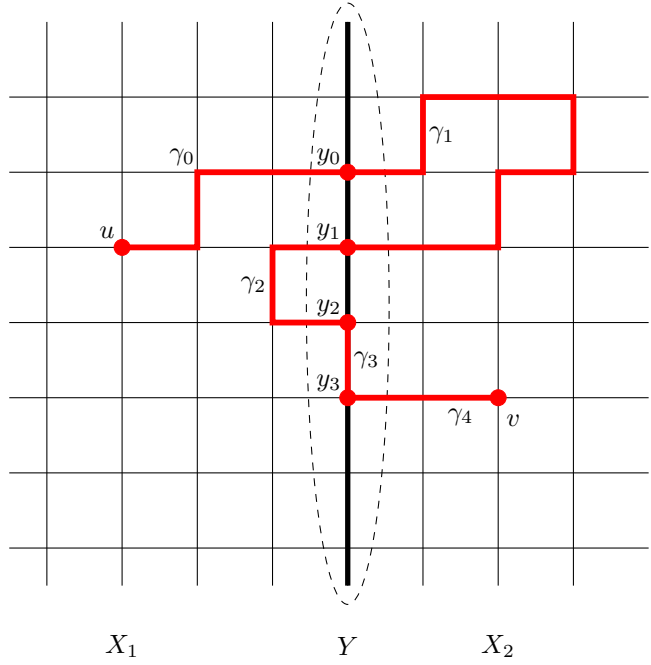
        Then, iterating Lemma \ref{lem: path weight}, we get
\begin{multline}\label{W via cutting}
    W(\gamma,t) =\int_{\sum_{i=0}^{2k+2} t_i = t, t_i> 0}\prod_{i=0}^{2k+1}dt_i \\ W(\ogam_0,t_0)W(y_0,t_1)W(\ougam_1,t_2)\cdots W(\ougam_k,t_{2k})W(y_k,t_{2k+1})W(\ugam_{k+1},t_{2k+2}) .
\end{multline}
Here we abuse notation by denoting a length 0 path by its only vertex, i.e., $W(y_i,t) = e^{-t\val_X(y_i)}$. Also, notice that the paths $\ougam_i$ can be empty, in the case where $\gamma_i = (y_{i-1},y_i)$ (i.e. the segment $\gamma_i$ is a single edge contained in $Y$, see Figure \ref{fig: gamma decomposition}). In this case we set $W(\ougam_i,t) = \delta(t)$. 

Now, Proposition \ref{prop: gluing heat kernel discrete 2} follows from Propositions \ref{prop: path sum gluing formula} and \ref{prop: path sum formulas heat kernels}, expansion (\ref{W via cutting}), and the following path sum formula for $\LL^{-1}[\DD'](t)$: 
\begin{proposition}\label{prop: L inv D prime paths}
    For two vertices $y_1,y_2 \in Y$ denote by $P''_{X,Y}(y_1,y_2)$ all paths from $y_1$ to $y_2$ in $X$ that do not intersect $Y$ except in $y_1$ and $y_2$. Then 
    \begin{equation}
        \LL^{-1}[\DD'](y_1,y_2|t) = \sum_{\gamma \in P''_{X,Y}(y_1,y_2)}W(\ougam,t).
        \label{eq: L inv D prime paths}
    \end{equation}
\end{proposition}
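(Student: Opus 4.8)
The plan is to unwind the definition of $\DD'$, take its inverse Laplace transform term by term, and then recognise each matrix entry as a sum over paths. Recall $\DD' = A^Y + \wh C\wh A^{-1}\wh B$, where $\wh A = \Delta^{X,Y}+m^2$ and $\wh B$, $\wh C$ are the $(X\setminus Y)$--$Y$ and $Y$--$(X\setminus Y)$ blocks of $\Delta^X+m^2$; among these only $\wh A$ depends on $m^2$, with $\wh A^{-1}=G^{X,Y}_{m^2}$. Since the inverse Laplace transform of an $m^2$-independent matrix is that matrix times $\delta(t)$, and $\LL^{-1}[G^{X,Y}_{m^2}](t)=K^{X,Y}(t)$ (Proposition \ref{prop: path sum formulas heat kernels}), I would first record
\[
    \LL^{-1}[\DD'](t) = \delta(t)\,A^Y + \wh C\,K^{X,Y}(t)\,\wh B ,
\]
which is the general form of the formula for $\LL^{-1}[\DD']$ stated in the text just before the Proposition (stated there for $X=X_1\cup_Y X_2$, but the derivation is identical for an arbitrary full subgraph $Y$). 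It then remains to identify the $(y_1,y_2)$ entry of each summand.

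For the first summand, $A^Y(y_1,y_2)$ is $1$ if $(y_1,y_2)\in E_Y$ (equivalently $E_X$, as $Y$ is full) and $0$ otherwise, so — since $X$ is simple — it counts exactly the length-$1$ paths in $P''_{X,Y}(y_1,y_2)$, each of which has $\ougam$ empty and hence $W(\ougam,t)=\delta(t)$. For the second summand I would use $\wh C(y,x) = -A^X(y,x)$ and $\wh B(x,y) = -A^X(x,y)$ for $x\in X\setminus Y$ (so the two signs cancel), together with the path-sum formula $K^{X,Y}(x,x'|t)=\sum_{\beta\in P_{X\setminus Y}(x,x')}W(\beta,t)$ of Proposition \ref{prop: path sum formulas heat kernels}, to write
\[
    \bigl(\wh C\,K^{X,Y}(t)\,\wh B\bigr)(y_1,y_2)
    = \sum_{\substack{x,x'\in X\setminus Y\\ (y_1,x),(x',y_2)\in E_X}}\ \sum_{\beta\in P_{X\setminus Y}(x,x')} W(\beta,t).
\]

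The last step is a bijection: to a pair of edges $(y_1,x),(x',y_2)\in E_X$ with $x,x'\in X\setminus Y$ and a path $\beta=(x=w_0,\dots,w_m=x')$ in $X\setminus Y$, associate $\gamma=(y_1,w_0,\dots,w_m,y_2)$, a path from $y_1$ to $y_2$ in $X$ of length $m+2\ge 2$ whose interior vertices all lie in $X\setminus Y$, i.e. an element of $P''_{X,Y}(y_1,y_2)$; conversely every length-$\ge 2$ element of $P''_{X,Y}(y_1,y_2)$ arises uniquely this way by deleting its first and last vertex. Because $\ougam$ deletes \emph{both} endpoints of $\gamma$, under this bijection $\ougam=\beta$, hence $W(\ougam,t)=W(\beta,t)$; summing, and adding back the length-$1$ contribution coming from the $A^Y$ term, yields $\sum_{\gamma\in P''_{X,Y}(y_1,y_2)}W(\ougam,t)$, as claimed. (In the degenerate case $y_1=y_2$ the trivial length-$0$ path is not counted in $P''_{X,Y}(y_1,y_2)$, consistently with the way $P''_{X,Y}$ enters the proof of Proposition \ref{prop: gluing heat kernel discrete 2}.)

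I do not expect any serious obstacle: the only things to watch are the sign bookkeeping in $\wh B,\wh C$ and the observation that $\ougam$ strips both endpoints — which is precisely what makes it equal to the interior excursion $\beta$. A more self-contained alternative would be to bypass the formula for $\LL^{-1}[\DD']$ and instead expand $\DD'=A^Y+\wh C\wh A^{-1}\wh B$ directly using the path-sum expression for $G^{X,Y}_{m^2}=\wh A^{-1}$ from \cite{CKMW}, Laplace-invert, and run the same bijection; but reusing the already-established inverse-Laplace formula keeps the argument shortest.
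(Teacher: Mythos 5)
Your proof is correct and follows essentially the same route as the paper (which merely cites the path-sum formula for $\DD'$ from \cite{CKMW} and then Laplace-inverts), spelling out the details: Laplace-invert $\DD' = A^Y + \wh C\wh A^{-1}\wh B$ term by term using $\LL^{-1}[G^{X,Y}_{m^2}]=K^{X,Y}$, then identify matrix entries with paths in $P''_{X,Y}(y_1,y_2)$ via the bijection that strips the two endpoints of $\gamma$. The sign bookkeeping in $\wh B,\wh C$ and the observation that the $A^Y$ term accounts exactly for the length-one paths (for which $\ougam$ is empty and $W(\ougam,t)=\delta(t)$) are both handled correctly.
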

\begin{proof}
    This follows from the path sum formula for $\DD'$ that was proved in \cite{CKMW} after applying the inverse Laplace transform. 
\end{proof}

\newpage

\end{document}